\pdfoutput=1
\documentclass[11pt,a4paper]{article}
\usepackage[utf8]{inputenc}
\usepackage[T1]{fontenc}
\usepackage{amsmath,amsthm}
\usepackage{XCharter}
\usepackage[xcharter,vvarbb]{newtxmath}
\usepackage[margin=2.2cm]{geometry}
\usepackage{microtype}
\usepackage{xcolor}
\usepackage{enumitem}
\usepackage{booktabs}
\usepackage{tikz}
\usetikzlibrary{arrows.meta,positioning,calc,decorations.pathreplacing,shapes.misc}
\usepackage{titlesec}
\usepackage{needspace}
\usepackage{tcolorbox}
\tcbuselibrary{skins,breakable}
\usepackage{caption}
\usepackage[hidelinks]{hyperref}
\usepackage[nameinlink,noabbrev,capitalise]{cleveref}
\hypersetup{pdftitle={When Pebbles Are Not Atomic: A Polynomial Separation Between Pure and Entangled Pebbling},pdfauthor={Piotr Marszalik}}

\definecolor{ask}{RGB}{31,94,140}
\definecolor{buy}{RGB}{214,125,32}

\newcommand{\Out}{\mathrm{Out}}
\newcommand{\In}{\mathrm{In}}
\newcommand{\TMC}{\mathrm{TMC}}
\newcommand{\Par}{\operatorname{par}}
\newcommand{\cl}{\operatorname{cl}}
\newcommand{\wt}{\operatorname{wt}}
\newcommand{\avail}{\operatorname{avail}}
\newcommand{\indeg}{\operatorname{indeg}}
\newcommand{\Hk}{\mathcal H_K}
\newcommand{\Gk}{\mathcal G_K}
\newcommand{\Qk}{\mathcal Q_K}
\newcommand{\Bk}{\mathcal B_K}
\newcommand{\Sx}{S_\oplus}
\newcommand{\Cans}{C^{\mathrm{ans}}}
\newcommand{\Cret}{C^{\mathrm{ret}}}
\newcommand{\CPC}{\mathrm{CPC}}
\newcommand{\E}{\mathbb E}
\newcommand{\certmark}[1]{\tikz[baseline=(certnumber.base)]{\node[draw,circle,inner sep=.7pt,line width=.4pt,font=\scriptsize] (certnumber) {#1};}}

\newtheorem{theorem}{Theorem}
\newtheorem{lemma}[theorem]{Lemma}
\newtheorem{proposition}[theorem]{Proposition}
\newtheorem{corollary}[theorem]{Corollary}
\theoremstyle{definition}
\newtheorem{definition}[theorem]{Definition}
\theoremstyle{remark}
\newtheorem{remark}[theorem]{Remark}

\titleformat{\section}{\large\bfseries}{\thesection.}{0.6em}{}
\titlespacing*{\section}{0pt}{1.6em}{0.4em}
\titleformat{\subsection}{\normalsize\bfseries}{\thesubsection}{0.6em}{}

\newtcolorbox{intuition}[1][Intuition.]{%
  enhanced, breakable, parbox=false, frame hidden, sharp corners,
  colback=ask!5, borderline west={1.6pt}{0pt}{ask!70},
  left=9pt, right=7pt, top=5pt, bottom=5pt,
  before skip=8pt, after skip=10pt, fontupper=\small,
  before upper={\textbf{\color{ask!85!black}#1}\enspace}}

\begin{document}

\begin{center}
  {\LARGE\bfseries When Pebbles Are Not Atomic}\par\vspace{4pt}
  {\large A Polynomial Separation Between Pure and Entangled Pebbling}\par\vspace{10pt}
  {\normalsize Piotr Marszalik}\par\vspace{2pt}
  {\small Institute of Science and Technology Austria}\par\vspace{2pt}
  {\small University of Warsaw}\par\vspace{2pt}
  {\small October 2026}
\end{center}
\vspace{0.4em}

\begin{abstract}
\noindent
Graph pebbling is the model in which the memory cost of memory-hard
functions and proofs of space is measured, and pebbling lower bounds are
the main way to argue that such constructions are secure. It has long
seemed that the model is faithful to reality. It is known that it is not
exact: an adversary who stores encoded information instead of whole values
can do better, but so far only by small factors: a factor of $1.5$ in the
work that introduced entangled pebbling (Alwen, Chen, Kamath, Kolmogorov,
Pietrzak and Tessaro) and, recently, a
factor of about $\log n/\log\log n$ (de~Rezende, Engstr\"om and Reyzin).
One could still hope that pebbling is right up to such small factors.

In \emph{pure} pebbling, memory consists of whole labels; \emph{entangled}
pebbling also allows encoded state, such as XORs of labels. We show that
the two models can differ polynomially. We study the setting in which
proofs of space and data-dependent memory-hard functions are used: a state
is prepared in advance, a random challenge has to be answered from it, and
memory is paid for as long as it is held. Our graphs are explicit and have
indegree two, and on them a pure adversary can do no better, up to a
constant factor, than storing every possible answer in advance. Still, an
adversary who keeps XORs instead of labels pays polynomially less. This
ends the hope that pure pebbling is a faithful model of memory up to small
factors.

The heart of the paper is a simple combinatorial game, together with a
special class of graphs found by Alon, Moitra and Sudakov.
\end{abstract}

% ======================================================================
\section{Introduction}
\label{sec:intro}

Saving memory has a price: a value that is not kept must be recomputed if
it is needed again. \emph{Memory-hard
functions}~\cite{AlwenSerbinenko2015,Scrypt2017} are built to make this
price high: evaluating them with little memory should cost a lot of extra
computation. In password hashing, for example, this is what makes guessing
many passwords in parallel on custom hardware expensive, since every guess
needs its own memory. What matters is not only how much memory is used,
but also for how long it stays occupied.

Graph pebbling turns this tension into a combinatorial problem. The
vertices of a directed acyclic graph (DAG) stand for intermediate values,
each computed by a hash function from a few earlier ones, and a pebble on a
vertex records that its value is currently stored. This is the standard
tool for reasoning about the memory cost of memory-hard functions and of
\emph{proofs of space}~\cite{DFKP2015}, in which a party stores data and
later answers challenges about it.

The model has earned its place. For the evaluation of a data-independent
memory-hard function from scratch, a lower bound on pebbling can be turned
into a lower bound against every algorithm, at least in an idealized world where the hash
function is a random oracle~\cite{AlwenSerbinenko2015}. So it is tempting to read a
pebbling bound simply as the truth about memory.

But every memory lower bound is a statement about a model of memory, and
pebbling makes a specific choice: stored information consists of
\emph{complete} vertex values. A real algorithm need not store information
this way. It may keep a function of several values and combine it later
with values it recomputes. The simplest example is XOR: instead of keeping
two labels, keep their XOR; once either label is available again, the other
one comes back for free. Entangled pebbling~\cite{AlwenEtAl2016} formalizes
such encoded states.

How much can such tricks gain? The question is as old as entangled pebbling
itself. Alwen et al.~\cite{AlwenEtAl2016} introduced the model, showed
that storing something other than labels can provably help, but only by a
factor of $1.5$, and proved their lower bound for scrypt in the entangled
model, conjecturing that it extends to arbitrary algorithms. Recently,
de~Rezende, Engstr\"om and Reyzin~\cite{DER2026} showed that XORs beat pure
pebbling for families of proofs of space and memory-hard functions, by an
$\Omega(\log n/\log\log n)$ factor for an indegree-two data-dependent
memory-hard function. Both gaps are small. One could still hope that pure
pebbling is right up to such small factors, and that pebbling bounds, which
are much easier to prove, can be trusted as they are.

\paragraph{Our result.} The gap can be polynomial. We give an explicit
family of DAGs of indegree two on which a pure player can do no better, up
to a constant factor, than caching all possible answers, while a player who
stores XORs of labels is polynomially cheaper. The setting is the one in
which proofs of space and data-dependent memory-hard functions are actually
used: data is prepared in advance, and then a fresh random challenge has to
be answered from it, with memory paid for in every round it is held.

\begin{theorem}[Main result, informal]
\label{thm:intro}
For infinitely many $N$ there is an explicit DAG on $N$ vertices, of
maximum indegree two, and a set of possible challenges, such that the
following holds for a uniformly random challenge answered from a state
prepared in advance (preparing the state is not charged). The best pure
strategy has expected cumulative cost $\Theta(N/\log N)$, which matches,
up to a constant factor, the cost of caching all possible answers. An
explicit XOR strategy answers the challenge and restores its prepared state
at expected cumulative cost $O(N^{1-\eta}(\log N)^{\eta})$, for a constant
$\eta>0$. The gap persists when many challenges are answered one after
another and the preparation is paid for.
\end{theorem}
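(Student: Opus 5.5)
The plan is to prove \cref{thm:intro} by building the DAG from a bipartite "answer/retrieve" structure on top of an Alon--Moitra--Sudakov graph, and to compare the two kinds of strategies separately. I would use an Alon--Moitra--Sudakov graph $G$ on $m$ vertices: nearly regular, and edge-decomposable into $m^{1+\Omega(1)}$ induced matchings each of size $m^{1-o(1)}$, or in the polynomial regime a constant fraction of $m$. Each vertex of $G$ becomes a source label, and each edge $\{u,v\}$ of a matching $M_i$ becomes a short indegree-two gadget that computes a value depending on both endpoint labels. Challenge $i$ asks for the outputs of all gadgets of $M_i$, merged by a binary tree so that indegree stays two. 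Every gadget is made long, a chain of length $\Theta(\log N)$ hashed from the endpoints. This is where the $\log N$ factors come from: recomputing an answer from scratch costs time times memory of order (tree size)$\cdot\log N$.

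\emph{Pure lower bound.} Fix a prepared pure state $S$ of size $s$ and a random challenge $i$. To answer $i$, the player must at some point hold pebbles that cover the endpoint labels of the edges of $M_i$. The key step is a combinatorial game lemma: the matchings are induced, so a pebble on a label or gadget vertex "serves" only a bounded number of challenges at full value. A counting argument over the $m^{1+\Omega(1)}$ challenges then shows that, for all but a small fraction of $i$, either $S$ already contains almost the whole answer for $i$, or the player must pebble $\Omega(|M_i|)$ gadgets afresh. Pebbling those gadgets keeps $\Omega(|M_i|)$ pebbles alive for $\Omega(\log N)$ steps while the merge tree is built. Averaging the prepared cost $s$ per round against the cost of recomputation gives an expected cumulative cost of $\Omega(N/\log N)$. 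Caching every answer achieves the same bound up to constants, which proves tightness. I expect this lemma to be the main obstacle. Pebbles are reusable and can be spread in time, so the argument needs a careful charging scheme that assigns each unit of cumulative memory to at most $O(1)$ challenge-relevant gadget edges. Inducedness is exactly what prevents one stored label from doing double duty across many matchings.

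\emph{XOR upper bound.} The entangled player stores, for every vertex $v$ of $G$, the XOR of $v$'s label with a few of its neighbors' labels, chosen via the matching structure. On challenge $i$, the player recomputes a small seed set of labels, of size $N^{1-\eta}(\log N)^\eta$. Since $M_i$ is a matching, each recovered label unlocks its partner through the stored XOR. This cascades along the matching and recovers all endpoint labels. The player then computes the answers gadget by gadget, discarding as it goes, and finally reinstates the same XORs. Choosing the parameters of $G$ balances the seed cost against the state size and yields the exponent $\eta$.

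For the multi-round statement, I would sum over rounds by linearity of expectation. The XOR strategy restores its state, so its per-round cost repeats. On the pure side, the one-shot lower bound applies in each round to whatever state is present at its start, because challenges are independent. The preparation cost is then amortized over the rounds.
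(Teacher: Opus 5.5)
There is a genuine gap. The DAG you describe supports neither bound, and no charging scheme can repair the pure side.

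\textbf{The pure lower bound fails on your DAG.} The challenge for a matching $M_i$ of size $\mu$ depends only on the $\mu$ gadgets of $M_i$. These in turn depend only on $2\mu$ source labels, which can be pebbled at any time. A pure player with an empty state can therefore pebble those sources, run the $\mu$ chains in parallel and then the merge tree, at cumulative cost $O(\mu\log N)$. Alternatively, he can cache the roots of all $r$ challenges, at cost $r$. Since $r\mu=|E|$ and $N=\Theta(|E|\log N)$, the pure cost is at most
\[
O(\min\{r,\mu\log N\})\le O\bigl(\sqrt{r\mu\log N}\bigr)=O(\sqrt N),
\]
far below $N/\log N$. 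Your own sketch shows the inconsistency: ``pebble $\Omega(|M_i|)$ gadgets for $\Omega(\log N)$ steps'' yields only $\Omega(\mu\log N)$.

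\textbf{The XOR side has the same defect.} XORs of source labels save nothing, because a source label can be recreated for the price of holding it. A matching has no paths, so nothing cascades. The seed size $N^{1-\eta}(\log N)^\eta$ is set to the target rather than derived.

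What is missing is a structure in which a single label is expensive but many labels are cheap together. That is the only thing an XOR can exploit.

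\textbf{The paper's structure.} It is the row/column mismatch of \cref{sec:game}.
\begin{itemize}
\item In $\Gk$ each edge label $x_{u,v}$ depends on a whole block $A_u$ of $K$ sources through a butterfly connector $R_u$. So one label of row $\Out(u)$ costs as much as the whole row.
\item A challenge $y_{v,s}$ needs the whole column $\In(v)$: one label from each of $d_v=\Theta(K)$ rows.
\item A second connector $C_v$ gives $K$ outputs per column, so there are $K^2$ challenges and caching them does not help.
\end{itemize}
Every challenge then has an ancestry of size $\Theta(N)$. \Cref{lem:ancestors} together with maximal ST-robustness shows that each cached vertex destroys at most one certified source in expectation. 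This gives $\E[\Cans]\ge\frac12Kd$ for every $G$, which is $\Omega(K^2)$ on the dense AMS graphs. Inducedness plays no role on the pure side, contrary to your plan.

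Inducedness is used on the XOR side. The state is one XOR $\bigoplus_{\{a,b\}\in M_j}(X_{a,b}\oplus X_{b,a})$ per induced matching. For column $v$ the player recomputes only the rows of $H_v=[K]\setminus N(v)$, that is, the non-neighbours of $v$ together with $v$ itself. Inducedness puts every other edge of the matching containing $\{u,v\}$ inside $H_v$. So each XOR has exactly one unknown, $X_{u,v}$. With the AMS bounds $m,\,K(K-d)=O(K^{2-2\eta})$, this costs $O(K^{2-2\eta}\log K)$. The logarithm comes from connector depth, not from padded chains.

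Note also that binary trees in place of connectors would collapse the lower bound, since one stored root summarizes a whole block (\cref{sec:why-not-trees}). Your amortization for repeated challenges matches the paper's and is fine once the one-shot bounds are available.
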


The precise statement is \cref{thm:main}, and \cref{thm:repeated} is its
version with repeated challenges and paid preparation. The constant $\eta$
is small (we get $\eta=0.029$), but it is a polynomial
gap, and it comes from a very restricted adversary. Its XORs are all of the
same kind, they are computed once and never changed, and every move it
makes is allowed in the entangled pebbling model of Alwen et
al.~\cite{AlwenEtAl2016}. Treating stored values as atomic is therefore not
a harmless simplification.

\paragraph{Where the gap comes from.} The whole separation lives in a
small combinatorial game, which we present first (\cref{sec:game}) and
around which the rest of the paper is built. Picture a table of labels. A
question asks for a whole column, but labels can only be bought a whole row
at a time. A player who stores single labels is in trouble: a stored label
helps only with the questions that happen to need it, and we show that
nothing he stores saves him anything on average. A player who stores XORs
of labels can do much better. Each stored XOR mixes labels from many rows.
When a column is asked for, the player buys just a few rows, and for every
label of the column there is a stored XOR whose other labels all lie in
these rows; XORing them out gives that label back. Different labels of the
column come from different XORs, and the same XOR serves many columns,
giving back a different label each time. To make
this work we need graphs that are nearly complete and yet split into large
induced matchings. Such graphs exist, by a beautiful construction of Alon,
Moitra and Sudakov~\cite{AMS2013}, and they are what makes the gap
polynomial.

The rest of the paper turns the game into an honest computation. First we
build a DAG in which the rules of the game come out of the graph itself
(\cref{sec:dag}); then we bring its indegree down to two without losing the
gap (\cref{sec:degree-two}); finally we answer many challenges in a row
(\cref{sec:repeated}).

\paragraph{Why one challenge from a prepared state.} A prover in a proof of
space stores data once and then answers random challenges from that state;
an evaluator of a data-dependent memory-hard function repeatedly needs the
label of a vertex selected by labels computed just before. In both settings
the recurring cost is not that of building a memory state but that of
answering a request from it. We therefore study one fresh random request
served from a state fixed before the request is known: preparing the state
is not charged, but every unit of memory held while answering is. We hold
the XOR player to a stricter standard: he must also restore his state, so
that he is ready for the next request.

\paragraph{Related work.} Entangled pebbling was introduced by Alwen, Chen,
Kamath, Kolmogorov, Pietrzak and Tessaro~\cite{AlwenEtAl2016}. Pebbling
lower bounds for evaluating a data-independent memory-hard function from
scratch transfer to arbitrary algorithms in the parallel random oracle
model~\cite{AlwenSerbinenko2015}; for data-dependent ones, the connection is
more delicate~\cite{BlockiHolman2022,BlockiHolman2025}. De~Rezende,
Engstr\"om and Reyzin~\cite{DER2026} give an $\Omega(\log n/\log\log n)$
separation for an indegree-two data-dependent memory-hard function; in the
repeated game on our construction the gap is $\widetilde\Omega(n^{\eta})$
(\cref{cor:repeated-n}). For comparison, every constant-indegree DAG on
$M$ vertices has cumulative pebbling cost
$O(M^2\log\log M/\log M)$~\cite{AlwenBlocki2016}, while depth-robust
constructions reach $\Omega(M^2/\log M)$~\cite{ABP2017}. We prove no lower
bound against general random-oracle algorithms; our result compares the
two pebbling models.

\paragraph{Organization.} \Cref{sec:game} is the memory game and its
separation. \Cref{sec:models} fixes the pebbling model and collects the
tools (forced ancestors, connectors). \Cref{sec:dag} plays the game on a
high-indegree DAG, \cref{sec:degree-two} on an indegree-two DAG, and
\cref{sec:repeated} repeats it. \Cref{sec:conclusion} concludes with open
questions. \Cref{app:ams} derives the parameters of the auxiliary graphs, and
\cref{app:dmhf} writes out the function behind the repeated game.

% ======================================================================
\section{The memory game}
\label{sec:game}

This section is self-contained: no DAGs, no hash functions, no pebbles.
Everything that follows is a way of making this game real.

\subsection{The rules}

Fix a graph $G=([K],E)$, simple and undirected, with degrees $d_u$ and
average degree
\[
  d \;=\; \frac1K\sum_{u} d_u \;=\; \frac{2|E|}{K}.
\]
To each edge $\{u,v\}\in E$ we assign two \emph{labels}, $x_{u,v}$ and
$x_{v,u}$ (as if we made $G$ directed by adding both directions). Labels
are independent uniformly random $w$-bit strings, so knowing some labels
tells you nothing about the others. One label takes one unit of memory.
For a vertex $v$ let
\[
  \Out(v) = \{x_{v,u} : u\in N(v)\},
  \qquad
  \In(v) = \{x_{u,v} : u\in N(v)\}
\]
be everything that points out of $v$ and everything that points into $v$.
It helps to picture the labels as the entries of the adjacency matrix:
$x_{u,v}$ sits at position $(u,v)$, $\Out(u)$ is row $u$, and $\In(v)$ is
column $v$ (entries of non-edges are empty; see \cref{fig:matrix}).

\begin{figure}[htbp]
\centering
\begin{tikzpicture}[scale=0.62, every node/.style={font=\small}]
  \fill[buy!22, rounded corners=5pt] (0.1,3.3) rectangle (5.9,3.9);
  \fill[ask!15, rounded corners=5pt] (3.9,0.1) rectangle (4.5,5.9);
  \draw[buy!85!black, rounded corners=5pt] (0.1,3.3) rectangle (5.9,3.9);
  \draw[ask!85!black, rounded corners=5pt] (3.9,0.1) rectangle (4.5,5.9);
  \draw[thick] (0,0) -- (-0.2,0) -- (-0.2,6) -- (0,6);
  \draw[thick] (6,0) -- (6.2,0) -- (6.2,6) -- (6,6);
  \fill (4.2,3.6) circle (2.4pt);
  \draw[->, thin, shorten >=3pt] (2.8,2.2) -- (4.2,3.6);
  \node[left] at (2.9,2.1) {$x_{u,v}$};
  \node[above] at (0.3,6) {$1$};
  \node[above] at (0.9,6) {$2$};
  \node[above] at (2.4,6) {$\dots$};
  \node[above] at (4.2,6) {$v$};
  \node[above] at (5.7,6) {$K$};
  \node[left] at (-0.2,5.7) {$1$};
  \node[left] at (-0.2,4.8) {$u'$};
  \node[left] at (-0.2,3.6) {$u$};
  \node[left] at (-0.2,2.1) {$\vdots$};
  \node[left] at (-0.2,0.3) {$K$};
  \node[right, text=buy!85!black] at (6.4,4.8)
    {$\circ\circ\circ$\enspace share units of row $u'$};
  \node[right, text=buy!85!black] at (6.4,3.6)
    {$\Out(u)$ -- a row: this you can BUY};
  \node[below, text=ask!85!black] at (4.2,-0.1)
    {$\In(v)$ -- a column: this you are ASKED for};
\end{tikzpicture}

\caption[The labels as a matrix]{The labels as the entries of the
adjacency matrix of $G$. A row $\Out(u)$ can be bought; a column $\In(v)$ is
what a query asks for.}
\label{fig:matrix}
\end{figure}

\begin{definition}[The memory game]
\label{def:game}
The game is played in two phases.
\begin{enumerate}[label=\arabic*), itemsep=3pt]
  \item \textbf{Preparation.} Knowing $G$ but not the query, the player
        stores some objects, each taking one unit of memory. A \emph{pure}
        player may store
        \begin{itemize}[itemsep=1pt]
          \item labels $x_{u,v}$;
          \item \emph{share units} for a vertex $u$: if he holds $l_u$ share
                units for $u$ ($0\le l_u\le K$), buying $\Out(u)$ costs him
                $K - l_u$ instead of $K$.
        \end{itemize}
        Let $I$ be the number of stored units.
  \item \textbf{Query.} A vertex $v$ is drawn uniformly from $[K]$ and
        revealed. The player must produce all of $\In(v)$. Besides reading
        what he stored (free), a pure player can obtain a label only by
        \emph{buying a row}: for price $K - l_u$ he obtains all of $\Out(u)$.
        Let $O(v)$ be the total price paid.
\end{enumerate}
The \emph{total memory cost} of a strategy is
\[
  \TMC \;=\; I + \E_v\bigl[O(v)\bigr].
\]
An \emph{entangled} player may, in addition, store XORs
$x_{u_1,v_1}\oplus\dots\oplus x_{u_q,v_q}$ of any labels, one unit each,
however many labels are inside. During the query he may XOR known labels
out of a stored XOR: if all labels of an XOR but one are known, the last
one is recovered for free.
\end{definition}

A share unit is a part of the price of a row, paid in advance: in
\cref{sec:dag} it will be one of the $K$ sources from which the row is
computed.

\begin{intuition}[The game in one sentence.]
You are asked for everything that points INTO $v$, but you can only buy
things in bundles of everything that points OUT of some vertex. All the fun
comes from this mismatch.
\end{intuition}

\begin{intuition}[Why the average?]
$I$ is paid once, before the game, and then the query comes at random. If we
looked at one fixed $v$, a player could prepare exactly for that $v$ (store
all of $\In(v)$ and nothing else) and look much better than he really is.
The average over $v$ is the honest price.
\end{intuition}

\subsection{The pure player pays \texorpdfstring{$2|E|$}{2|E|}}

\begin{theorem}[Pure lower bound in the game]
\label{thm:game-pure}
Every pure strategy has $\TMC \ge 2|E| = Kd$, and this is attained (for
example by storing nothing, or by storing all labels).
\end{theorem}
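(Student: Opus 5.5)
Fix an arbitrary pure strategy. It consists of a set $S$ of stored labels and share counts $l_u$, so that $I = |S| + \sum_u l_u$. The plan is to charge each unit of memory to the queries it can help, and to show that, averaged over the uniform query $v$, no unit saves more than it costs.

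First, fix a query $v$ and look at the rows the player must buy. Every label $x_{u,v}\in\In(v)$ that is not in $S$ can only be obtained by buying row $u$, because $x_{u,v}$ lies only in $\Out(u)$. Let $B(v)=\{u\in N(v): x_{u,v}\notin S\}$. Then
\[
  O(v)\;\ge\;\sum_{u\in B(v)} (K-l_u).
\]
Buying additional rows only increases the cost, so this lower bound is the right quantity to work with.

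Next, average over $v$ and exchange the order of summation. Each pair $(u,v)$ with $u\in N(v)$ and $x_{u,v}\notin S$ contributes $(K-l_u)/K$. Write $s_u$ for the number of labels of row $u$ that lie in $S$. Then row $u$ is charged for $d_u-s_u$ queries $v$, which gives
\[
  \E_v[O(v)] \;\ge\; \frac1K\sum_u (d_u-s_u)(K-l_u).
\]
It then suffices to show, for each $u$ separately, that
\[
  s_u + l_u + \frac{(d_u-s_u)(K-l_u)}{K} \;\ge\; d_u .
\]
Summing over $u$ yields $\TMC \ge \sum_u d_u = 2|E|$. The case $d_u=0$ is trivial, and we may also cap $l_u\le K$.

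The per-row inequality follows by expanding. Subtracting $d_u$ from the left side leaves
\[
  s_u + l_u - s_u - \frac{(d_u-s_u)\,l_u}{K} \;=\; l_u\Bigl(1-\frac{d_u-s_u}{K}\Bigr) \;\ge\; 0,
\]
which holds since $d_u-s_u\le d_u\le K-1<K$. So the bound holds even with room to spare. The main subtlety is the first step: the lower bound on $O(v)$ must account for the possibility that a label is obtained some other way. In the pure game the only sources of a label are storage and its unique row, so this step is immediate. I therefore expect no real obstacle.

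Attainment is straightforward. Storing nothing costs $\E_v\bigl[\sum_{u\in N(v)} K\bigr]=\frac1K\sum_v d_v K = 2|E|$. Storing all labels gives $I=2|E|$ and $O\equiv 0$.
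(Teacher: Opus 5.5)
Your proof is correct and follows the paper's argument. You count row purchases from the row side ($d_u-s_u$ of the $K$ queries), then rewrite the cost as $\sum_u d_u+\sum_u l_u\bigl(1-\frac{d_u-s_u}{K}\bigr)\ge\sum_u d_u$; the only cosmetic difference is that you lower-bound $O(v)$ directly instead of first assuming optimal answers and writing it as an equality.
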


\begin{proof}
Fix a pure strategy. For every vertex $u$ let $s_u$ be the number of stored
labels from row $u$ and $l_u$ the number of stored share units for $u$; so
$I = \sum_u (s_u + l_u)$.

Consider a query $v$. The player needs every $x_{u,v}$, $u\in N(v)$. If
$x_{u,v}$ is stored, nothing has to be done. If it is not, the only way to
get it is to buy row $u$, at price $K - l_u$. Rows are bought at most once,
and buying any other row does not help with $x_{u,v}$. So the optimal
answer to $v$ buys exactly the rows $u\in N(v)$ whose label $x_{u,v}$ is
not stored, and
\[
  O(v) \;=\; \sum_{\substack{u\in N(v)\\ x_{u,v}\text{ not stored}}} (K - l_u) .
\]
We may assume that the player answers every query optimally, since this
only lowers $\TMC$. Now count from the side of the rows: row $u$ is bought
exactly for the
queries $v\in N(u)$ whose label $x_{u,v}$ is not stored, i.e.\ for
$d_u - s_u$ of the $K$ possible queries. Hence
\begin{align*}
  \TMC
  &= \sum_{u} (s_u + l_u)
     \;+\; \frac1K\sum_{u} (d_u - s_u)(K - l_u) \\
  &= \sum_u d_u \;+\; \sum_u l_u\Bigl(1 - \frac{d_u - s_u}{K}\Bigr)
  \;\ge\; \sum_u d_u \;=\; 2|E| ,
\end{align*}
because $d_u - s_u \le d_u \le K-1 < K$. Storing nothing ($s_u=l_u=0$)
gives equality, and so does storing all labels ($s_u = d_u$, $l_u=0$).
\end{proof}

\begin{intuition}[What the formula says.]
A stored label at best exactly pays for itself: it costs $1$, and it saves
$K$, but only when its column is asked, which happens with probability
$1/K$. A share unit never pays for itself: it costs $1$ and saves $1$, but
only when its row is needed, which is less than always. That is the whole
reason the pure player is stuck: a query asks for a COLUMN, but he can only
buy ROWS, and from every row he buys he uses just ONE entry.
\end{intuition}

\subsection{The entangled player pays for the non-neighbours}

\paragraph{Warm-up: why XORs help at all.} Recall that $x_{u,v}$ sits at
$(u,v)$ and $x_{v,u}$ at the mirror position $(v,u)$ --- so the column
$\In(v)$ we are asked for is the mirror image of the row $\Out(v)$ we can
buy. Let's glue the mirrors together: for every edge $\{u,v\}$ store the
single unit $x_{u,v}\oplus x_{v,u}$. That's $|E|$ units, half of what
storing all labels costs. When asked for $\In(v)$, buy only the own row
$\Out(v)$, at price $K$; then for every neighbour $u$,
\[
  x_{u,v} = (x_{u,v}\oplus x_{v,u})\oplus x_{v,u} .
\]
So $\TMC = |E| + K = Kd/2 + K$, about half of the pure player's $Kd$ when
$d$ is large. To win
\emph{big}, one XOR has to carry many edges at once.

\paragraph{The matching strategy.} A matching $M\subseteq E$ is
\emph{induced} if $G$ has no edges between the vertices of $M$ other than
those of $M$ itself: look only at the vertices touched by $M$, and all you
see is $M$. Suppose
\[
  E = M_1\sqcup\dots\sqcup M_m
\]
is a partition of the edges into $m$ induced matchings. For a vertex $v$
let $H_v = [K]\setminus N(v)$; it contains the non-neighbours of $v$ and $v$
itself.

\begin{theorem}[Entangled upper bound in the game]
\label{thm:game-xor}
The entangled player who stores
\[
  z_j \;=\; \bigoplus_{\{a,b\}\in M_j} \bigl(x_{a,b}\oplus x_{b,a}\bigr),
  \qquad j=1,\dots,m,
\]
and, on query $v$, buys the rows $\Out(h)$ for all $h\in H_v$, answers every
query correctly and has
\[
  \TMC \;=\; m + K(K-d) .
\]
\end{theorem}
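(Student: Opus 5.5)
The proof has two parts: a correctness argument and a cost count. Cost is the easy part. The preparation stores exactly the $m$ units $z_1,\dots,z_m$, so $I=m$. On query $v$ the player buys every row $\Out(h)$ with $h\in H_v$, each at price $K$ because no share units are held. Since $|H_v| = K - |N(v)| = K-d_v$, we get $O(v)=K(K-d_v)$. Averaging over $v$ and using $\frac1K\sum_v d_v = d$ gives $\E_v[O(v)] = K(K-d)$, hence $\TMC = m+K(K-d)$.

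For correctness, fix a query $v$ and a neighbour $u\in N(v)$; we must recover $x_{u,v}$. The edge $\{u,v\}$ lies in exactly one matching, say $M_j$, and the plan is to show that every label in $z_j$ other than $x_{u,v}$ is already known, after which XORing them out of $z_j$ returns $x_{u,v}$. The labels inside $z_j$ are $x_{v,u}$, $x_{u,v}$, and the pairs $x_{a,b},x_{b,a}$ for the other edges $\{a,b\}\in M_j$. First, $x_{v,u}$ lies in $\Out(v)$, which was bought because $v\in H_v$.

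The key step is the other edges. For $\{a,b\}\in M_j\setminus\{\{u,v\}\}$, the vertices $a,b$ differ from $u,v$ because $M_j$ is a matching. Since $M_j$ is induced, there is no edge of $G$ between $v$ and $a$ or between $v$ and $b$, as such an edge would join two vertices touched by $M_j$ without being in $M_j$. Hence $a,b\in H_v$, both rows $\Out(a)$ and $\Out(b)$ were bought, and both $x_{a,b}$ and $x_{b,a}$ are known. So every label of $z_j$ except $x_{u,v}$ is known, and $x_{u,v}$ is recovered. Doing this for every $u\in N(v)$ produces all of $\In(v)$.

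Different neighbours $u$ may lie in different matchings or in the same one, and this does not matter. If two neighbours $u,u'$ of $v$ were in the same $M_j$, the two edges $\{u,v\}$ and $\{u',v\}$ would share the vertex $v$, which a matching forbids. So each recovery uses its own $z_j$, and none of the recoveries needs the output of another. The only real content is the induced-matching observation that all vertices of $M_j$ other than $u,v$ are non-neighbours of $v$. The rest is bookkeeping, including checking that the rows bought, all indexed by $H_v$, are counted once each.
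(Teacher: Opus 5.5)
Your proposal is correct and follows the paper's proof essentially step for step. You count $I=m$ and $O(v)=K|H_v|=K(K-d_v)$, which averages to $K(K-d)$, and you decode $x_{u,v}$ from the $z_j$ of the matching containing $\{u,v\}$, using $v\in H_v$ and the fact that inducedness puts every other endpoint of $M_j$ into $H_v$. (One small fix: your last paragraph opens by saying two neighbours ``may lie in \dots\ the same one'' matching, which your own next sentence rules out; simply state, as the paper does, that two edges at $v$ never share a matching.)
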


\begin{proof}
$I = m$, and on query $v$ he pays $K|H_v| = K(K - d_v)$, which averages to
$K(K-d)$. It remains to check that he can recover every $x_{u,v}$,
$u\in N(v)$. Let $M_j$ be the matching containing $\{u,v\}$
(\cref{fig:matching}). Then
\begin{equation}
\label{eq:game-decode}
  x_{u,v} \;=\; z_j \oplus x_{v,u} \oplus
  \bigoplus_{\substack{\{a,b\}\in M_j\\ \{a,b\}\ne\{u,v\}}} \bigl(x_{a,b}\oplus x_{b,a}\bigr)
\end{equation}
(the edge $\{u,v\}$ is taken out of the big XOR; otherwise its labels would
appear twice and cancel). Every label on the right is known:
\begin{itemize}[itemsep=2pt]
  \item $x_{v,u}\in\Out(v)$, and $v\in H_v$ because $G$ has no loops;
  \item for any other edge $\{a,b\}\in M_j$: it does not touch $u$ or $v$
        ($M_j$ is a matching), and since $M_j$ is \emph{induced} there is
        no edge between $v$ and $a$, nor between $v$ and $b$. So
        $a,b\in H_v$, and we bought $\Out(a)\ni x_{a,b}$ and
        $\Out(b)\ni x_{b,a}$.
\end{itemize}
Two edges at $v$ never lie in the same matching, so every $x_{u,v}$ has its
own XOR.
\end{proof}

\begin{figure}[htbp]
\centering
\begin{tikzpicture}[scale=0.8,
    vert/.style={circle, draw, inner sep=0pt, minimum size=5pt, fill=white},
    hv/.style={vert, fill=buy!45},
    every label/.style={font=\small}]
  \draw[dashed, gray] (0,0) circle (2.35);
  \foreach \a/\n in {90/n1, 140/n2, 250/n4, 295/n5, 340/n6, 30/n7}
    \node[vert] (\n) at (\a:1.6) {};
  \node[vert, label=left:$u$] (u) at (195:1.6) {};
  \node[circle, draw, fill=buy!45, inner sep=1.2pt, font=\small] (v) at (0,0) {$v$};
  \foreach \n in {n1, n2, n4, n5, n6, n7} \draw[gray] (v) -- (\n);
  \node[hv, label=above:$a$] (b) at (-1.1,3.25) {};
  \node[hv, label=above:$b$] (c) at (1.5,3.2) {};
  \node[hv] (a) at (-3.3,1.7) {};
  \node[hv] (d) at (3.45,1.3) {};
  \node[hv] (e) at (3.3,-1.7) {};
  \node[hv] (f) at (1.1,-3.35) {};
  \node[hv, label=below:$c$] (g) at (-1.6,-3.15) {};
  \node[hv, label=left:$d$] (h) at (-3.5,-0.9) {};
  \draw[gray] (a) -- (b);
  \draw[gray] (c) -- (d);
  \draw[gray] (d) -- (e);
  \draw[gray] (a) -- (n2);
  \draw[gray] (e) -- (n6);
  \draw[gray] (f) -- (n5);
  \draw[ask, very thick] (v) -- (u);
  \draw[ask, very thick] (b) -- (c);
  \draw[ask, very thick] (g) -- (h);
  \draw[gray] (-45:2.35) -- (-45:3.0);
  \node[below right, font=\small] at (-45:3.0) {$N(v)\cup\{v\}$};
\end{tikzpicture}

\caption[Decoding with an induced matching]{Decoding $x_{u,v}$
with the XOR $z_j$. \textcolor{ask}{Blue}: the matching $M_j\ni\{u,v\}$; here
$M_j=\bigl\{\{u,v\},\{a,b\},\{c,d\}\bigr\}$, so
$x_{u,v}=z_j\oplus x_{v,u}\oplus x_{a,b}\oplus x_{b,a}\oplus x_{c,d}\oplus x_{d,c}$.
Shaded: $H_v$ --- the vertices whose rows the entangled player buys,
including $v$ itself.}
\label{fig:matching}
\end{figure}

\begin{intuition}[Why induced?]
This is EXACTLY the place where we need the matchings to be induced. With
an ordinary matching, some other edge of $M_j$ could touch the
neighbourhood of $v$; we did not buy that row, so that pair stays locked
inside $z_j$ and we're stuck with at least two unknowns.
\end{intuition}

\begin{intuition}[The whole trick.]
To assemble the column $\In(v)$ the pure player pays for the rows of the
NEIGHBOURS of $v$ ($d$ of them on average). The entangled player pays for
the rows of the NON-neighbours ($K - d$ of them) plus $m$ for the XORs. In a
dense graph non-neighbours are few.
\end{intuition}

\subsection{The graphs we need, and the separation}

For the matching strategy to win by more than a constant we need both
\[
  K - d \ \text{small (almost complete graph)}
  \qquad\text{and}\qquad
  m \ \text{small (few, but large, induced matchings)} .
\]
Here ``small'' is measured against the pure player's $Kd$, which is about
$K^2$ for such dense graphs: we want both $K(K-d)$ and $m$ to be at most
$K^{2-\epsilon}$ for a constant $\epsilon>0$. These two pull in opposite
directions: the denser the graph, the harder it
is to have large induced matchings, since every extra edge may break
inducedness. A complete graph is useless --- its induced matchings have one
edge each, so $m = \binom K2$. Write
$M_{\mathrm{miss}} = \binom K2 - |E|$ for the number of missing edges, so
that
\begin{equation}
\label{eq:missing}
  K(K-d) = K^2 - 2|E| = K + 2M_{\mathrm{miss}} .
\end{equation}
Graphs that balance the two requirements were constructed by Alon, Moitra
and Sudakov~\cite{AMS2013}. In the form we need (\cref{app:ams} derives
the parameters):

\begin{proposition}[AMS graphs]
\label{prop:ams}
Let $\eta = 0.029$. For every large enough $\nu$ there is a regular graph on
$K = 64^\nu$ vertices whose edges are partitioned into $m$ induced matchings
of equal size $\mu=|E|/m$, with
\[
  m \le K^{2-2\eta}, \qquad M_{\mathrm{miss}} \le K^{2-2\eta} .
\]
The graph and the partition can be computed deterministically in time
polynomial in $K$.
\end{proposition}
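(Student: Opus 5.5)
The plan is to reproduce the Alon--Moitra--Sudakov construction in the explicit form we need, and then to check the two counts, $m$ and $M_{\mathrm{miss}}$, by elementary entropy estimates. Take $K=64^\nu$ as the number of vectors $x\in\mathbb Z_C^{n}$ for a fixed small modulus $C$ and $n$ proportional to $\nu$. One natural choice is $C=4$, $n=3\nu$; if the optimisation in the last step prefers another base, we absorb it into the power $64^\nu$. Join $x$ and $y$ when their difference $x-y$, read coordinatewise as an element of $\{-\lfloor C/2\rfloor,\dots,\lfloor C/2\rfloor\}$, is ``typical''. Concretely, $\|x-y\|_2^2$ should lie in a window $[T-\Delta,T]$ around its expected value $T$ for a uniformly random difference. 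The edge relation depends only on $x-y$ and is symmetric under $x-y\mapsto y-x$. The graph is therefore a Cayley graph of $\mathbb Z_C^n$, hence regular, and it can be listed in time polynomial in $K$.

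\textbf{Missing edges.} A pair is a non-edge exactly when its difference falls outside the window. So $M_{\mathrm{miss}}\le K\cdot\#\{\delta:\|\delta\|^2\notin[T-\Delta,T]\}$. The upper tail ($\|\delta\|^2>T$) is where the choice of $T$ matters. We set $T$ slightly \emph{above} the mean, so that both tails have only $K^{1-c}$ differences, by a Chernoff/entropy bound on the multinomial type of $\delta$. This gives $M_{\mathrm{miss}}\le K^{2-c}$.

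\textbf{Induced matchings.} Following AMS, index a matching by a pair $(\delta,\text{class})$ and use the parallelogram law. Fix an edge $\{x,y\}$ with difference $\delta$ and another edge $\{x',y'\}$ in the same matching, also of difference $\delta$. The cross differences $x-x'$, $x-y'$, $y-x'$, $y-y'$ satisfy
\[
\|x-y'\|^2+\|y-x'\|^2 = 2\|x-x'\|^2+2\|\delta\|^2-\|x-x'-\delta\|^2-\ldots
\]
Precisely, the identity $\|a+\delta\|^2+\|a-\delta\|^2=2\|a\|^2+2\|\delta\|^2$ for $a=x-x'$ forces one of $\|a\|^2$, $\|a\pm\delta\|^2$ outside the narrow window whenever $\|\delta\|^2\ge T-\Delta$ and $\Delta$ is small compared with $T$. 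That is the inducedness condition; the identity has to be read over the integers, which is why we work with centred representatives and restrict to differences without wrap-around. To make every matching genuinely a matching, and to get equal sizes, split each difference class $\{\{x,x+\delta\}\}$ into translates by a subgroup (or a coset system) of index about $K^{1-c'}$. Each piece then has size $\mu$ independent of $\delta$. The number of matchings is
\[
m\approx(\#\text{admissible }\delta)\cdot K^{1-c'}/2\le K^{2-c''}.
\]

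\textbf{Main obstacle.} The hard step is making inducedness and near-completeness hold \emph{simultaneously with explicit exponents}. The window must be narrow enough that the parallelogram argument excludes cross edges inside each matching, yet wide enough that almost all differences are edges. These two pull against each other exactly as the paper warns. So we will parametrise by the modulus $C$, the window width $\Delta/n$ and the offset of $T$ from the mean. We then optimise the two rate functions numerically, giving a pair of explicit one-variable entropy expressions, and verify that both exceed $2\eta=0.058$ in the exponent of $K$; \cref{app:ams} is where these numbers would be recorded. If the pure window construction does not reach $0.058$, the first fallback is AMS's original trick of taking the window on a random-like sub-collection of coordinates. That sub-collection would be chosen explicitly by a product code, so that determinism and polynomial time are preserved.
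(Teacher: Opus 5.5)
Your proposal has two genuine gaps: the inducedness argument does not prove inducedness, and the exponent $0.058$ is never actually obtained.

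\textbf{Inducedness.} For a class $\{\{x,x+\delta\}:x\in x_0+L\}$ to be an induced matching, every nonzero $a\in L$ must make \emph{all three} of $a$, $a+\delta$, $a-\delta$ non-edges, since these are the differences of the four cross pairs. The identity $\|a+\delta\|^2+\|a-\delta\|^2=2\|a\|^2+2\|\delta\|^2$ only shows that the three norms cannot all lie in $[T-\Delta,T]$ when $\Delta<T/2$. That is, it shows that \emph{some} cross pair is a non-edge, not that all are.

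In a graph missing only $K^{2-c}$ edges, the non-edge differences form a $K^{-c}$ fraction of $\mathbb Z_C^n$. So what you actually need is a subgroup $L_\delta$ of size $K^{c'}$ whose nonzero elements, together with their shifts by $\pm\delta$, all fall into that tiny set. This is a coding condition on $L_\delta$ relative to $\delta$, and it is the entire content of the construction. Your proposal leaves $L_\delta$ unspecified ("a subgroup of index about $K^{1-c'}$"), and a generic subgroup fails. Separately, the parallelogram identity is false for centred representatives in $\mathbb Z_C^n$ whenever a coordinate wraps around. For typical $a,\delta$ this happens in a constant fraction of coordinates.

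\textbf{The exponent.} The statement is quantitative, and $\eta=0.029$ is tuned to one specific family. The paper does not design a new graph. It takes the code-based AMS graphs of \cref{thm:ams}: words in $[C]^\nu$, adjacent when they agree in fewer than $d_\nu$ coordinates. Their classes are generated by the flips of a proper binary linear code of distance above $d_\nu$. Two edges in a class then have cross pairs agreeing in at least $d_\nu$ coordinates, because $x+x'$ and $x+x'+\mathbf 1$ are both nonzero codewords. This is exactly the all-cross-pairs property your argument lacks. The paper then balances $e_{64}(\rho)=f_{64}(\rho)$ at $\rho_0=1-5/\log_2 63$, getting $\lambda_0\approx 1.9404$. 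That is below $2-2\eta=1.942$ by only about $0.0016$ (\cref{cor:ams-balanced}). Explicitness comes from building the code greedily.

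You replace this with an optimisation of different rate functions that is never carried out, and there is no reason to expect it to reach $0.058$. Note first that polynomially small tails force $T$ a \emph{linear} distance above the mean, not ``slightly'' above it. For instance, on $\mathbb Z_4^n$ one way to make your classes induced is $L_\delta=2\mathcal C'$ for a binary code $\mathcal C'$ on the zero coordinates of $\delta$. For $a=2c$ this gives $\|a\pm\delta\|^2=\|\delta\|^2+4\,\mathrm{wt}(c)$. An upper tail of $K^{-0.058}$ then needs $T$ about $0.6n$ above the mean $1.5n$. For typical $\delta$, the code $\mathcal C'$, of length about $n/4$, would then need relative distance above $1/2$, so by the Plotkin bound it has bounded size. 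Finally, the AMS construction with polynomial savings in both counts is the code-based one, not a Euclidean window analysed with the parallelogram law.
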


For these graphs, by \eqref{eq:missing},
\begin{equation}
\label{eq:ams-consequences}
  K(K-d) = O(K^{2-2\eta}),
  \qquad
  d = K - 1 - \frac{2M_{\mathrm{miss}}}{K} = \Theta(K).
\end{equation}

\begin{corollary}[Separation in the game]
\label{cor:game-sep}
On the graphs of \cref{prop:ams}, every pure player has
$\TMC \ge Kd = \Theta(K^2)$, while the entangled player of
\cref{thm:game-xor} has $\TMC = O(K^{2-2\eta})$.
\end{corollary}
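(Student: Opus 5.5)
The corollary follows by plugging the parameters of \cref{prop:ams} into \cref{thm:game-pure} and \cref{thm:game-xor}. Fix $\nu$ large enough and take the regular graph $G$ on $K=64^\nu$ vertices from \cref{prop:ams}, together with its partition $E=M_1\sqcup\dots\sqcup M_m$ into induced matchings. \Cref{thm:game-pure} places no condition on $G$, so it applies directly: every pure player has $\TMC\ge 2|E|=Kd$. To show this is $\Theta(K^2)$, I would use \eqref{eq:ams-consequences}. From \eqref{eq:missing} we have $d=K-1-2M_{\mathrm{miss}}/K$. Together with $M_{\mathrm{miss}}\le K^{2-2\eta}$ this gives $d\ge K-1-2K^{1-2\eta}$, which is at least $K/2$ for large $K$. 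Trivially $d\le K-1$. Hence $Kd=\Theta(K^2)$.

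For the entangled side, I would invoke \cref{thm:game-xor} with this partition into induced matchings. It gives a correct strategy with $\TMC=m+K(K-d)$. By \cref{prop:ams}, $m\le K^{2-2\eta}$. By \eqref{eq:missing}, $K(K-d)=K+2M_{\mathrm{miss}}\le K+2K^{2-2\eta}$. Summing, $\TMC\le 3K^{2-2\eta}+K$. Since $2-2\eta>1$, this is $O(K^{2-2\eta})$.

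There is no real obstacle here. The only points needing care are that $d$ is bounded below by a constant times $K$, which holds because the missing-edge count is $o(K^2)$, and that the additive $K$ term is dominated by $K^{2-2\eta}$. All the substantive content sits in \cref{prop:ams}, which we take as given.
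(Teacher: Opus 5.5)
Your proposal is correct and follows the paper's own route: it plugs the bounds of \cref{prop:ams}, via \eqref{eq:missing} and \eqref{eq:ams-consequences}, into \cref{thm:game-pure} and \cref{thm:game-xor}. You also make explicit the two details the paper leaves implicit, namely the lower bound $d\ge K-1-2K^{1-2\eta}$ and the absorption of the additive $K$ because $2-2\eta>1$.
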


\begin{center}
\small
\begin{tabular}{@{}lccc@{}}
  \toprule
  & stores ($I$) & pays on the spot (avg.\ $O$) & TMC \\
  \midrule
  pure player & \multicolumn{2}{c}{\itshape whatever he does\dots} & $\geq 2|E| = Kd$ \\
  entangled: XOR per edge (warm-up) & $|E|$ & $K$ & $Kd/2 + K$ \\
  entangled: XOR per induced matching & $m$ & $K(K-d)$ & $m + K(K-d)$ \\
  \midrule
  on AMS graphs: pure & & & $\Theta(K^2)$ \\
  on AMS graphs: entangled & & & $O(K^{2-2\eta})$ \\
  \bottomrule
\end{tabular}
\end{center}

\paragraph{What is made up, and what comes next.} Two things in the game
are postulated rather than derived:
\begin{enumerate}[label=(\roman*), itemsep=2pt]
  \item \emph{What is a label, and why can't I compute just one?} We simply
        declared that $\Out(u)$ comes as a bundle for price $K$.
  \item \emph{What does it cost to hold memory?} We charged storage once
        ($I$) and the purchases once ($O$), as if memory were free while
        we compute.
\end{enumerate}
In the rest of the paper the labels become hash values computed along a DAG,
which answers (i), and the cost becomes the cumulative cost of pebbling,
which answers (ii). \Cref{sec:models} fixes the definitions and collects
the tools; \cref{sec:dag} builds our DAG.

% ======================================================================
\section{Models and preliminaries}
\label{sec:models}

This section fixes the model and collects the tools used in the rest of
the paper: how labels are computed (\cref{sec:labels}), what pebbles are
and what we pay for (\cref{sec:pebbles,sec:cost}), what an entangled player
may store (\cref{sec:entangled-model}), one counting lemma
(\cref{sec:forced}), and the connectors that will let us reduce the
indegree (\cref{sec:connectors}).

Throughout, $\mathcal D$ is a finite DAG that describes a computation: a
vertex stands for an intermediate value, and an edge
$p\to z$ says that the value at $z$ is computed from the value at $p$. We
write $\Par(z)$ for the set of parents of $z$ (the vertices with an edge
into $z$).

\subsection{Labels are computed, not given}
\label{sec:labels}

Every vertex $z$ of $\mathcal D$ has a value $X_z$ (its \emph{label}),
computed by a hash function from the labels of its parents. To compute a
label you need all of its parents' labels at hand --- there is no shortcut.
Recovering a label that was computed earlier from encoded storage is a
different operation, defined in \cref{sec:entangled-model}. The question is
how much memory you must keep, and for how long, to obtain some target
label.

Concretely, fix a hash function $\mathsf H:\{0,1\}^*\to\{0,1\}^w$, which we
model as a random oracle, and an input $x$. Every vertex $z$ has a public
identifier $\mathsf{id}(z)$, and if $z$ has parents $p_1,\dots,p_r$ (in a
fixed order) then
\begin{equation}
\label{eq:label}
  X_z = \mathsf H\bigl(x,\ \mathsf{id}(z),\ X_{p_1},\dots,X_{p_r}\bigr).
\end{equation}
A vertex without parents is a \emph{source}; its label is
$\mathsf H(x,\mathsf{id}(z))$ and can be computed at any time.
Tuples are encoded unambiguously, and distinct vertices have
distinct identifiers.

All labels are $w$-bit strings (\cref{app:dmhf} assumes
$w\ge2\log_2K$), and $w$ bits is our unit of memory. So a
stored label costs $1$, and so does a stored XOR of any number of labels ---
it is also a $w$-bit string. This is where ``one XOR takes one unit'' from
the game comes from, and why labels behave like the independent random
strings of \cref{def:game}.

\begin{intuition}[Why a random oracle?]
Because then labels behave like the independent random strings of the game.
Knowing some labels tells you nothing about a label you have not computed,
and the only way to compute $X_z$ for the first time is to query $\mathsf H$
on exactly the right input --- which contains all the parents' labels. That is the ``no shortcut''
rule. The identifier $\mathsf{id}(z)$ is there only so that two vertices
with the same parents still get different labels.
\end{intuition}

\begin{remark}[No linearity]
\label{rem:no-linearity}
When we recover a label from an XOR, we XOR $w$-bit strings that
$\mathsf H$ has already returned. We never use anything like
$\mathsf H(A\oplus B) = \mathsf H(A)\oplus\mathsf H(B)$. Every new label is
obtained by an honest query to $\mathsf H$.
\end{remark}

\subsection{Pebbles}
\label{sec:pebbles}

Pebbling is the standard bookkeeping device for such computations. Instead
of tracking bit strings, we put a \emph{pebble} on a vertex to say ``I
currently store its label''. The computation proceeds in rounds; in each
round we can compute many labels in parallel, as long as their parents were
available at the start of the round.

\begin{definition}[Pure pebbling]
\label{def:pure}
A \emph{pure configuration} is a set $P\subseteq V(\mathcal D)$ of pebbled
vertices; its weight is $|P|$. A move $P\to P'$ is \emph{legal} if every
newly pebbled vertex had all its parents pebbled before the move:
\begin{equation}
\label{eq:pure-move}
  z\in P'\setminus P \quad\Longrightarrow\quad \Par(z)\subseteq P .
\end{equation}
Any pebbles may be removed in a move, including the parents just used.
\end{definition}

Sources have no parents, so they can be pebbled at any time. A label first
computed in a move cannot be used as a parent in the same move. The word
``pure'' refers to the fact that memory consists of whole labels: a label is
either stored or not, nothing in between.

\paragraph{A small example.} Take five vertices with
$a\to u\to z$ and $b\to v\to z$, and start with nothing stored. One legal
way to get the label of $z$ is
\[
  \varnothing \;\to\; \{a,b\} \;\to\; \{u,v\} \;\to\; \{z\} .
\]
The move $\varnothing\to\{a,b,u,v\}$ would be illegal: the parents of $u,v$
are not there when the move starts.

\subsection{The cost: one challenge from a prepared state}
\label{sec:cost}

We do not ask the player to compute one fixed label. We ask him to be
\emph{ready}: he prepares some memory state, and then a random challenge
arrives. Fix a nonempty set $\mathcal Q\subseteq V(\mathcal D)$ of possible
challenges.

\begin{definition}[One challenge from a prepared state]
\label{def:service}
\leavevmode
\begin{enumerate}[label=\arabic*), itemsep=2pt]
  \item Before anything happens, the player chooses a configuration $S$ (the
        \emph{prepared state}), reachable from the empty configuration by
        legal moves. Preparing it is \emph{not} charged.
  \item A challenge $q$ is drawn uniformly from $\mathcal Q$, independently
        of $S$, and revealed.
  \item The player makes legal moves $P_0 = S, P_1, P_2,\dots$ The challenge
        is answered at the first time $\tau\ge1$ with $q\in P_\tau$.
  \item The cost is the sum of the weights of the configurations the player
        moved \emph{from}:
        \begin{equation}
        \label{eq:cans}
          \Cans = \sum_{j=0}^{\tau-1} |P_j| .
        \end{equation}
\end{enumerate}
The \emph{pure (single-challenge) cumulative cost} of $(\mathcal D,\mathcal Q)$
is the best expected cost over all prepared states and strategies:
\[
  \CPC_{\mathrm{pure}}(\mathcal D,\mathcal Q)
  \;=\; \inf_{S,\ \text{strategy}} \E[\Cans] .
\]
\end{definition}

This is the \emph{cumulative} memory cost: every stored label is paid for
in every round it is kept --- the ``area under the memory curve''. Note
$\tau\ge1$: even if the answer is already in $S$, you have to make one move,
and you pay $|S|$ for it. So storing things is never free. In the example
above, with $S=\varnothing$ and challenge $z$, the cost is $0+2+2=4$.

\begin{intuition}[Why the cumulative cost and not $I+O$?]
In the memory game we paid for the stored data once. Real memory is paid
for as long as you hold it --- it occupies hardware for the whole time of
the computation. The cumulative cost is the standard measure for
memory-hard functions, exactly for this reason. Nothing breaks: our
strategies take few rounds (a constant number in \cref{sec:dag}, $O(\log K)$
in \cref{sec:degree-two}), so ``paid once'' and ``paid every round'' differ
by at most that factor in the stored part. It does matter in one place: the
warm-up strategy of \cref{sec:game} stops paying off on the DAG
(\cref{sec:entangled}).
\end{intuition}

\paragraph{The trivial upper bound.} The pure player can always prepare
$S=\mathcal Q$ (all possible answers, computed beforehand) and answer in one
move. Hence, for every DAG,
\begin{equation}
\label{eq:trivial}
  \CPC_{\mathrm{pure}}(\mathcal D,\mathcal Q) \;\le\; |\mathcal Q| .
\end{equation}
A lower bound of order $|\mathcal Q|$ therefore says that the pure player
cannot do essentially better than caching all possible answers.

\subsection{Entangled pebbles}
\label{sec:entangled-model}

The entangled player may also store XORs of labels. Alwen
et al.~\cite{AlwenEtAl2016} allow more general ``threshold'' pebbles;
reading an XOR of $|Y|$ labels as their pebble $\langle Y\rangle_{|Y|-1}$,
\cref{rem:ackkpt} checks that every move we make is also a move in their
model.

\begin{definition}[XOR pebbles and recovery]
\label{def:entangled}
An \emph{entangled configuration} $\mathcal P = (W,\mathcal X)$ consists of
a set $W\subseteq V(\mathcal D)$ of pure pebbles and a set $\mathcal X$ of
\emph{XOR pebbles}; an XOR pebble is a set $Y\subseteq V(\mathcal D)$ with
$|Y|\ge2$ and stands for the stored string $\bigoplus_{z\in Y}X_z$. The
weight is
\[
  \wt(\mathcal P) = |W| + |\mathcal X| .
\]
The \emph{closure} $\cl(\mathcal P)$ is the set of labels we can recover
for free: the smallest set $C\supseteq W$ such that
\[
  Y\in\mathcal X \ \text{ and } \ |Y\setminus C|\le 1
  \quad\Longrightarrow\quad Y\subseteq C .
\]
(If all labels of an XOR but one are known, XOR them out to get the last
one.) The labels \emph{available} in a move are those recoverable or
computable from recoverable parents:
\[
  \avail(\mathcal P) = \cl(\mathcal P)\ \cup\ \{z : \Par(z)\subseteq\cl(\mathcal P)\}.
\]
A move $\mathcal P\to\mathcal P'$ is legal if every new pure pebble is in
$\avail(\mathcal P)$ and every new XOR pebble $Y$ satisfies
$Y\subseteq\avail(\mathcal P)$. Pebbles may be kept or removed
freely. As in pure pebbling, availability refers to $\mathcal P$, the
configuration before the move: pebbles removed in a move may still be used
in it, and a label first computed in a move cannot be used in that move as
a parent. It may help recover other labels in the resulting configuration,
but such a recovery cannot justify further placements within the same move.
\end{definition}

For example, in the five-vertex DAG of \cref{sec:pebbles}, take
$\mathcal P=(\{u\},\{\{u,v\}\})$: a pure pebble on $u$ and the XOR
$X_u\oplus X_v$. Then $\cl(\mathcal P)=\{u,v\}$, so $z$ is available and
one move can pebble it, although $v$ is not stored as a whole. Without the
pure pebble on $u$ the closure would be empty: the XOR alone gives nothing.
Starting instead from $(\{a\},\{\{u,v\}\})$, one move can pebble $u$, but
that move cannot also pebble $z$: $v$ becomes recoverable only in the
resulting configuration.

A pure configuration is the case $\mathcal X=\varnothing$; then
$\cl(\mathcal P) = W$ and the rule is exactly \eqref{eq:pure-move}. For
the entangled player, the challenge is answered at the first $\tau\ge1$
with $q\in\cl(\mathcal P_\tau)$, and $\Cans = \sum_{j<\tau}\wt(\mathcal P_j)$.

\begin{intuition}[Stored vs.\ recoverable.]
Recovering a label from an XOR is free (it's a local XOR of strings we
have), but it does not \emph{store} anything: the recovered label costs
nothing and is not kept unless we put a pure pebble on it, which costs $1$
per round like any other pebble. Recovery also never computes a new label;
for that you still need a move with all the parents available.
\end{intuition}

\begin{remark}[Our moves are moves of Alwen et al.]
\label{rem:ackkpt}
Alwen et al.~\cite{AlwenEtAl2016} store \emph{threshold pebbles}
$\langle Y\rangle_\theta$, with $Y\subseteq V(\mathcal D)$ and
$0\le\theta\le|Y|$, of weight $|Y|-\theta$: any $\theta$ labels of $Y$
give back all of $Y$. Their closure $\cl^\ast(\mathcal P)$ is the smallest
set $C$ such that $|Y\cap C|\ge\theta$ implies $Y\subseteq C$ for every
stored $\langle Y\rangle_\theta$, and $\mathcal P'$ is a valid update of
$\mathcal P$~\cite[Definitions~1--2]{AlwenEtAl2016} if every pebble
$\langle Y'\rangle_{\theta'}$ of $\mathcal P'$
\begin{enumerate}[label=(\roman*), itemsep=1pt]
  \item satisfies $Y'\subseteq\cl^\ast(\mathcal P)\cup\{z:\Par(z)\subseteq\cl^\ast(\mathcal P)\}$, or
  \item is a pebble $\langle Y'\rangle_{\theta}$ of $\mathcal P$ whose
        threshold is kept or raised, $\theta'\ge\theta$.
\end{enumerate}
Read a pure pebble on $z$ as $\langle\{z\}\rangle_0$ and an XOR pebble $Y$
as $\langle Y\rangle_{|Y|-1}$ (an XOR is one way to realize such a
pebble). Both weigh one, and the threshold rules become ``$z\in C$'' and
``$|Y\setminus C|\le1\Rightarrow Y\subseteq C$''; so $\cl^\ast=\cl$ on our
configurations, and the set in (i) is $\avail(\mathcal P)$. Hence in a
legal move of \cref{def:entangled} every new pebble satisfies (i) and every
kept pebble satisfies (ii) with $\theta'=\theta$; conversely, between
configurations of pure and XOR pebbles their rules allow nothing more.
Apart from general threshold pebbles, the one freedom we never use is
raising a threshold, which would turn a pure or XOR pebble into a
weight-zero pebble that recovers nothing. This also covers the moves below
that may look unusual: computing a label whose parents are only
recoverable (move $2\to3$ in \cref{lem:xor}), storing a recovered label
whose own parent is not available (Decode in \cref{sec:entangled-two}),
and dropping helpers in the move that uses them. So every strategy in this
paper is a strategy in their model, with the same cost and the same answer
times: all our entangled upper bounds hold there, and pure configurations
are ordinary pebbling, \eqref{eq:pure-move}. Nothing is claimed in the
other direction; we prove no entangled lower bound.
\end{remark}

In a proof of space or in a memory-hard function, the prepared state is
used again and again. So it is natural to ask the player to answer the
challenge \emph{and then restore his prepared state}, ready for the next
challenge.

\begin{definition}[Returning service]
\label{def:return}
A strategy \emph{returns} if, after answering at time $\tau$, it continues
until some time $T\ge\tau$ with $\mathcal P_T = S$. Its return cost is
\[
  \Cret = \sum_{j=0}^{T-1}\wt(\mathcal P_j) \ \ \ge\ \Cans ,
  \qquad
  \CPC^{\mathrm{ret}}_{\mathrm{ent}}(\mathcal D,\mathcal Q)
  = \inf_{S,\ \text{returning strategy}} \E[\Cret] .
\]
\end{definition}

The configuration $\mathcal P_\tau$ that holds the answer is charged in
$\Cret$ when $T>\tau$, but not in $\Cans$; the final configuration
$\mathcal P_T=S$ is never charged.

We will give the pure player the easier task (answer only) and the
entangled player the harder one (answer and return). The pure lower bounds
hold for every prepared state; the entangled upper bounds are for one
specific state, and for a very restricted kind of strategy.

\begin{definition}[Common arity]
\label{def:common-arity}
A returning strategy is \emph{common-arity} if its prepared state can be
reached from the empty configuration by legal moves such that all XOR
pebbles stored along the way, and all XOR pebbles stored while answering
and returning any challenge, have the same number of elements, its
\emph{arity}. In the repeated game of \cref{sec:repeated} the same
condition is imposed on all phases. The arity may depend on the instance
but not on the challenge; pure pebbles are not restricted.
\end{definition}

Pure strategies are the trivial case. Common arity is a property of the
strategies behind our upper bounds (\cref{sec:entangled}); we prove no
lower bound for such strategies.

\subsection{Forced ancestors}
\label{sec:forced}

All our pure lower bounds rest on one simple observation: if a vertex
leads to the challenge along a path that the cache does not cut, then the
player has to compute it again.

\begin{lemma}[Forced ancestors]
\label{lem:ancestors}
Let $P_0,P_1,\dots,P_\tau$ be legal pure moves, and let $c\notin P_0$
first appear in $P_\tau$. Let $U$ be a set of ancestors of $c$, each of
which has a directed path to $c$ avoiding $P_0$. Then every $u\in U$
appears in some $P_j$ with $1\le j<\tau$, and so
\[
  \sum_{j=1}^{\tau-1}|P_j| \ \ge\ |U| .
\]
\end{lemma}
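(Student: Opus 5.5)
The plan is to fix $u\in U$ together with a directed path $u=y_0\to y_1\to\dots\to y_k=c$ that avoids $P_0$, and to propagate backwards along this path the statement ``$y_i$ is pebbled at some time strictly before the time at which $y_{i+1}$ is first pebbled''. For each vertex $y$ on the path let $t(y)$ be the first index $j\ge0$ with $y\in P_j$, if it exists. We know $t(c)=\tau$ by hypothesis, and $t(y_i)\ge1$ for every $i$ because the path avoids $P_0$. The goal is $1\le t(u)<\tau$.

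The key step is a single induction from $i=k$ down to $i=0$, showing that $t(y_i)$ exists and $t(y_i)\le \tau-(k-i)$, in particular $t(y_i)<t(y_{i+1})$. Suppose $y_{i+1}$ first appears in $P_{t}$ with $t=t(y_{i+1})\ge1$. Then $y_{i+1}\in P_t\setminus P_{t-1}$, so by the legality rule \eqref{eq:pure-move} every parent of $y_{i+1}$ lies in $P_{t-1}$; since $y_i\in\Par(y_{i+1})$, we get $y_i\in P_{t-1}$, hence $t(y_i)\le t-1$. As $y_i\notin P_0$, necessarily $t-1\ge1$, i.e.\ $t(y_i)\ge1$ as well. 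Applied at $i=0$ (for $k\ge1$; note $u\ne c$ since $c\notin P_0$ yet ancestors here are proper, and if $u=c$ were allowed the statement would be vacuous-or-adjusted), this gives $u\in P_j$ for some $1\le j\le\tau-1$.

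Finally the counting: each $u\in U$ is assigned an index $j_u\in\{1,\dots,\tau-1\}$ with $u\in P_{j_u}$, so $\sum_{j=1}^{\tau-1}|P_j|\ge\sum_{j=1}^{\tau-1}|\{u\in U:j_u=j\}|=|U|$, since distinct vertices are counted in the sets $P_j$ separately. There is no real obstacle; the only point needing care is the boundary, namely that the path avoiding $P_0$ is exactly what prevents the backward chain from terminating at index $0$, and that ``first appears'' is what lets us apply the move rule to the transition $P_{t-1}\to P_t$ (a vertex present earlier, then removed and re-placed, is handled by working with its first appearance only, which is all the induction uses).
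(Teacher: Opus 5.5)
Your proof is correct and is essentially the paper's argument: you walk backwards along a $P_0$-avoiding path using first-appearance times, the legality rule \eqref{eq:pure-move} gives $1\le t(y_0)<t(y_1)<\dots<t(y_k)=\tau$, and you then count each $u\in U$ once in $P_{t(u)}$. One small correction to your aside on $u=c$: the reason $u\ne c$ is that ancestors are proper (joined to $c$ by a nonempty path in a DAG), not that $c\notin P_0$, and if $u=c$ were allowed the claim would be false rather than vacuous, because $c$ does not appear before time $\tau$.
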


\begin{proof}
Fix $u\in U$ and a path $u=z_0\to z_1\to\dots\to z_J = c$ avoiding $P_0$.
Walk backwards. $c$ is first placed at time $a_J=\tau$; by
\eqref{eq:pure-move} its parent $z_{J-1}$ is present at time $\tau-1$, so
$z_{J-1}$ first appears at some time $a_{J-1}\le\tau-1$, and
$a_{J-1}\ge1$ because $z_{J-1}\notin P_0$. Repeating, we get
\[
  1\le a_0<a_1<\dots<a_J=\tau ,
\]
so $u\in P_{a_0}$ with $1\le a_0<\tau$. Different $u$ give different pairs
$(u,a_0)$ (vertex, round), each contributing $1$ to the sum.
\end{proof}

\subsection{Connectors}
\label{sec:connectors}

The last ingredient is a family of graphs that we will use in
\cref{sec:degree-two} to replace a vertex with $K$ parents by a network of
indegree two. What we need from such a network is that a few stored
vertices inside it cannot cut many of its input--output connections. The
right notion is due to Blocki and Cinkoske~\cite{BlockiCinkoske2021}.

\begin{definition}[$K$-connector]
\label{def:connector}
A \emph{$K$-connector} is a DAG $M$ with $K$ designated input terminals
$I$ and $K$ output terminals $O$ such that for every $r\le K$, every
$I'\subseteq I$ and $O'\subseteq O$ of size $r$, and every bijection
$\pi:I'\to O'$, there are $r$ pairwise vertex-disjoint directed paths in
$M$, one from each $i\in I'$ to $\pi(i)$.
\end{definition}

So a connector can route any pairing of inputs and outputs at the same time,
without two routes sharing a vertex. For the lower bound we use a
consequence of this: after deleting any $q\le K$ vertices, we can still
select $K-q$ inputs and $K-q$ outputs such that every selected input
reaches every selected output. The selection may leave out some terminals
that were not deleted.

\begin{definition}[ST-robustness~\cite{BlockiCinkoske2021}]
\label{def:st}
A DAG $M$ with input set $I$ and output set $O$, $|I|=|O|=K$, is
\emph{maximally ST-robust} if for every $q\le K$ and every set
$D\subseteq V(M)$ with $|D|\le q$ there are $I^*\subseteq I\setminus D$ and
$O^*\subseteq O\setminus D$, each of size at least $K-q$, such that every
$i\in I^*$ has a directed path to every $o\in O^*$ in $M-D$.
\end{definition}

\begin{lemma}[{\cite[Theorem~11]{BlockiCinkoske2021}}]
\label{lem:connector-st}
Every $K$-connector is maximally ST-robust.
\end{lemma}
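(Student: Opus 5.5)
The plan is to encode the pairs of terminals that $D$ separates as a bipartite graph, bound its maximum matching using the routing property of the connector (\cref{def:connector}), and then take $I^*$ and $O^*$ to be what is left after deleting a minimum vertex cover (König's theorem).

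\textbf{Setup.} Fix $q\le K$ and $D\subseteq V(M)$ with $|D|\le q$. Build a bipartite graph $B$ with sides $I$ and $O$. Put an edge $\{i,o\}$ in $B$, calling the pair \emph{blocked}, if every directed $i$--$o$ path in $M$ meets $D$, endpoints included. In particular, if $i\in D$ or $o\in D$ then every pair containing it is blocked.

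\textbf{Step 1: every matching in $B$ has at most $|D|$ edges.} Suppose $\{i_1,o_1\},\dots,\{i_s,o_s\}$ is a matching in $B$. The $i_t$ are distinct and the $o_t$ are distinct, so $\pi(i_t)=o_t$ is a bijection between sets of size $s\le K$. By \cref{def:connector} there are $s$ pairwise vertex-disjoint paths, the $t$-th running from $i_t$ to $o_t$. Each pair is blocked, so each of these paths contains a vertex of $D$. The paths are disjoint, so these vertices are distinct, and hence $s\le|D|$.

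\textbf{Step 2: take a small vertex cover.} By König's theorem, $B$ has a vertex cover $C$ with $|C|\le|D|\le q$. Set
\[
  I^*=I\setminus C,\qquad O^*=O\setminus C .
\]
Then $|I^*|\ge K-|C\cap I|\ge K-q$, and likewise $|O^*|\ge K-q$.

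\textbf{Step 3: check the required properties.} Take any $i\in I^*$ and $o\in O^*$. Neither endpoint lies in the cover $C$, so $\{i,o\}$ is not an edge of $B$, that is, the pair is not blocked. Hence some $i$--$o$ path avoids $D$ entirely, including its endpoints. This gives two things at once. First, $i\notin D$ and $o\notin D$, so $I^*\subseteq I\setminus D$ and $O^*\subseteq O\setminus D$. Second, $i$ reaches $o$ in $M-D$. This is exactly \cref{def:st}.

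\textbf{Main obstacle.} The only delicate point is the bookkeeping that keeps both $|I^*|$ and $|O^*|$ at least $K-q$, rather than losing $|D|$ on each side twice over. Removing $D\cap I$ and a cover separately would cost too much. The fix is to count terminals of $D$ as blocked pairs inside $B$, so that one cover of size at most $|D|$ handles both the terminals in $D$ and the separated pairs. Step 1 still holds with this convention, because a path starting or ending in $D$ trivially meets $D$.
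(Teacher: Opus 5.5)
The paper does not prove this lemma; it imports it from Blocki and Cinkoske. Your argument therefore has to stand on its own, and it almost does. Step~1 is exactly the right use of \cref{def:connector}, and Step~2 is fine.

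The slip is in Step~3. You conclude $I^*\subseteq I\setminus D$ from a $D$-avoiding path from $i$ to some $o\in O^*$, and that needs $O^*\neq\varnothing$ (symmetrically, the claim for $O^*$ needs $I^*\neq\varnothing$). This fails in the degenerate case $|D|=q=K$. Take $D=I$. Then every pair is blocked, $B$ is complete bipartite, and $C=O$ is a minimum vertex cover with $|C|\le|D|$. Your recipe then returns $I^*=I=D$ and $O^*=\varnothing$, which violates $I^*\subseteq I\setminus D$.

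It is precisely the convention from your last paragraph that breaks here: the blocked pairs through deleted inputs can be covered entirely from the output side. The repair is one line. If $|C|<K$, then $|I^*|,|O^*|\ge K-|C|>0$ and Step~3 goes through as written. If $|C|=K$, then $K\le|D|\le q\le K$, so the required size $K-q$ is $0$ and you may take $I^*=O^*=\varnothing$.

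Two remarks on the route. First, K\"onig is more than you need. Since \cref{def:st} bounds the two sides separately, it suffices to take any maximal matching $\mathcal M$ of $B$ and delete its endpoints. By Step~1 this removes $|\mathcal M|\le|D|$ terminals from each side, and by maximality no blocked pair is left. Both sides shrink by the same amount, so they are empty or nonempty together, and the degenerate case never arises.

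Second, the minimum cover buys a conclusion stronger than maximal ST-robustness: the total loss is $|I\setminus I^*|+|O\setminus O^*|=|C|\le|D|$, rather than $|D|$ per side. In Step~1 of \cref{lem:pure-two} this would give at most $p$ bad terminals instead of $2p$. The degenerate case does no harm there, since a certified route has good terminals on both sides of each connector it crosses. The result would be $\E[\Cans]\ge p+(Kd-p)_+\ge Kd$ on $\Gk$, the same bound as \cref{lem:pure} gives on $\Hk$.
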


\begin{intuition}[What ST-robustness buys us.]
Think of $D$ as the pebbles a player keeps inside $M$ (the
\emph{checkpoints}). ST-robustness says: $q$ checkpoints together spoil at
most $q$ inputs and $q$ outputs (the terminals left out of the selection,
which need not be checkpoints themselves); between all the others, there
are still paths that avoid every checkpoint. So a small cache cannot
``summarize'' the network. (Deleting $D$ happens only in the analysis: the
player deletes nothing, and the paths we find simply avoid the vertices he
keeps.)
\end{intuition}

The connector we use is the classical back-to-back butterfly
(Bene\v{s} network~\cite{Benes1964}). Let $K=2^k$. A \emph{butterfly} has
layers $V_0,\dots,V_k$ of $K$ vertices each, indexed by binary addresses
$a\in\{0,1\}^k$; for each stage $j<k$, the vertex $a$ of layer $j$ has
edges to the vertices $a$ and $a\oplus 2^j$ of layer $j+1$ (keep or flip
bit $j$). The \emph{back-to-back butterfly} $\Bk$ glues a butterfly to a
mirrored copy that flips the bits in the reverse order; the two halves
share the middle layer (\cref{fig:butterfly}).

\begin{figure}[htbp]
\centering
\begin{tikzpicture}[x=1.55cm,y=0.75cm,font=\small,
    bv/.style={circle, draw, fill=white, inner sep=0pt, minimum size=3.6mm},
    be/.style={-{Latex[length=1.3mm,width=0.9mm]}, line width=0.5pt, shorten <=1.6mm, shorten >=1.6mm}]
  % stages with bit 0: 0->1 and 3->4 ; bit 1: 1->2 and 2->3
  \foreach \st in {0,3} {
    \foreach \a/\b in {0/1,1/0,2/3,3/2} {
      \draw[be] (\st,-\a) -- (\st+1,-\a);
      \draw[be] (\st,-\a) -- (\st+1,-\b);
    }
  }
  \foreach \st in {1,2} {
    \foreach \a/\b in {0/2,1/3,2/0,3/1} {
      \draw[be] (\st,-\a) -- (\st+1,-\a);
      \draw[be] (\st,-\a) -- (\st+1,-\b);
    }
  }
  \foreach \st in {0,...,4} \foreach \a in {0,...,3} \node[bv] at (\st,-\a) {};
  \foreach \a/\lab in {0/00,1/01,2/10,3/11} {
    \node[anchor=east] at (-0.18,-\a) {\scriptsize $\lab$};
    \node[anchor=west] at (4.18,-\a) {\scriptsize $\lab$};
  }
  \node at (0.5,0.75) {\scriptsize bit $0$};
  \node at (1.5,0.75) {\scriptsize bit $1$};
  \node at (2.5,0.75) {\scriptsize bit $1$};
  \node at (3.5,0.75) {\scriptsize bit $0$};
  \node[below, buy!85!black] at (0,-3.35) {inputs};
  \node[below, gray] at (2,-3.35) {middle layer};
  \node[below, ask!85!black] at (4,-3.35) {outputs};
\end{tikzpicture}

\caption[The back-to-back butterfly]{The back-to-back butterfly
$\mathcal B_4$. Every non-input vertex has two parents.}
\label{fig:butterfly}
\end{figure}

\begin{proposition}[{Butterfly connector~\cite{BlockiCinkoske2021}}]
\label{prop:butterfly}
For every power of two $K=2^k$, the back-to-back butterfly $\Bk$ is a
$K$-connector, hence maximally ST-robust, and
\[
  \indeg(\Bk)\le 2,
  \qquad
  |V(\Bk)| = (2k+1)K = \Theta(K\log K),
  \qquad
  \text{depth } \Delta_K = 2k = O(\log K),
\]
with $K$ vertices in each layer.
\end{proposition}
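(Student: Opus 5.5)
The structural claims are bookkeeping, the connector property is the classical rearrangeability of the Bene\v{s} network, and maximal ST-robustness then follows from \cref{lem:connector-st}. For the bookkeeping: $\Bk$ has layers $V_0,\dots,V_{2k}$, where the middle layer $V_k$ is shared by the two butterflies. That gives $2k+1$ layers of $K$ vertices each, so $|V(\Bk)|=(2k+1)K$ and the depth is $2k$. Every vertex $a$ of layer $j+1$ has exactly the two parents $a$ and $a\oplus 2^{b}$ in layer $j$, where $b$ is the bit flipped at that stage. Hence the indegree is $2$ off the input layer.

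For the connector property I first reduce to full permutations. Given $r\le K$, sets $I'$ and $O'$ of size $r$, and a bijection $\pi:I'\to O'$, I extend $\pi$ arbitrarily to a bijection $\sigma:I\to O$. Suppose I can route $\sigma$ by $K$ paths, one from each input, that meet every layer in pairwise distinct vertices. These paths are pairwise vertex-disjoint, because a vertex lies in a single layer. Keeping only the paths from $I'$ then handles $\pi$. So it suffices to show, by induction on $k$, that every permutation $\sigma$ of $\{0,1\}^k$ is realized by $K$ paths such that at each layer the map ``input $\mapsto$ vertex of its path'' is a bijection.

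The induction uses the recursive structure of $\Bk$. The first stage flips bit $0$, the last stage flips bit $0$, and the stages in between flip only bits $1,\dots,k-1$. So once we are past layer $1$ and before layer $2k-1$, bit $0$ of the address never changes. The vertices of layers $1,\dots,2k-1$ with bit $0$ equal to $c\in\{0,1\}$ therefore induce a copy of $\mathcal B_{K/2}$ on the remaining $k-1$ bits, for each $c$. The first stage consists of $2\times2$ switches $\{a,a\oplus1\}$. Routing through a switch means sending one of its inputs straight and the other crossed, so the two inputs $a,a\oplus1$ must enter different halves $c$. Symmetrically, the two outputs $o,o\oplus1$ of a last-stage switch must be reached from different halves.

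Choosing the halves is the looping argument, and it is the main step. Build a graph on the inputs with two perfect matchings: the first pairs $a$ with $a\oplus1$, and the second pairs $\sigma^{-1}(o)$ with $\sigma^{-1}(o\oplus1)$. Their union is a disjoint union of even alternating cycles, so it is bipartite. A $2$-colouring $c(\cdot)$ then sends partners in each matching to different halves. In each half, the $K/2$ inputs of colour $c$ arrive at distinct layer-$1$ vertices, namely the element of each switch with bit $0$ equal to $c$. They are to be delivered to distinct layer-$(2k-1)$ vertices of the same half, namely the element with bit $0$ equal to $c$ of the switch of the target output. This is a permutation of $\{0,1\}^{k-1}$, which the induction hypothesis routes inside $\mathcal B_{K/2}$ with layerwise bijections.

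The last stage then takes each path to its own output, and this is distinct across paths because the two paths entering a switch come from different halves. Layers $0$, $1$, the middle layers, and layers $2k-1$, $2k$ are all hit bijectively, which completes the induction; the base case $k=0$ is trivial. Hence $\Bk$ is a $K$-connector, and \cref{lem:connector-st} gives maximal ST-robustness. The only delicate point is checking that the colouring is consistent, which the even-cycle structure provides.
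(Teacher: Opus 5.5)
Your proof is correct. The paper gives no argument of its own here: it reads the layer count, size, depth and indegree off the definition, takes the connector property from Blocki and Cinkoske (the network being Bene\v{s}'s), and gets ST-robustness from \cref{lem:connector-st}.

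You instead prove the connector property from scratch with the classical looping argument. You extend the partial bijection to a full permutation. You strengthen the claim to routings that meet every layer bijectively, so vertex-disjointness comes for free. You peel off the two outer bit-$0$ stages to expose two copies of $\mathcal B_{K/2}$. Finally you $2$-colour the union of the input-switch and output-switch matchings to decide which copy each input uses.

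What your route buys is a self-contained check of the exact variant defined in the paper. There the two butterflies share a vertex layer and bit $k-1$ is flipped in two consecutive middle stages, which differs slightly from the most common presentation of the Bene\v{s} network, where the middle stage of switches is shared. Your recursion handles this variant cleanly: it bottoms out at $\mathcal B_1$, a single vertex, consistent with the paper's definition having $2k+1=1$ layer when $k=0$. The citation is shorter, but it leaves the reader to match this variant to the literature. Neither route reproves \cref{lem:connector-st}; both simply invoke it.
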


(A single one-way butterfly is \emph{not} enough: it has essentially one
route between each input and output, and one checkpoint can cut many of
them. The mirrored half is what gives the routing freedom.)

% ======================================================================
\section{Playing the game on a DAG}
\label{sec:dag}

We now make the memory game real. This section builds a DAG on which the
pebbling game of \cref{sec:models} reproduces the memory game: every rule of
\cref{def:game} becomes a property of the graph, the cost becomes the
cumulative cost, and both proofs of
\cref{sec:game} go through with small changes. We do not worry about the
indegree yet; that is \cref{sec:degree-two}.

\subsection{The DAG}
\label{sec:dag-def}

Now we build the DAG that turns the memory game into a computation. As
before, $G=([K],E)$ is an undirected graph with degrees $d_u$, average degree
$d = 2|E|/K$, and a partition of $E$ into $m$ induced matchings. $G$ itself
is \emph{not} the graph we compute on --- it is an index that tells us which
labels exist and which labels share an XOR. The computation happens on the
following DAG $\Hk$, which has three layers.

\begin{definition}[The DAG $\Hk$]
\label{def:dag}
\leavevmode
\begin{enumerate}[label=\textbf{\arabic*.}, leftmargin=*, itemsep=3pt]
  \item \textbf{Sources.} For every $u\in[K]$ a private block of $K$ sources
        \[
          A_u = \{a_{u,1},\dots,a_{u,K}\}.
        \]
  \item \textbf{Edge vertices.} For every \emph{oriented} edge $(u,v)$, i.e.\
        $\{u,v\}\in E$, a vertex $x_{u,v}$ whose parents are \emph{all} $K$
        sources of $A_u$:
        \[
          \Par(x_{u,v}) = A_u .
        \]
  \item \textbf{Outputs.} For every pair $(v,s)\in[K]\times[K]$ an output
        $y_{v,s}$ whose parents are the whole column $v$:
        \[
          \Par(y_{v,s}) = \{x_{u,v} : u\in N(v)\}.
        \]
\end{enumerate}
The challenge set is the set of outputs, $\Qk = \{y_{v,s}:(v,s)\in[K]^2\}$.
\end{definition}

So $\Hk$ has $K^2$ sources, $2|E| = Kd$ edge vertices and $K^2$ outputs:
\[
  |V(\Hk)| = 2K^2 + Kd .
\]
We write $X_{u,v}$ for the label of $x_{u,v}$ and group these labels exactly
as in the game:
\[
  \Out(u) = \{X_{u,v} : v\in N(u)\} \quad\text{(row $u$)},
  \qquad
  \In(v) = \{X_{u,v} : u\in N(v)\} \quad\text{(column $v$)}.
\]

\begin{figure}[htbp]
\centering
\begin{tikzpicture}[x=1cm,y=1cm,font=\small,
    src/.style={draw=buy!85!black, fill=buy!14, rounded corners=2pt,
                minimum width=1.9cm, minimum height=0.5cm, inner sep=2pt},
    edg/.style={circle, draw, fill=white, inner sep=1.2pt, minimum size=0.75cm},
    outp/.style={draw=ask!85!black, fill=ask!12, rounded corners=2pt,
                minimum width=1.0cm, minimum height=0.55cm, inner sep=2pt},
    arr/.style={-{Latex[length=1.6mm,width=1.1mm]}, line width=0.6pt}]
  \node[anchor=east, gray] at (-1.3,  2.6) {sources};
  \node[anchor=east, gray] at (-1.3,  0.8) {edge vertices};
  \node[anchor=east, gray] at (-1.3, -1.2) {outputs};
  \foreach \i/\u/\xx in {1/u_1/0, 2/u_2/3.2, 3/u_3/6.4} {
    \node[src] (A\i) at (\xx,2.6) {$A_{\u}$ \ ($K$ sources)};
    \node[edg] (x\i) at (\xx,0.8) {$x_{\u,v}$};
    \foreach \dx in {-0.75,-0.4,-0.05,0.3,0.65}
      \draw[arr, buy!70!black] (\xx+\dx,2.33) -- (x\i);
  }
  \node at (8.2,2.6) {$\cdots$};
  \foreach \s/\xx in {1/1.4, 2/3.2, K/5.0} {
    \node[outp] (y\s) at (\xx,-1.2) {$y_{v,\s}$};
    \foreach \i in {1,2,3} \draw[arr, ask!80!black] (x\i) -- (y\s);
  }
  \node at (4.1,-1.2) {$\cdots$};
  \draw[dashed, gray, rounded corners=4pt] (-0.75,0.32) rectangle (7.15,1.28);
  \node[anchor=west, gray, font=\scriptsize, align=left] at (7.25,0.8)
    {column $\In(v)$:\\ one vertex per\\ neighbour $u\in N(v)$};
\end{tikzpicture}

\caption[The DAG around a challenge]{The part of $\Hk$ that
matters for a challenge $y_{v,s}$. Each $x_{u,v}$ needs the whole source
block $A_u$; each output of column $v$ needs the whole column.}
\label{fig:hk}
\end{figure}

How does this reproduce the rules of the game? Look at what it takes to
compute a label $X_{u,v}$ (\cref{fig:hk}): you need all of $A_u$ at hand.
But once you have $A_u$, you can compute \emph{every} $x_{u,v'}$ in row $u$
at the same time. So a row is bought as a whole, and its price is $K$ ---
the $K$ sources of $A_u$ you have to hold. A stored source of $A_u$ is a
``share unit'' of row $u$: it is one of the $K$ sources you need for the
row, it costs one unit, and it helps with row $u$ only. The whole
dictionary:

\begin{center}
\small
\begin{tabular}{@{}ll@{}}
  \toprule
  memory game & DAG $\Hk$ \\
  \midrule
  row $\Out(u)$ for price $K$ & source block $A_u$, the $K$ parents of every $x_{u,\cdot}$ \\
  share unit for $u$ & a stored source $a_{u,i}\in A_u$ \\
  stored label $x_{u,v}$ & stored label $X_{u,v}$ of the edge vertex $x_{u,v}$ \\
  query $\In(v)$ & challenge $y_{v,s}$, whose parents are the column $\In(v)$ \\
  XOR $z_j$ of a matching & stored XOR of the $2|M_j|$ labels of $M_j$ \\
  \bottomrule
\end{tabular}
\end{center}

\begin{intuition}[Why $K$ outputs per column?]
In the game the answer to a query was the column itself, $d_v$ labels. If
we put just one output $y_v$ on top of every column, the pure player would
simply compute all $K$ outputs in advance, keep them, and answer every
challenge at price $K$ --- much less than $Kd$ (this is \eqref{eq:trivial}
with $|\mathcal Q| = K$). With $K$ different outputs
$y_{v,1},\dots,y_{v,K}$ per column (same parents, different identifiers, so
different labels) there are $K^2$ possible answers, and caching them all
costs $K^2\ge Kd$. The extra index $s$ does nothing else: whatever you do
for $y_{v,s}$, you do for column $v$.
\end{intuition}

% ======================================================================
\subsection{Pure player: still pays \texorpdfstring{$Kd$}{Kd}}
\label{sec:pure}

In the game, the pure lower bound came from a simple fact: from every row
you buy, you use only one label. In the DAG the same fact appears through
\cref{lem:ancestors}: every source that leads to the challenge along a path
avoiding the cache must be recomputed.

\begin{lemma}[Pure lower bound on $\Hk$]
\label{lem:pure}
Let $P$ be any pure prepared state on $\Hk$, $p = |P|$. For a uniform
challenge $y_{v,s}\in\Qk$, every pure strategy satisfies
\[
  \E[\Cans] \ \ge\ p + (Kd - p)_+ \ \ge\ Kd,
  \qquad (z)_+ := \max\{z,0\}.
\]
\end{lemma}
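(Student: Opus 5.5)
The plan is to bound the two parts of $\Cans$ separately. The first summand $|P_0| = p$ is always paid, since $\tau\ge1$. The remaining part $\sum_{j=1}^{\tau-1}|P_j|$ is bounded below, challenge by challenge, with \cref{lem:ancestors}, and the bounds are then averaged exactly as in the proof of \cref{thm:game-pure}.

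\textbf{Bookkeeping.} As in the game, for each $u\in[K]$ let $s_u$ be the number of stored edge vertices of row $u$ (the $x_{u,\cdot}\in P$) and $l_u=|A_u\cap P|$ the number of stored sources of block $u$. Let $c_v$ be the number of stored outputs $y_{v,\cdot}$ and $t=\sum_v c_v$. Then
\[
  p=\sum_u(s_u+l_u)+t .
\]
The extra part of the cost is nonnegative, so it suffices to show that its expectation is at least $Kd-p$.

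\textbf{Forced ancestors per challenge.} Fix a challenge $y_{v,s}\notin P$. For every $u\in N(v)$ with $x_{u,v}\notin P$, the vertex $x_{u,v}$ and every source in $A_u\setminus P$ reach $y_{v,s}$ along a path avoiding $P$: either the single edge $x_{u,v}\to y_{v,s}$, or $a\to x_{u,v}\to y_{v,s}$. These sets are disjoint for different $u$, because the source blocks are private. So \cref{lem:ancestors} gives
\[
  \sum_{j\ge1}^{\tau-1}|P_j| \;\ge\; F(v):=\sum_{\substack{u\in N(v)\\ x_{u,v}\notin P}}(K+1-l_u).
\]
If $y_{v,s}\in P$ we only use the bound $0$. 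Hence
\[
  \E[\text{extra}]\;\ge\;\frac1{K^2}\sum_v (K-c_v)F(v).
\]
Since $F(v)\le d_v(K+1)\le (K-1)(K+1)<K^2$, the outputs that were cached lose at most $\frac1{K^2}\sum_v c_vF(v)\le t$.

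\textbf{Averaging by rows.} Count from the side of the rows, as in \cref{thm:game-pure}. Row $u$ contributes to $F(v)$ for exactly $d_u-s_u$ of the columns $v$, which gives
\[
  \frac1K\sum_v F(v)\;=\;\frac1K\sum_u (d_u-s_u)(K+1-l_u)\;\ge\;\sum_u (d_u-s_u)\Bigl(1-\frac{l_u}{K}\Bigr)\;\ge\;\sum_u(d_u-s_u-l_u).
\]
The last step uses $d_u-s_u\le K$. Combining the two estimates,
\[
  \E[\text{extra}]\;\ge\; Kd-\sum_u(s_u+l_u)-t \;=\; Kd-p,
\]
which proves the first inequality. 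The second inequality, $p+(Kd-p)_+\ge Kd$, is immediate.

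The main obstacle is the cached outputs: a cached challenge costs nothing extra, so it breaks the clean row-counting. The point is that each cached output is charged its own unit in $p$, and it saves at most $F(v)/K^2<1$ in expectation, so it never pays for itself.
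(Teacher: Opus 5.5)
Your proof is correct and follows essentially the paper's route. You apply \cref{lem:ancestors} to the vertices on $P$-avoiding paths $a\to x_{u,v}\to y_{v,s}$ and then average, so that each cached vertex is charged at most one unit of expected saving. The differences are cosmetic: you also count $x_{u,v}$ itself (a $+1$ per row that you later discard), and you do the averaging as the exact row count of \cref{thm:game-pure}, treating cached outputs separately, where the paper uses a per-cached-vertex table of expected certificate losses.
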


\begin{proof}
Fix the challenge $y_{v,s}$. Its \emph{candidate} sources are the sources
of the blocks $A_u$, $u\in N(v)$; there are $K d_v$ of them. Call a
candidate $a\in A_u$ \emph{certified} if the whole path
\[
  a \longrightarrow x_{u,v} \longrightarrow y_{v,s}
\]
avoids $P$, and let $U(v,s)$ be the set of certified sources. By
\cref{lem:ancestors},
\begin{equation}
\label{eq:pathwise}
  \Cans = p + \sum_{j=1}^{\tau-1}|P_j| \ \ge\ p + |U(v,s)| .
\end{equation}
(If $y_{v,s}\in P$, then $U(v,s)=\varnothing$ and \eqref{eq:pathwise} just
says $\Cans\ge p$, which is true since $\tau\ge1$.)

Now average over the challenge. On average there are
$\E_v[Kd_v] = Kd$ candidates, and a candidate loses its certificate only if
some vertex on its path is cached. Let's see how many certificates one
cached vertex can destroy, on average over the uniform $(v,s)$ (primes mark
the indices of the cached vertex, to keep them apart from the challenge):

\begin{center}
\small
\begin{tabular}{@{}llll@{}}
  \toprule
  cached vertex & candidates it kills & when? & expected loss \\
  \midrule
  source $a\in A_u$ \ (``share unit'') & $1$ (itself) & $v\in N(u)$: prob.\ $d_u/K$ & $\le 1$ \\
  edge vertex $x_{u,v'}$ \ (``label'') & $K$ (all of $A_u$) & column is $v'$: prob.\ $1/K$ & $1$ \\
  output $y_{v',s'}$ & $Kd_{v'}$ (all) & challenge is $y_{v',s'}$: prob.\ $1/K^2$ & $d_{v'}/K\le 1$ \\
  \bottomrule
\end{tabular}
\end{center}

So each of the $p$ cached vertices destroys at most one certificate in
expectation (summing over cached vertices may count a lost certificate
twice, which only helps). Hence $\E[|U(v,s)|]\ge Kd - p$, and since
$|U(v,s)|\ge 0$ also $\E[|U(v,s)|]\ge (Kd-p)_+$. Taking expectations in
\eqref{eq:pathwise} gives the first inequality; the second one is checked
separately for $p\le Kd$ and $p\ge Kd$. Nothing was assumed about how the
player answers, so this holds for randomized strategies too.
\end{proof}

\begin{intuition}[It's the game's proof.]
Look at the first two rows of the table: these are exactly the two facts
behind \cref{thm:game-pure}. A share unit (cached source) saves at most one
unit, and only when its row is needed; a stored label saves $K$, but only
for one query column out of $K$, so it at best pays for itself. The DAG
adds only the third row --- caching an answer --- and the $K$ outputs per
column make sure it doesn't help either.
\end{intuition}

The bound is tight: prepare all $Kd$ edge vertices; every output has all
its parents there, so one move answers, at cost $Kd$. Therefore, for
\emph{every} graph $G$,
\begin{equation}
\label{eq:pure-exact}
  \CPC_{\mathrm{pure}}(\Hk,\Qk) \;=\; Kd \;=\; 2|E| .
\end{equation}
Same answer as in the memory game.

% ======================================================================
\subsection{Entangled player: pays for the non-neighbours}
\label{sec:entangled}

\subsubsection{The prepared state}

For each matching $M_j$ let
\[
  \mathcal S_j = \{x_{a,b},\,x_{b,a} : \{a,b\}\in M_j\}
\]
be the set of its $2|M_j|$ edge vertices, and let the prepared state be one
XOR pebble per matching:
\[
  \Sx = (\varnothing,\ \{\mathcal S_1,\dots,\mathcal S_m\}),
  \qquad
  \wt(\Sx) = m,
  \qquad
  \text{the stored strings are } Z_j = \bigoplus_{z\in\mathcal S_j} X_z .
\]
This is the $z_j$ of the game. $\Sx$ does not depend on the challenge, and
it can be prepared legally: pebble all sources, then all edge vertices, then
place the $m$ XOR pebbles and remove everything else in one move.
Every $\mathcal S_j$ has $2|M_j|\ge2$ elements (blocks of a partition are
nonempty), as \cref{def:entangled} requires. In the strategies of
\cref{lem:xor,lem:xor-two} and the entangled strategy of
\cref{thm:repeated}, including the preparation of $\Sx$, these $m$ XOR
pebbles are created once and kept unchanged, no other
XOR pebble is ever stored, and every other pebble is pure. So when the
matchings have equal size $\mu$, as in \cref{prop:ams}, these strategies
are common-arity of arity $2\mu$ (\cref{def:common-arity}).

\subsubsection{Decoding a column}

For a column $v$, let $H_v = [K]\setminus N(v)$ be the non-neighbours of
$v$ together with $v$ itself, and let
\[
  A(H_v) = \bigcup_{h\in H_v} A_h ,
  \qquad
  D_v = \{x_{h,z} : h\in H_v,\ z\in N(h)\}
\]
be their source blocks and their rows. $|A(H_v)| = K|H_v|$ and
$|D_v| = \sum_{h\in H_v} d_h \le K|H_v|$.

The picture is \cref{fig:matching}: we recompute the rows of
$H_v$, shaded there.

\begin{lemma}[One unknown per XOR]
\label{lem:decode}
For every $v\in[K]$, every label of column $v$ is recoverable from the
prepared XORs together with the rows of $H_v$:
\[
  \{x_{u,v} : u\in N(v)\} \ \subseteq\ \cl\bigl(\Sx\ \cup\ \text{pure pebbles on } D_v\bigr).
\]
\end{lemma}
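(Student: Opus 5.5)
The plan is to transport the decoding identity \eqref{eq:game-decode} from the proof of \cref{thm:game-xor} into the closure language of \cref{def:entangled}. Write $\mathcal P=\Sx\cup(\text{pure pebbles on }D_v)$, so its pure part is $W=D_v$ and its XOR pebbles are $\mathcal S_1,\dots,\mathcal S_m$. Since $\cl(\mathcal P)\supseteq W=D_v$, it suffices to show that for each $u\in N(v)$ the single XOR pebble $\mathcal S_j$ with $\{u,v\}\in M_j$ has exactly one element outside $D_v$, namely $x_{u,v}$. The closure rule ($|\mathcal S_j\setminus C|\le1\Rightarrow\mathcal S_j\subseteq C$) then puts $x_{u,v}$ into $\cl(\mathcal P)$ after one application, starting from $C\supseteq D_v$. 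No iteration of the closure is needed: each label of the column is recovered directly from $D_v$ and its own XOR.

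The key step is to verify $\mathcal S_j\setminus D_v\subseteq\{x_{u,v}\}$ by going through the elements of $\mathcal S_j$, exactly as in the bullet list of the game proof.
\begin{enumerate}[label=(\roman*), itemsep=1pt]
  \item The element $x_{v,u}$ lies in $D_v$: $v\in H_v$ because $G$ is loopless, so $v\notin N(v)$, and $u\in N(v)$.
  \item For any other edge $\{a,b\}\in M_j$, $\{a,b\}\neq\{u,v\}$, the matching property gives $a,b\notin\{u,v\}$. Inducedness of $M_j$ then gives $a,b\notin N(v)$, since an edge $va$ or $vb$ would be an edge of $G$ between vertices of $M_j$ that is not in $M_j$ (an edge of $M_j$ at $v$ other than $\{u,v\}$ is impossible). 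Hence $a,b\in H_v$, and $x_{a,b},x_{b,a}\in D_v$ because $b\in N(a)$ and $a\in N(b)$.
\end{enumerate}
So the only possible element of $\mathcal S_j$ outside $D_v$ is $x_{u,v}$, and the closure rule applies.

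There is no real obstacle here; the only care needed is bookkeeping in translating between the index graph $G$ and the DAG $\Hk$. In particular, $\mathcal S_j$ is a \emph{set} of edge vertices, so the edge $\{u,v\}$ contributes $x_{u,v}$ and $x_{v,u}$ once each and nothing cancels. Also, $D_v$ was defined using $z\in N(h)$, so membership of $x_{a,b}$ in $D_v$ needs exactly $a\in H_v$ and $b\in N(a)$, which we have.
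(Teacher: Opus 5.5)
Your proof is correct and is essentially the paper's argument: both show that the XOR pebble $\mathcal S_j$ of the matching containing $\{u,v\}$ has no element outside $D_v$ other than $x_{u,v}$, using looplessness for $x_{v,u}$ and the matching and inducedness properties for the other edges of $M_j$, and then apply the closure rule once. The paper also notes that $x_{u,v}\notin D_v$ (since $u\notin H_v$), which justifies the ``exactly one'' in your first paragraph; but, as your argument uses, the closure rule only needs at most one element outside $D_v$.
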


\begin{proof}
This is the argument of \cref{thm:game-xor}, now phrased with closures.
Take $u\in N(v)$ and the matching $M_j\ni\{u,v\}$. Then
\begin{equation}
\label{eq:decode}
  X_{u,v} = Z_j \oplus X_{v,u} \oplus
  \bigoplus_{\substack{\{a,b\}\in M_j\\ \{a,b\}\ne\{u,v\}}} (X_{a,b}\oplus X_{b,a}) .
\end{equation}
We check that every label on the right except $Z_j$ is in $D_v$.
\begin{itemize}[itemsep=2pt]
  \item $x_{v,u}\in D_v$, because $v\in H_v$ ($G$ has no loops, so
        $v\notin N(v)$).
  \item Take another edge $\{a,b\}\in M_j$. It does not touch $u$ or $v$
        ($M_j$ is a matching), and since $M_j$ is \emph{induced}, $G$ has no
        edge between $v$ and $a$, nor between $v$ and $b$. So
        $a,b\in H_v$, and both $x_{a,b}$ and $x_{b,a}$ are in $D_v$.
\end{itemize}
The target $x_{u,v}$ itself is not in $D_v$, since $u\notin H_v$. So the
XOR pebble $\mathcal S_j$ has exactly one element outside $D_v$, and the
closure rule recovers it. Different neighbours $u$ of $v$ lie in different
matchings (two edges at $v$ cannot be in one matching), so every target is
recovered from its own XOR.
\end{proof}

Note that the lemma talks only about the index structure ($G$, the
matchings, and which edge vertices are pebbled), not about how the edge
vertices are computed. We will use it again, unchanged, in
\cref{sec:degree-two}.

\subsubsection{The schedule}

\begin{lemma}[Entangled upper bound on $\Hk$]
\label{lem:xor}
There is a legal strategy that starts in $\Sx$, answers every challenge
$y_{v,s}$ and returns to exactly $\Sx$ in four moves, with
\[
  \Cret(v,s) \;=\; 4m + K|H_v| + \sum_{h\in H_v} d_h + 1
  \;\le\; 4m + 2K|H_v| + 1 .
\]
On average over the challenge,
\[
  \E[\Cret] \ \le\ 4m + 2K(K-d) + 1 .
\]
\end{lemma}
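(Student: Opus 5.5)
The strategy is the game's matching strategy, laid out as an explicit schedule of four moves. It holds the $m$ XOR pebbles of $\Sx$ throughout and walks through the three layers of $\Hk$ on the pure side. Fix the challenge $y_{v,s}$. I would take
\[
  \mathcal P_0=\Sx,\quad
  \mathcal P_1=\bigl(A(H_v),\ \{\mathcal S_1,\dots,\mathcal S_m\}\bigr),\quad
  \mathcal P_2=\bigl(D_v,\ \{\mathcal S_1,\dots,\mathcal S_m\}\bigr),
\]
\[
  \mathcal P_3=\bigl(\{y_{v,s}\},\ \{\mathcal S_1,\dots,\mathcal S_m\}\bigr),\quad
  \mathcal P_4=\Sx .
\]
In words: buy the source blocks of $H_v$, compute their rows and drop the sources, compute the answer and drop the rows, then drop the answer.

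Next I would check each move against \cref{def:entangled}. The XOR pebbles are only kept, never newly placed, so only the new pure pebbles need checking.
\begin{itemize}[itemsep=2pt]
  \item Move $0\to1$ places sources. These have no parents, so they are always available.
  \item Move $1\to2$ places each $x_{h,z}\in D_v$. Its parents form $A_h\subseteq A(H_v)$, and $A(H_v)\subseteq\cl(\mathcal P_1)$. The sources may be removed in the same move.
  \item Move $2\to3$ is the only nontrivial step. By \cref{lem:decode}, every parent $x_{u,v}$, $u\in N(v)$, of $y_{v,s}$ lies in $\cl(\mathcal P_2)$. Hence $y_{v,s}\in\avail(\mathcal P_2)$, even though these parents are only recoverable and not stored. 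This is exactly the ``computing from recoverable parents'' move that \cref{rem:ackkpt} flags.
  \item Move $3\to4$ only removes a pebble.
\end{itemize}
The answer time is $\tau=3$. The output $y_{v,s}$ is never in an XOR pebble and is not pure in $\mathcal P_0,\mathcal P_1,\mathcal P_2$, so $y_{v,s}\notin\cl(\mathcal P_j)$ for $j<3$, while $y_{v,s}\in\cl(\mathcal P_3)$. The strategy returns with $T=4$.

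The cost follows directly from \cref{def:return}. It is $\sum_{j=0}^{3}\wt(\mathcal P_j)$, where
\[
  \wt(\mathcal P_0)=m,\quad
  \wt(\mathcal P_1)=m+K|H_v|,\quad
  \wt(\mathcal P_2)=m+|D_v|=m+\sum_{h\in H_v}d_h,\quad
  \wt(\mathcal P_3)=m+1 .
\]
Summing gives the stated equality for $\Cret(v,s)$. The inequality uses $d_h\le K$. For the average, note that $|H_v|=K-d_v$, since $v\notin N(v)$. Averaging over the uniform $v$ (the index $s$ plays no role) gives $\E|H_v|=K-d$, and therefore $\E[\Cret]\le 4m+2K(K-d)+1$.

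The main obstacle is the move $2\to3$, since legality there rests entirely on \cref{lem:decode} and on the rule that availability is measured by the closure of the configuration moved from. Everything else is bookkeeping.
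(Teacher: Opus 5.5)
Your proposal is correct and is essentially the paper's proof: the same four configurations ($\Sx$, then $\Sx$ plus $A(H_v)$, then $\Sx$ plus $D_v$, then $\Sx$ plus $y_{v,s}$). As in the paper, legality of the move $2\to3$ rests on \cref{lem:decode}, and the cost comes from summing the four weights and using $d_h\le K$ and $\E_v|H_v|=K-d$; your explicit check that $y_{v,s}\notin\cl(\mathcal P_j)$ for $j<3$ is a small detail the paper leaves implicit.
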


\begin{proof}
The configurations are (the XOR pebbles of $\Sx$ are kept all the
time, and no other XOR pebble is ever placed):

\begin{center}
\small
\begin{tabular}{@{}cll@{}}
  \toprule
  round & configuration & weight (paid) \\
  \midrule
  $0$ & $\Sx$ & $m$ \\
  $1$ & $\Sx$ + sources $A(H_v)$ & $m + K|H_v|$ \\
  $2$ & $\Sx$ + rows $D_v$ & $m + \sum_{h\in H_v} d_h$ \\
  $3$ & $\Sx$ + the answer $y_{v,s}$ & $m + 1$ \\
  $4$ & $\Sx$ & --- (stop) \\
  \bottomrule
\end{tabular}
\end{center}

Move $0\to1$ pebbles sources, which is always legal. In move $1\to2$ every
vertex of $D_v$ has its parents (a block $A_h$, $h\in H_v$) pebbled, so the
rows can be computed and the sources removed. By \cref{lem:decode}, every
parent of $y_{v,s}$ is in the closure of configuration $2$, so move $2\to3$
can compute the answer and remove the rows; we do not need to materialize
the column. The answer is delivered at $\tau=3$, and move $3\to4$ removes
it. Summing the weights of rounds $0$--$3$ gives the formula; the
inequality uses $d_h\le K$. Finally $\E_v|H_v| = K - d$.
\end{proof}

Compare with \cref{thm:game-xor}, where the entangled player paid $m + K(K-d)$. The
factor $4$ in front of $m$ is the cumulative cost: the XORs are held for
four rounds. The factor $2$ in front of $K(K-d)$ is because buying a row now
happens in two steps --- first we hold the sources, then the row itself.

\begin{intuition}[Where did the warm-up go?]
In the game (the warm-up of \cref{sec:game}), one XOR per edge, $x_{u,v}\oplus x_{v,u}$, already beat the
pure player by a factor of about $2$. In the DAG game it doesn't: the $|E| =
Kd/2$ stored XORs are paid in every round, and answering takes three rounds
(sources $A_v$, row $\Out(v)$, answer), so $\Cans \ge 3\cdot Kd/2 > Kd$.
Constant-factor tricks get eaten by the cumulative cost. What survives is
the matching trick, where the stored state $m$ is \emph{polynomially}
smaller than what the pure player pays.
\end{intuition}

% ======================================================================
\subsection{The separation at high indegree}
\label{sec:separation}

Putting \cref{lem:pure,lem:xor} side by side:
\[
  \text{pure: } \ \E[\Cans]\ \ge\ Kd ,
  \qquad\qquad
  \text{entangled: } \ \E[\Cret]\ \le\ 4m + 2K(K-d) + 1 .
\]
Plugging in the AMS graphs of \cref{prop:ams}, exactly as in
\cref{cor:game-sep}:

\begin{theorem}[Separation on $\Hk$]
\label{thm:sep}
For the graphs of \cref{prop:ams}, the DAG $\Hk$ has
$N = |V(\Hk)| = \Theta(K^2)$ vertices and, for a uniformly random challenge,
\[
  \CPC_{\mathrm{pure}}(\Hk,\Qk) = Kd = \Theta(N),
  \qquad
  \CPC^{\mathrm{ret}}_{\mathrm{ent}}(\Hk,\Qk) = O(K^{2-2\eta}) = O(N^{1-\eta}) .
\]
So the entangled player wins by a polynomial factor $\Omega(N^{\eta})$,
even though he also restores his state and the pure player does not.
The entangled bound is achieved by a common-arity strategy
(\cref{def:common-arity}) of arity $2\mu=Kd/m=\Omega(K^{2\eta})$.
\end{theorem}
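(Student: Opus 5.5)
The plan is to combine the two bounds already proved on $\Hk$, namely \cref{lem:pure} with \eqref{eq:pure-exact} and \cref{lem:xor}, and then substitute the parameters of \cref{prop:ams} through \eqref{eq:ams-consequences}.

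\textbf{Size of the DAG.} By \cref{def:dag},
\[
  N = |V(\Hk)| = 2K^2 + Kd .
\]
Since \eqref{eq:ams-consequences} gives $d=\Theta(K)$, we get $N=\Theta(K^2)$. Equivalently $K=\Theta(N^{1/2})$, so that $K^{2-2\eta}=\Theta(N^{1-\eta})$.

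\textbf{Pure side.} \eqref{eq:pure-exact} already states $\CPC_{\mathrm{pure}}(\Hk,\Qk)=Kd$ for every graph. The lower bound is \cref{lem:pure}, taking the infimum over the prepared state $P$. The matching upper bound comes from preparing all edge vertices and answering in one move. With $d=\Theta(K)$ this gives $Kd=\Theta(K^2)=\Theta(N)$.

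\textbf{Entangled side.} By \cref{lem:xor}, the state $\Sx$ with its four-move returning strategy gives
\[
  \CPC^{\mathrm{ret}}_{\mathrm{ent}}(\Hk,\Qk)\le 4m+2K(K-d)+1 .
\]
\cref{prop:ams} gives $m\le K^{2-2\eta}$, and \eqref{eq:ams-consequences} gives $K(K-d)=K+2M_{\mathrm{miss}}=O(K^{2-2\eta})$. The total is therefore $O(K^{2-2\eta})=O(N^{1-\eta})$. Dividing the two bounds yields the gap $\Omega(N^{\eta})$. The remark that the entangled player also returns to his state while the pure player only answers is immediate, because $\Cret\ge\Cans$ and the pure bound only concerns $\Cans$.

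\textbf{Common arity.} The discussion after the definition of $\Sx$ shows that the strategy of \cref{lem:xor} is common-arity of arity $2\mu$, because the matchings of \cref{prop:ams} have equal size $\mu$. It remains to compute the arity:
\[
  2\mu=\frac{2|E|}{m}=\frac{Kd}{m}\ \ge\ \frac{\Theta(K^2)}{K^{2-2\eta}}=\Omega(K^{2\eta}).
\]

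\textbf{Main obstacle.} There is no real obstacle, since every ingredient is already in place. The only point needing care is the order-of-magnitude bookkeeping: $d=\Theta(K)$ requires $2M_{\mathrm{miss}}/K=o(K)$, which holds because $M_{\mathrm{miss}}\le K^{2-2\eta}$. This single fact is what makes both $N=\Theta(K^2)$ and $Kd=\Theta(N)$ hold simultaneously.
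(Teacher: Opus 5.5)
Your proposal is correct and takes the same route as the paper. You combine \eqref{eq:pure-exact} (that is, \cref{lem:pure} together with the strategy that caches all edge vertices) with \cref{lem:xor}, substitute the bounds of \cref{prop:ams} through \eqref{eq:ams-consequences}, and read off common arity from the discussion after the definition of $\Sx$; your explicit computation $2\mu = Kd/m = \Omega(K^{2\eta})$ simply spells out a step the paper leaves implicit.
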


\begin{proof}
$N = 2K^2 + Kd = \Theta(K^2)$ by \eqref{eq:ams-consequences}. The pure
cost is \eqref{eq:pure-exact}; the entangled one is \cref{lem:xor} with
\eqref{eq:ams-consequences}.
\end{proof}

Note that the pure bound is as strong as it gets: by \eqref{eq:trivial},
$\CPC_{\mathrm{pure}}\le|\Qk| = K^2$, and we have $Kd=\Theta(K^2)$.

% ======================================================================
\section{Bringing the indegree down to two}
\label{sec:degree-two}

$\Hk$ has vertices with $K$ parents: every $x_{u,v}$ reads all of $A_u$, and
every $y_{v,s}$ reads the whole column. Real constructions use indegree
two, and in the pebbling world a large indegree is a kind of cheating ---
one round of computation on $K$ parents hides a lot of work. In this section
we realize the same dependencies with indegree two, and show that both
bounds survive, up to a logarithmic factor.

\subsection{Why not just binary trees?}
\label{sec:why-not-trees}

The obvious idea is to replace each big fan-in by a binary tree. That does
not work, and seeing why tells us what we need.

Suppose row $u$ is computed by one binary tree over $A_u$, shared by the
whole row (\cref{fig:tree}): it ends in a single vertex $t_u$, and every
$x_{u,v}$ is computed from $t_u$. Then the pure player stores just
$t_1,\dots,t_K$ and computes any column from them in one round. All the
work of the source blocks has been ``summarized'' by $K$ checkpoints, and
the $\Theta(K^2)$ lower bound collapses. The same happens on the output
side: if the $K$ outputs of column $v$ are all computed from one vertex
$c_v$, storing $c_1,\dots,c_K$ makes every challenge cheap.

\begin{figure}[htbp]
\centering
\begin{tikzpicture}[x=0.75cm,y=0.75cm,font=\small,
    tv/.style={circle, draw, fill=white, inner sep=0pt, minimum size=3.2mm},
    te/.style={-{Latex[length=1.2mm,width=0.8mm]}, line width=0.5pt, shorten <=1.2mm, shorten >=1.2mm}]
  \foreach \i in {0,...,7} \node[tv, fill=buy!14] (s\i) at (\i,0) {};
  \foreach \i in {0,...,3} {
    \node[tv] (m\i) at (2*\i+0.5,1) {};
    \pgfmathtruncatemacro{\a}{2*\i}\pgfmathtruncatemacro{\b}{2*\i+1}
    \draw[te] (s\a) -- (m\i); \draw[te] (s\b) -- (m\i);
  }
  \node[tv] (n0) at (1.5,2) {}; \node[tv] (n1) at (5.5,2) {};
  \draw[te] (m0) -- (n0); \draw[te] (m1) -- (n0);
  \draw[te] (m2) -- (n1); \draw[te] (m3) -- (n1);
  \node[tv, fill=gray!30, minimum size=4.2mm] (t) at (3.5,3) {};
  \draw[te] (n0) -- (t); \draw[te] (n1) -- (t);
  \node[cross out, draw=red!70!black, line width=0.9pt, minimum size=2.6mm] at (t) {};
  \node[right=3pt] at (t) {$t_u$ \ (one checkpoint)};
  \foreach \j/\xx in {1/1.5, 2/3.5, 3/5.5} {
    \node[tv] (x\j) at (\xx,4.3) {};
    \draw[te] (t) -- (x\j);
  }
  \node[above] at (3.5,4.45) {$x_{u,v_1},\ x_{u,v_2},\ \dots$};
  \node[below, buy!85!black] at (3.5,-0.25) {$A_u$};
\end{tikzpicture}

\caption[A binary tree over a source block]{A binary tree over
$A_u$: one stored vertex $t_u$ replaces the whole block.}
\label{fig:tree}
\end{figure}

So we need a network in which a few stored vertices \emph{cannot}
intercept many input--output connections. That is exactly maximal
ST-robustness (\cref{def:st}), and the butterfly $\Bk$ of
\cref{prop:butterfly} has it with indegree two and depth $O(\log K)$. We
need it on \emph{both} sides: between the source blocks and the rows, and
between the columns and the outputs.

\subsection{The construction}
\label{sec:construction}

Let $K$ be a power of two and $G=([K],E)$ as before.

\begin{definition}[The DAG $\Gk$]
\label{def:gk}
\leavevmode
\begin{enumerate}[label=\textbf{\arabic*.}, leftmargin=*, itemsep=3pt]
  \item \textbf{Row connectors.} For every $u\in[K]$ a fresh copy $R_u$ of
        $\Bk$. Its input terminals are the sources
        $A_u = \{a_{u,1},\dots,a_{u,K}\}$, its output terminals are
        $r_{u,1},\dots,r_{u,K}$.
  \item \textbf{Edge vertices.} For every oriented edge $(u,v)$ a vertex
        $x_{u,v}$ with the single parent $r_{u,v}$:
        \[
          r_{u,v}\longrightarrow x_{u,v} .
        \]
  \item \textbf{Column connectors.} For every $v\in[K]$ a fresh copy $C_v$ of
        $\Bk$. Its $K$ input terminals are indexed by $u\in[K]$: for
        $u\in N(v)$ the input terminal \emph{is} the edge vertex $x_{u,v}$;
        for $u\notin N(v)$ it is a new source $\hat x_{u,v}$ (a dummy: the
        connector has $K$ inputs, but column $v$ has only $d_v$ labels). Its
        output terminals are the outputs $y_{v,1},\dots,y_{v,K}$.
\end{enumerate}
All copies are vertex-disjoint. The challenge set is again
$\Qk = \{y_{v,s} : (v,s)\in[K]^2\}$; we call its elements \emph{challenge
outputs}, to tell them apart from the outputs $r_{u,v}$ of the row
connectors.
\end{definition}

So the route from a source to a challenge is now
\[
  a\in A_u \ \rightsquigarrow\ r_{u,v} \ \longrightarrow\ x_{u,v} \ \rightsquigarrow\ y_{v,s},
\]
where $\rightsquigarrow$ is a path inside a connector ($R_u$ resp.\ $C_v$;
see \cref{fig:gk}).
The vertices $r_{u,v}$ and $x_{u,v}$ are kept distinct on purpose: $r_{u,v}$
belongs to the row connector $R_u$, $x_{u,v}$ to the column connector $C_v$,
so every vertex of $\Gk$ belongs to exactly one connector. This will let us
split the pure player's cache between connectors without counting anything
twice. (Outputs $r_{u,v}$ with $v\notin N(u)$ simply have no child.)

\begin{figure}[htbp]
\centering
\begin{tikzpicture}[x=1cm,y=1cm,font=\small,
    conn/.style={draw, rounded corners=3pt, fill=gray!8, minimum height=0.95cm},
    term/.style={draw, fill=white, rounded corners=1.5pt, inner sep=2pt, minimum height=0.42cm},
    arr/.style={-{Latex[length=1.6mm,width=1.1mm]}, line width=0.65pt}]
  % row connectors
  \foreach \j/\yy in {1/3.0, 2/1.6, 3/0.2} {
    \node[conn, minimum width=4.2cm] (R\j) at (2.3,\yy) {};
    \node[term, fill=buy!14, draw=buy!85!black] (A\j) at (0.75,\yy) {$A_{u_\j}$};
    \node[font=\scriptsize, gray] at (2.3,\yy+0.24) {$R_{u_\j}\cong\Bk$};
    \node[term] (r\j) at (3.85,\yy) {$r_{u_\j,v}$};
    \draw[gray, ->] (A\j.east) -- (r\j.west);
  }
  \node[gray] at (2.3,-0.75) {$\vdots$};
  % column connector
  \node[conn, minimum width=3.6cm, minimum height=5.0cm] (C) at (8.0,1.2) {};
  \node[font=\scriptsize, gray] at (8.3,3.5) {$C_v\cong\Bk$};
  \node[term, fill=ask!12, draw=ask!85!black] (y) at (9.25,1.4) {$y_{v,s}$};
  \foreach \j/\yy/\dy in {1/3.0/0.14, 2/1.6/0.05, 3/0.2/-0.05} {
    \node[term] (x\j) at (6.75,\yy) {$x_{u_\j,v}$};
    \draw[arr] (r\j) -- (x\j);
    \draw[gray, ->] (x\j.east) to[out=0, in=180] ($(y.west)+(0,\dy)$);
  }
  \node[term, dashed] (hx) at (6.75,-0.55) {$\hat x_{u,v}$};
  \draw[gray, ->, dashed] (hx.east) to[out=0, in=180] ($(y.west)+(0,-0.14)$);
  \node[font=\scriptsize, align=left, gray, anchor=north] at (6.75,-1.4) {dummy sources, $u\notin N(v)$};
  \node[font=\scriptsize, gray, align=center] at (9.25,0.75) {one of $K$\\ outputs};
\end{tikzpicture}

\caption[The DAG at indegree two around a challenge]{$\Gk$ around
a challenge $y_{v,s}$: a row connector for each neighbour $u\in N(v)$, the
edges $r_{u,v}\to x_{u,v}$ (black), and the column connector $C_v$. Gray
arrows stand for paths inside a connector.}
\label{fig:gk}
\end{figure}

\begin{proposition}[Size and indegree]
\label{prop:size}
$\Gk$ has maximum indegree two, depth $O(\log K)$, and
\[
  N := |V(\Gk)| = 2K\,|V(\Bk)| = 2(2\log_2 K+1)K^2 = \Theta(K^2\log K).
\]
\end{proposition}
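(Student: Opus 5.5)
The plan is to read everything off \cref{def:gk} and \cref{prop:butterfly}; the statement is pure bookkeeping. The first step is to list the vertices. $\Gk$ consists of $K$ row connectors $R_u$ and $K$ column connectors $C_v$, each a copy of $\Bk$. I will check that every vertex of $\Gk$ lies in exactly one of these copies. The sources $A_u$ are the input terminals of $R_u$, and the $r_{u,v}$ are its output terminals. The edge vertices $x_{u,v}$ and the dummies $\hat x_{u,v}$ together form exactly the $K$ input terminals of $C_v$: one for each $u\in[K]$, namely $x_{u,v}$ if $u\in N(v)$ and $\hat x_{u,v}$ otherwise. The challenge outputs $y_{v,s}$ are the output terminals of $C_v$. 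No vertex is added outside the copies, and the copies are vertex-disjoint, so
\[
N=2K\,|V(\Bk)|=2K(2\log_2K+1)K=\Theta(K^2\log K).
\]

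For the indegree, a vertex inside a copy that is not an input terminal keeps its two parents from $\Bk$. The input terminals of $R_u$ are sources. Each input terminal of $C_v$ is either a dummy source or an edge vertex $x_{u,v}$, whose only parent is $r_{u,v}$. So no vertex of $\Gk$ has more than two parents. One detail needs care: in $\Gk$ the vertex $x_{u,v}$ gains the parent $r_{u,v}$, which it did not have as an input of $C_v$. This does not raise its indegree above one, because input terminals of $\Bk$ have no parents inside $\Bk$.

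For the depth, a directed path can cross from one connector to another only along an edge $r_{u,v}\to x_{u,v}$. Such an edge goes from a row connector to a column connector, and no edge leads back. Every path therefore consists of at most one path inside some $R_u$, this single bridging edge, and one path inside some $C_v$. Each piece inside a connector has length at most $\Delta_K=2\log_2K$, so the depth is at most $4\log_2K+1=O(\log K)$. The same observation shows that $\Gk$ is acyclic, so it is a DAG.

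I expect no real obstacle. The only points to get right are that the terminal identifications add no vertices and do not double-count any, and that the gluing edges raise no indegree above two.
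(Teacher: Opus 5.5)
Your proof is correct and follows essentially the same route as the paper. The paper also argues that the $2K$ disjoint connectors cover every vertex, that $x_{u,v}$ has the single parent $r_{u,v}$, and that every route crosses two connectors and one extra edge. You additionally spell out the explicit depth bound $4\log_2 K+1$ and acyclicity, which the paper leaves implicit here and checks later in \cref{prop:full-size}.
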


\begin{proof}
The $2K$ connectors are disjoint and contain every vertex (each $x_{u,v}$
and each dummy is an input terminal of exactly one column connector). Inside
a connector the indegree is at most two by \cref{prop:butterfly}; an edge
vertex $x_{u,v}$ has the single parent $r_{u,v}$; sources and dummies have
none. A source-to-output route crosses two connectors and one extra edge.
\end{proof}

The dictionary changes only in one entry: ``buying row $u$'' now means
evaluating the row connector $R_u$, layer by layer --- $K$ pebbles per
layer for $O(\log K)$ rounds, instead of $K$ pebbles for one round. The
graph got bigger by a $\log K$ factor, and so did the price of a row.

\subsection{Pure player}
\label{sec:pure-two}

The proof is the proof of \cref{lem:pure}, with one change: cached vertices
inside a connector do not cut single paths, they spoil some
\emph{terminals}, and ST-robustness tells us that $q$ cached vertices in a
connector spoil, together, at most $q$ of its inputs and $q$ of its
outputs. Summed over all connectors, at most $2p$ terminals are spoiled;
this is why the loss doubles from $p$ to $2p$.

\begin{lemma}[Pure lower bound on $\Gk$]
\label{lem:pure-two}
Let $P$ be any pure prepared state on $\Gk$, $p=|P|$. For a uniform
challenge $y_{v,s}\in\Qk$, every pure strategy satisfies
\[
  \E[\Cans] \ \ge\ p + (Kd-2p)_+ \ \ge\ \tfrac12 Kd .
\]
\end{lemma}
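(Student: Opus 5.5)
I would follow the proof of \cref{lem:pure} exactly, replacing "a cached vertex on the path" by "a spoiled terminal of a connector". First, fix the prepared state $P$ and split it by connectors. Every vertex of $\Gk$ lies in exactly one of the $2K$ connectors (\cref{prop:size}). So $P=\bigsqcup_u P\cap R_u \sqcup \bigsqcup_v P\cap C_v$, and we write $p_u^R=|P\cap R_u|$ and $p_v^C=|P\cap C_v|$, with $\sum_u p^R_u+\sum_v p^C_v=p$. For each connector with $q\le K$ cached vertices (if $q>K$ we declare all its terminals spoiled, which costs no more than $q$), \cref{lem:connector-st} and \cref{prop:butterfly} give sets $I^*,O^*$ of surviving terminals, each missing at most $q$ terminals. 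Every surviving input reaches every surviving output inside the connector by a path avoiding $P$. Call the terminals outside $I^*\cup O^*$ \emph{spoiled}. In $R_u$ at most $p^R_u$ sources of $A_u$ and at most $p^R_u$ outputs $r_{u,\cdot}$ are spoiled; in $C_v$ at most $p^C_v$ inputs and at most $p^C_v$ challenge outputs are spoiled.

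Next, certify sources. For a challenge $y_{v,s}$, call a candidate source $a\in A_u$ with $u\in N(v)$ \emph{certified} if four conditions hold: $a\in I^*(R_u)$, $r_{u,v}\in O^*(R_u)$, $x_{u,v}\in I^*(C_v)$, and $y_{v,s}\in O^*(C_v)$. Concatenating the connector paths with the edge $r_{u,v}\to x_{u,v}$ gives a path $a\rightsquigarrow y_{v,s}$ avoiding $P$, since the terminals themselves lie outside $D=P$. Then \cref{lem:ancestors} gives $\Cans\ge p+|U(v,s)|$, as in \eqref{eq:pathwise}. The case $y_{v,s}\in P$ is again trivial, because $y_{v,s}$ is then spoiled and $U=\varnothing$.

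Then bound the expected number of lost certificates, which is at most the sum over spoiled terminals of their expected loss under a uniform $(v,s)$. This gives four cases.
\begin{itemize}[itemsep=1pt]
  \item A spoiled source $a\in A_u$ kills only itself, and only when $v\in N(u)$, so its loss is $\le d_u/K\le1$.
  \item A spoiled row output $r_{u,v'}$ kills the $K$ sources of $A_u$ when the column is $v'$, so its loss is $\le1$.
  \item A spoiled column input $x_{u,v}$ kills $A_u$ when the column is $v$, so its loss is $1$.
  \item A spoiled challenge output $y_{v,s'}$ kills $Kd_v$ candidates with probability $1/K^2$, so its loss is $\le1$.
\end{itemize}
The number of spoiled terminals is at most $2p^R_u$ per row connector and $2p^C_v$ per column connector, so at most $2p$ in total. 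Hence $\E|U|\ge Kd-2p$, and with $|U|\ge0$ also $\E|U|\ge (Kd-2p)_+$. Finally, $p+(Kd-2p)_+\ge \tfrac12Kd$: for $p\le Kd/2$ the left side is $Kd-p\ge Kd/2$, and otherwise it is $p> Kd/2$.

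The main obstacle is the ST-robustness bookkeeping. I would make sure the selected sets $I^*,O^*$ depend only on $P$ and not on the challenge. This holds because they are fixed once per connector from $P\cap$ connector. It also matters that deleting the whole $P$ inside one connector does not affect paths in another, which holds because the connectors are vertex-disjoint and the path uses each connector only between its terminals. One must also check that a dummy source never matters, which holds because only $u\in N(v)$ are candidates. The same argument works for randomized strategies.
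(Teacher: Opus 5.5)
Your proposal is correct and follows the paper's own proof essentially step for step. The paper makes the same three moves: good terminals are fixed per connector from $P$ alone via maximal ST-robustness, giving at most $2p$ bad terminals. A candidate is certified when all four terminals on its route are good, and the pieces are glued across the edge $r_{u,v}\to x_{u,v}$. The same four-case averaging shows that each bad terminal destroys at most one certificate in expectation.
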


\begin{proof}
\emph{Step 1: good terminals.} For every connector $M\in\{R_1,\dots,R_K,
C_1,\dots,C_K\}$ let $q_M = |P\cap V(M)|$. The connectors are disjoint, so
$\sum_M q_M\le p$. Apply maximal ST-robustness (\cref{def:st}) to $M$ with
$D = P\cap V(M)$: we get a set of \emph{good inputs} and a set of
\emph{good outputs} of $M$, each missing at most $q_M$ terminals, none of
them in $P$, and with a path avoiding $P$ from every good input to every
good output. (If $q_M\ge K$, take both sets empty.) We fix this choice
depending only on $P$, before the challenge is drawn. Altogether at most
$2p$ terminals are \emph{not} good.

\emph{Step 2: certified sources.} For the challenge $y_{v,s}$, the
candidates are again the $Kd_v$ sources of the blocks $A_u$, $u\in N(v)$.
Call a candidate $a\in A_u$ \emph{certified} if all four terminals on its
route are good:
\begin{equation}
\label{eq:four-terminals}
  \underbrace{a}_{\text{\certmark{1} input of } R_u},\qquad
  \underbrace{r_{u,v}}_{\text{\certmark{2} output of } R_u},\qquad
  \underbrace{x_{u,v}}_{\text{\certmark{3} input of } C_v},\qquad
  \underbrace{y_{v,s}}_{\text{\certmark{4} output of } C_v}.
\end{equation}
For a certified $a$, glue a $P$-avoiding path $a\rightsquigarrow r_{u,v}$
in $R_u$, the edge $r_{u,v}\to x_{u,v}$ (both ends good, so outside $P$),
and a $P$-avoiding path $x_{u,v}\rightsquigarrow y_{v,s}$ in $C_v$. This is a
path from $a$ to the challenge avoiding $P$. Different $u$ have disjoint
blocks $A_u$, so the certified sources are distinct, and
\cref{lem:ancestors} gives, exactly as in \eqref{eq:pathwise},
\[
  \Cans \ \ge\ p + |U(v,s)| ,
\]
where $U(v,s)$ is the set of certified sources. (If $y_{v,s}\in P$ it is
not a good output, so $U(v,s)=\varnothing$ and the bound is just
$\Cans\ge p$.)

\Cref{fig:certified} shows one certified source. Its four terminals
\certmark{1}--\certmark{4} correspond to the four rows of the table in
Step~3.

\begin{figure}[htbp]
\centering
\begin{tikzpicture}[x=1cm,y=1cm,font=\small,
    conn/.style={draw, rounded corners=3pt, fill=gray!8},
    term/.style={draw, fill=white, rounded corners=1.5pt, inner sep=2pt, minimum height=0.42cm},
    arr/.style={-{Latex[length=1.6mm,width=1.1mm]}, line width=0.65pt},
    cpath/.style={-{Latex[length=1.6mm,width=1.1mm]}, gray, line width=0.6pt},
    cached/.style={font=\large}]
  \draw[conn] (0,0) rectangle (5.4,2.2);
  \draw[conn] (6.6,0) rectangle (12.0,2.2);
  \node[font=\small\bfseries] at (2.7,2.5) {row connector $R_u$};
  \node[font=\small\bfseries] at (9.3,2.5) {column connector $C_v$};
  \draw[buy!85!black, densely dotted, fill=buy!10, rounded corners=2pt] (0.35,0.3) rectangle (1.55,1.95);
  \node[font=\scriptsize] at (0.95,1.72) {$A_u$};
  \fill (0.95,1.43) circle (0.03);
  \fill (0.95,0.52) circle (0.03);
  \node[term] (a) at (0.95,1.0) {$a$};
  \node[term] (r) at (4.85,1.0) {$r_{u,v}$};
  \node[term] (x) at (7.15,1.0) {$x_{u,v}$};
  \node[term, fill=ask!12, draw=ask!85!black, line width=0.9pt] (y) at (11.35,1.0) {$y_{v,s}$};
  \node[cached] at (2.9,0.5) {$\times$};
  \node[cached] at (3.8,0.72) {$\times$};
  \node[cached] at (8.9,1.75) {$\times$};
  \node[cached] at (10.0,1.45) {$\times$};
  \draw[cpath] (a.east) .. controls (2.0,2.05) and (3.7,2.05) .. (r.west);
  \draw[arr] (r) -- (x);
  \draw[cpath] (x.east) .. controls (8.4,0.15) and (10.0,0.15) .. (y.west);
  \foreach \p/\n/\t in {0.95/1/{input of $R_u$}, 4.85/2/{output of $R_u$},
                         7.15/3/{input of $C_v$}, 11.35/4/{output of $C_v$}} {
    \node at (\p,-0.35) {\certmark{\n}};
    \node[font=\scriptsize] at (\p,-0.78) {\t};
  }
  \node[font=\large] at (0.25,-1.45) {$\times$};
  \node[anchor=west, font=\scriptsize] at (0.5,-1.45) {cached vertex, in $P$};
  \draw[cpath] (4.0,-1.55) .. controls (4.3,-1.25) and (4.7,-1.65) .. (5.0,-1.4);
  \node[anchor=west, font=\scriptsize] at (5.15,-1.45) {connector path avoiding $P$};
  \draw[arr] (9.3,-1.45) -- (10.1,-1.45);
  \node[anchor=west, font=\scriptsize] at (10.25,-1.45) {one edge of $\Gk$};
\end{tikzpicture}

\caption[A certified source]{A certified source $a$ for the
challenge $y_{v,s}$. All four marked terminals are good, so a path in $R_u$,
the edge $r_{u,v}\to x_{u,v}$ and a path in $C_v$ join into a path from $a$
to the challenge that misses the cache (crosses). By \cref{lem:ancestors},
$a$ must then appear in a paid configuration.}
\label{fig:certified}
\end{figure}

\emph{Step 3: averaging.} A candidate loses its certificate only if one of
its four terminals is not good. How many certificates can one bad terminal
destroy, on average over the uniform $(v,s)$? (Again, primes mark the
indices of the bad terminal.)

\begin{center}
\small
\begin{tabular}{@{}llll@{}}
  \toprule
  bad terminal & candidates it kills & when? & expected loss \\
  \midrule
  \certmark{1} row input $a\in A_u$ & $1$ & $v\in N(u)$: prob.\ $d_u/K$ & $\le1$ \\
  \certmark{2} row output $r_{u,v'}$ & $K$ (all of $A_u$) & column is $v'$: prob.\ $1/K$ & $1$ \\
  \certmark{3} column input $x_{u,v'}$ & $K$ (all of $A_u$) & column is $v'$: prob.\ $1/K$ & $1$ \\
  \certmark{4} column output $y_{v',s'}$ & $Kd_{v'}$ (all) & challenge is $y_{v',s'}$: prob.\ $1/K^2$ & $d_{v'}/K\le1$ \\
  \bottomrule
\end{tabular}
\end{center}

(Unused row outputs and dummy column inputs lie on no candidate's route,
so they kill nothing.) There are at most
$2p$ bad terminals, each destroying at most one certificate in expectation,
so $\E[|U(v,s)|]\ge Kd - 2p$, and also $\ge (Kd-2p)_+$. This proves the
first inequality. For the second: if $p\ge Kd/2$ the term $p$ alone is
enough; otherwise $p + Kd - 2p = Kd - p > Kd/2$.
\end{proof}

\begin{intuition}[Same table, one row per terminal.]
Compare with the table in the proof of \cref{lem:pure}. There, a cached
vertex sat \emph{on} a path; here, the $q$ cached vertices inside a
connector cost, together, at most $q$ bad inputs and $q$ bad outputs of
that connector. The rows of the
table are the same kinds of objects --- share units, labels, answers --- now
appearing as terminals. ST-robustness is exactly what rules out the binary
tree's ``one checkpoint kills a whole block''. Note that the lower bound
does not care about the depth of the connectors; only the XOR side does.
\end{intuition}

The bound is tight up to the constant: caching all $K^2$ challenge outputs
costs $K^2$
by \eqref{eq:trivial}.

\subsection{Entangled player}
\label{sec:entangled-two}

The prepared state is the same as before, now built from the labels of
$\Gk$: $\Sx$ has one XOR pebble $\mathcal S_j = \{x_{a,b},x_{b,a} :
\{a,b\}\in M_j\}$ per matching, weight $m$. It is prepared by evaluating all
row connectors, computing all edge vertices, and replacing them by the
XORs. Define $H_v$ and $D_v$ by the same formulas as in
\cref{sec:entangled}. \Cref{lem:decode} depends only on $G$, the matchings
and which edge vertices are pebbled, so it holds on $\Gk$ word for word.

For a challenge $y_{v,s}$ the strategy is as before, but every ``one round''
becomes ``a connector evaluated layer by layer''. The XORs are kept all the
time.
\begin{enumerate}[label=\textbf{\arabic*.}, itemsep=3pt]
  \item \textbf{Rows.} Pebble the source blocks $A_h$, $h\in H_v$, and
        evaluate the row connectors $R_h$ in parallel, layer by layer,
        keeping only the current layer. From their outputs pebble the
        helper edge vertices $D_v$, and drop everything else.
  \item \textbf{Decode.} By \cref{lem:decode} the column
        $\{x_{u,v}:u\in N(v)\}$ is in the closure. Put pure pebbles on it in
        one move and drop $D_v$. Storing the column lets us drop the
        helpers before $C_v$ is evaluated, so they are not held during the
        $O(\log K)$ rounds of Column.
  \item \textbf{Column.} Pebble the dummy inputs $\hat x_{u,v}$ (sources);
        now all $K$ inputs of $C_v$ are pebbled. Evaluate $C_v$ layer by
        layer; in the last layer pebble only $y_{v,s}$. The answer is
        delivered.
  \item \textbf{Restore.} Remove $y_{v,s}$; we are back in $\Sx$.
\end{enumerate}

Every configuration is charged once in the table; the one holding the
helpers $D_v$, in which Rows ends, appears under Rows.

\begin{center}
\small
\begin{tabular}{@{}lll@{}}
  \toprule
  stage & stored besides $\Sx$ & weight per round $\times$ rounds \\
  \midrule
  Rows & sources $A(H_v)$, layers of $R_h$, then $D_v$ & $(m + O(K|H_v|))\times O(\log K)$ \\
  Decode & the column, $d_v$ pure pebbles & $(m + d_v)\times 1$ \\
  Column & layers of $C_v$ & $(m + O(K))\times O(\log K)$ \\
  Restore & the answer $y_{v,s}$ & $(m+1)\times 1$ \\
  \bottomrule
\end{tabular}
\end{center}

\begin{lemma}[Entangled upper bound on $\Gk$]
\label{lem:xor-two}
The strategy above is legal, answers every challenge and returns to exactly
$\Sx$, with
\[
  \Cret(v,s) = O\bigl(\log K\cdot(m + K|H_v| + K)\bigr),
  \qquad
  \E[\Cret] = O\bigl(\log K\cdot(m + K(K-d) + K)\bigr).
\]
\end{lemma}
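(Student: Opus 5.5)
The plan is to check legality move by move, following the four stages, and then add up the weights stage by stage as in the table. The XOR pebbles of $\Sx$ are kept unchanged throughout, and no other XOR pebble is ever placed. So every configuration has the form $\Sx$ plus pure pebbles, and the only nontrivial closure used anywhere is the one given by \cref{lem:decode}.

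\emph{Rows.} Write $\Delta_K=2\log_2 K$ for the depth of $\Bk$. In the first move we pebble $A(H_v)$; sources are always legal. In each later move we pebble layer $i+1$ of every $R_h$, $h\in H_v$, and drop layer $i$. This is legal because every vertex of layer $i+1$ has its (at most two) parents in layer $i$ (\cref{prop:butterfly}).

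After $\Delta_K$ moves the configuration holds all outputs $r_{h,z}$. One more move pebbles every $x_{h,z}$ with $z\in N(h)$, whose single parent is $r_{h,z}$, and drops the rest, leaving $\Sx$ plus $D_v$. Each of these $O(\log K)$ configurations weighs at most $m+K|H_v|$.

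\emph{Decode.} By \cref{lem:decode}, every $x_{u,v}$ with $u\in N(v)$ lies in $\cl(\Sx\cup D_v)\subseteq\avail$. Hence one move may place pure pebbles on the whole column and drop $D_v$. This is exactly the move flagged in \cref{rem:ackkpt} as storing a recovered label whose parent is absent, and \cref{def:entangled} permits it. The configuration it starts from, $\Sx$ plus $D_v$, is charged once, at $m+|D_v|\le m+K|H_v|$.

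\emph{Column.} The next move adds the dummy sources $\hat x_{u,v}$, $u\notin N(v)$, after which all $K$ inputs of $C_v$ are pure pebbles. We then evaluate $C_v$ layer by layer, as in Rows, for $\Delta_K$ moves. In the last move only $y_{v,s}$ is placed; its two parents lie in the previous layer. Every intermediate configuration weighs at most $m+K$, since it holds one layer. The answer time $\tau$ is the first time $y_{v,s}$ is pebbled, and it lies in the closure from then on.

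\emph{Restore.} One move removes $y_{v,s}$, at charge $m+1$, which returns exactly to $\Sx$.

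Summing over the stages, the total number of rounds is $2\Delta_K+O(1)=O(\log K)$. Every round costs at most $m+K|H_v|+K$, so
\[
\Cret(v,s)=O\bigl(\log K\cdot(m+K|H_v|+K)\bigr).
\]
Taking expectation and using $\E_v|H_v|=K-d$, as in \cref{lem:xor}, gives the averaged bound.

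The only delicate step is the Decode move. We must confirm that recovering from the closure and storing it in the same move is within the rules, and that the Rows stage really ends with exactly $D_v$ pebbled, since the closure in \cref{lem:decode} is stated for pure pebbles on $D_v$. Everything else is routine bookkeeping of layer widths ($K$ per layer) and depth ($O(\log K)$).
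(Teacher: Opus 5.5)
Your proof is correct and follows the paper's argument: you check legality stage by stage, with Decode justified by \cref{lem:decode} and the rule that availability refers to the configuration before the move, then bound each of the $O(\log K)$ rounds by $m+K|H_v|+K$ and average using $\E_v|H_v|=K-d$. The one point the paper states explicitly and you leave implicit is why the answer time is exactly the move that pebbles $y_{v,s}$: every XOR pebble contains only edge vertices, so no challenge output can enter the closure earlier.
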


\begin{proof}
\emph{Legality.} Sources and dummies can be pebbled at any time. A connector
layer is computed from the previous layer, which is pebbled. Each helper
vertex $x_{h,z}\in D_v$ has its parent $r_{h,z}$ in the last layer of $R_h$.
In Decode, every $x_{u,v}$, $u\in N(v)$, is in the closure by
\cref{lem:decode}, so it is available and may get a pure pebble, while the
helpers are dropped in the same move (\cref{def:entangled}). The
XOR pebbles are never changed and no other XOR pebble is placed, and
Restore is a removal. The
challenge enters the closure only in the last move of Column (before that,
no pebble and no XOR contains a challenge output), so the answer time is as
claimed.

\emph{Cost.} Read off the table: each connector has $K$ vertices per layer
and depth $O(\log K)$ (\cref{prop:butterfly}); $|D_v|\le K|H_v|$ and
$d_v\le K$. The XORs contribute $m$ in each of the $O(\log K)$ rounds.
Averaging uses $\E_v|H_v| = K-d$.
\end{proof}

\subsection{The separation at indegree two}
\label{sec:main}

We can now put together the pure lower bound (\cref{lem:pure-two}), the
XOR strategy (\cref{lem:xor-two}) and the graphs of \cref{prop:ams}. The
result is the main theorem of the paper.

\begin{theorem}[Main theorem]
\label{thm:main}
Let $\eta=0.029$ and let $G$ be a graph from \cref{prop:ams} on $K=64^\nu$
vertices. The DAG $\Gk$ has maximum indegree two and $N=\Theta(K^2\log K)$
vertices, and for a uniformly random challenge
\[
  \CPC_{\mathrm{pure}}(\Gk,\Qk) = \Theta\!\left(\frac{N}{\log N}\right),
  \qquad
  \CPC^{\mathrm{ret}}_{\mathrm{ent}}(\Gk,\Qk)
  = O\!\left(N^{1-\eta}(\log N)^{\eta}\right).
\]
So the entangled player still wins by a polynomial factor
$\Omega\bigl(N^{\eta}/(\log N)^{1+\eta}\bigr)$.
The entangled bound is achieved by a common-arity strategy of arity
$2\mu$.
\end{theorem}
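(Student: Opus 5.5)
The plan is to combine \cref{prop:size}, \cref{lem:pure-two}, \cref{lem:xor-two} and the parameters of \cref{prop:ams}, and then translate everything from $K$ into $N$. The structural claims (indegree two, $N=\Theta(K^2\log K)$) are exactly \cref{prop:size}. From it, $\log N=\Theta(\log K)$, because $2\log K\le\log N\le 3\log K$ for large $K$. So $N/\log N=\Theta(K^2)$ and $K^2=\Theta(N/\log N)$.

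For the pure cost I would prove matching bounds in $K$.
\begin{itemize}
  \item \emph{Lower bound.} \cref{lem:pure-two} gives $\E[\Cans]\ge\tfrac12 Kd$ for every prepared state and every strategy. By \eqref{eq:ams-consequences}, $d=\Theta(K)$, so $\CPC_{\mathrm{pure}}(\Gk,\Qk)=\Omega(K^2)$.
  \item \emph{Upper bound.} The trivial bound \eqref{eq:trivial} gives $\CPC_{\mathrm{pure}}\le|\Qk|=K^2$. The state consisting of all challenge outputs is reachable: evaluate everything, then drop all but the outputs.
\end{itemize}
Hence $\CPC_{\mathrm{pure}}=\Theta(K^2)=\Theta(N/\log N)$.

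For the entangled cost, \cref{lem:xor-two} gives $O(\log K\cdot(m+K(K-d)+K))$. By \cref{prop:ams} and \eqref{eq:ams-consequences}, $m\le K^{2-2\eta}$ and $K(K-d)=O(K^{2-2\eta})$. The additive $K$ is dominated by these terms since $2-2\eta>1$. So the bound is $O(K^{2-2\eta}\log K)$. To convert, write $K^2=\Theta(N/\log N)$. Then $K^{2-2\eta}=(K^2)^{1-\eta}=\Theta\bigl((N/\log N)^{1-\eta}\bigr)$. Multiplying by $\log K=\Theta(\log N)$ gives $O\bigl(N^{1-\eta}(\log N)^{\eta}\bigr)$, which is the stated bound.

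The ratio is then
\[
  \frac{N/\log N}{N^{1-\eta}(\log N)^{\eta}}=\frac{N^{\eta}}{(\log N)^{1+\eta}} .
\]
The common-arity claim was already noted in \cref{sec:entangled}, and it transfers verbatim to $\Gk$. The $m$ XOR pebbles $\mathcal S_j$ are created once during preparation and never changed, all other pebbles are pure, and the matchings have equal size $\mu$, so every XOR has arity $2\mu$.

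I expect no real obstacle, only bookkeeping. The one point to check is that the preparation of $\Sx$ on $\Gk$ stores no other XOR pebbles, so that \cref{def:common-arity} holds. The other is that the exponent arithmetic in the $K$-to-$N$ conversion has the direction of the $\log$ factors right: the pure bound loses a $\log N$, while the entangled bound gains only $(\log N)^{\eta}$.
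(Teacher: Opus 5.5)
Your proposal is correct and follows the paper's proof essentially step for step. You use \cref{prop:size} for the size and for the conversion $K^2=\Theta(N/\log N)$, \cref{lem:pure-two} together with \eqref{eq:ams-consequences} and \eqref{eq:trivial} for the pure cost $\Theta(K^2)$, and \cref{lem:xor-two} with the AMS parameters for the entangled cost $O(K^{2-2\eta}\log K)$. Your extra bookkeeping is accurate and makes explicit what the paper leaves implicit: reachability of the all-outputs state, the additive $K$ being absorbed, and the check that the preparation of $\Sx$ stores no other XOR pebbles.
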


\begin{proof}
By \cref{prop:size}, $N=\Theta(K^2\log K)$, so $\log N = \Theta(\log K)$
and $K^2 = \Theta(N/\log N)$. Pure: \cref{lem:pure-two} and
\eqref{eq:ams-consequences} give $\ge\frac12Kd=\Omega(K^2)$, and
\eqref{eq:trivial} gives $\le K^2$. Entangled: \cref{lem:xor-two} and
\eqref{eq:ams-consequences} give $O(K^{2-2\eta}\log K)$, and
$K^{2-2\eta}\log K = \Theta\bigl(N^{1-\eta}(\log N)^{\eta}\bigr)$. Divide.
\end{proof}

By \cref{rem:ackkpt}, this XOR strategy is also a strategy in the entangled
pebbling model of Alwen et al.~\cite{AlwenEtAl2016}, with the same cost;
so the gap of \cref{thm:main} persists when the entangled player is given
the full move rules of their model.

\begin{intuition}[Where did the $\log$ go?]
At high indegree the pure cost was $\Theta(N)$; now it is
$\Theta(N/\log N)$. Nothing got cheaper --- the pure player still pays
$\Theta(K^2)$ --- but the graph grew by a $\log K$ factor (the connector
layers), and those layers are not charged by our lower bound, which counts
only sources. On the other side the XOR player now pays $\log K$ times more,
because every row and column is evaluated over $O(\log K)$ rounds while he
holds his XORs. Both effects are logarithmic, so the polynomial gap
$N^{\eta}$ survives, up to logs.
\end{intuition}

\begin{center}
\small
\begin{tabular}{@{}lccc@{}}
  \toprule
  & memory game & $\Hk$ (indegree $K$) & $\Gk$ (indegree $2$) \\
  \midrule
  size & --- & $\Theta(K^2)$ & $\Theta(K^2\log K)$ \\
  pure (answer only) & $\ge Kd$ & $=Kd$ & $\ge\frac12Kd$ \\
  entangled (answer + return) & $m + K(K-d)$ & $\le 4m + 2K(K-d)+1$ & $O(\log K\,(m + K(K-d) + K))$ \\
  \midrule
  with AMS graphs: pure & $\Theta(K^2)$ & $\Theta(N)$ & $\Theta(N/\log N)$ \\
  with AMS graphs: entangled & $O(K^{2-2\eta})$ & $O(N^{1-\eta})$ & $O(N^{1-\eta}(\log N)^{\eta})$ \\
  \bottomrule
\end{tabular}\\*[6pt]
\begin{minipage}{0.9\linewidth}
In each DAG column, $N$ is the number of vertices of that DAG. The game
column lists $\TMC$; the DAG columns list the pure cost of answering and
the entangled cost of answering and returning.
\end{minipage}
\end{center}

% ======================================================================
\section{Repeated challenges}
\label{sec:repeated}

So far there was one challenge, and the prepared state was free. In a
data-dependent memory-hard function or a proof of space, challenges come
one after another, and the state has to be built first. Charging the
preparation once and answering $t$ fresh challenges in a row turns the
single-challenge separation into a separation of total costs, by a routine
amortization: the XOR state is built once, at cost $I$, and then reused.
If $L$ is the one-challenge pure lower bound of \cref{lem:pure-two} and $U$
the bound on the expected cost of answering and returning in
\cref{lem:xor-two}, then $t$ challenges cost the pure player at least $tL$
and the XOR player at most $I+tU$. Two properties of the one-shot proofs
make this work: the pure bound holds whatever state the earlier phases left
behind, and the XOR strategy returns to the same state. For large $t$ the
one-time cost $I$ becomes negligible, and what remains is the
per-challenge gap. \Cref{app:dmhf} writes out the data-dependent
memory-hard function behind the repeated game defined next.

\begin{definition}[The repeated game]
\label{def:repeated}
Let $\mathcal D$ and $\mathcal Q$ be as in \cref{def:service}, and fix
$t\ge1$. The player starts from the empty configuration and plays $t$
phases. In phase $i=1,\dots,t$:
\begin{enumerate}[label=\arabic*), itemsep=2pt]
  \item a challenge $c_i$ is drawn uniformly from $\mathcal Q$,
        independently of everything before, and revealed;
  \item the player makes legal moves
        $\mathcal P^{(i)}_0,\mathcal P^{(i)}_1,\dots$, where
        $\mathcal P^{(i)}_0$ is the configuration that ended phase $i-1$
        (empty for $i=1$); the phase ends at the first $\tau_i\ge1$ with
        $c_i\in\cl(\mathcal P^{(i)}_{\tau_i})$;
  \item the phase costs $\sum_{j<\tau_i}\wt(\mathcal P^{(i)}_j)$, as in
        \eqref{eq:cans}.
\end{enumerate}
$\CPC^{\mathrm{game}}_{\mathrm{pure}}(\mathcal D,\mathcal Q,t)$ and
$\CPC^{\mathrm{game}}_{\mathrm{ent}}(\mathcal D,\mathcal Q,t)$ are the
smallest expected total costs of pure, resp.\ entangled, strategies.
\end{definition}

This is \cref{def:service} played $t$ times in a row, with nothing free.
Since $c_1$ is revealed before the first move, building a state happens
inside phase~$1$ and is charged, and there are no moves between phases.
The condition $\tau_i\ge1$ matters: without it, a player who has cached
every output would answer all later challenges for free. The challenges
are supplied from outside, as in the randomized-query view
of~\cite[Section~2.2]{DER2026}; \cref{app:dmhf} shows where they come from
in a function.

\begin{theorem}[Repeated challenges]
\label{thm:repeated}
Let $G$ be a graph from \cref{prop:ams} and $t\ge1$. Then
\begin{gather*}
  \tfrac12\,t\,Kd \;\le\; \CPC^{\mathrm{game}}_{\mathrm{pure}}(\Gk,\Qk,t)
  \;\le\; O\bigl(K^2\log K + t\,K^2\bigr),\\
  \CPC^{\mathrm{game}}_{\mathrm{ent}}(\Gk,\Qk,t)
  \;=\; O\bigl(K^2\log K + t\,K^{2-2\eta}\log K\bigr).
\end{gather*}
In particular the pure cost is $\Theta(tK^2)$ for $t\ge\log_2K$. The
entangled bound is achieved by a common-arity strategy of arity $2\mu$
(\cref{def:common-arity}).
\end{theorem}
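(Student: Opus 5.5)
The proof has three parts: a pure lower bound, a pure upper bound, and an entangled upper bound. All three reuse the single-challenge results via the amortization sketched before \cref{def:repeated}.

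\emph{Pure lower bound.} Fix any pure strategy for the repeated game. Condition on the history before phase $i$; this fixes the configuration $\mathcal P^{(i)}_0$, which is some pure configuration $P$ reachable from the empty one. The challenge $c_i$ is uniform and independent of that history. So phase $i$ is exactly an instance of \cref{def:service} with prepared state $P$, and its cost is the $\Cans$ defined there. \Cref{lem:pure-two} holds for every prepared state and every (possibly randomized, history-dependent) strategy. It gives conditional expected cost at least $p+(Kd-2p)_+\ge\frac12Kd$. Taking expectations and summing over the $t$ phases gives $\frac12tKd$. The one point to check is that the phase-$1$ state is empty, not an arbitrary reachable state. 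This is harmless, since the empty set is a legal prepared state.

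\emph{Pure upper bound.} In phase $1$ the player evaluates everything. He pebbles all sources, evaluates all $K$ row connectors layer by layer, computes the edge vertices, and then evaluates all $K$ column connectors layer by layer. At each layer he keeps only the frontier plus any outputs $y_{v,s}$ already computed. Each round holds $O(K^2)$ pebbles, and there are $O(\log K)$ rounds, so phase $1$ costs $O(K^2\log K)$. He ends phase $1$ holding all $K^2$ challenge outputs; this phase ends as soon as $c_1$ appears, and he can arrange that $c_1$ appears only in the final round. Every later phase is then one idle move costing $K^2$. The total is $O(K^2\log K+tK^2)$.

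For $t\ge\log_2K$ this upper bound is $O(tK^2)$. It matches the lower bound, because $Kd=\Theta(K^2)$ by \eqref{eq:ams-consequences}.

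\emph{Entangled upper bound.} Phase $1$ first builds $\Sx$. Evaluate all row connectors at cost $O(K^2\log K)$, compute all $Kd$ edge vertices, and in one move place the $m$ XOR pebbles and drop everything else. Then answer $c_1$ from $\Sx$ with the strategy of \cref{lem:xor-two} and return to $\Sx$. One alternative is to evaluate $C_v$ directly from the edge vertices while they are still held, which also costs $O(K^2\log K)$. In every later phase the configuration at the start is exactly $\Sx$, so the returning strategy of \cref{lem:xor-two} applies.

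Two accounting points need care. First, a phase is charged only up to $\tau_i$. The return moves after $\tau_i$ are therefore charged to the next phase: its configurations $\mathcal P^{(i+1)}_j$ start with the post-answer configuration, and the restore step is a single removal before the next answer. The charged cost per phase is thus at most $\Cret$. The one exception is that the last phase does not return, which is only cheaper. Second, the challenge can first enter the closure only at the last Column move, which \cref{lem:xor-two} already checks. By \cref{lem:xor-two} and \eqref{eq:ams-consequences}, each phase costs $O(\log K(m+K(K-d)+K))=O(K^{2-2\eta}\log K)$ in expectation. Summing over phases gives the stated total.

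\emph{Common arity.} All XOR pebbles ever stored are the $m$ sets $\mathcal S_j$, each of size $2\mu$, so the strategy is common-arity of arity $2\mu$.

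The step I expect to be delicate is aligning the phase boundaries of \cref{def:repeated} with the answer/return accounting of \cref{def:return}. In particular, the restore move of phase $i$ must be paid inside phase $i+1$, and the configuration holding the answer is charged there. I expect this to cost at most an additive $m+1$ per phase, which is absorbed into the bound.
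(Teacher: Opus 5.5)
Your proof is correct and is essentially the paper's argument. You condition on the history so that each phase is an instance of \cref{lem:pure-two}, use the compute-everything pure strategy that ends phase~$1$ in $\Qk$, and build $\Sx$ once, so that each later phase pays the Restore of the previous answer ($m+1$) plus $\Cans$, for a total of at most $I+\sum_i\Cret(c_i)$. The one slip is your sentence that later phases start in exactly $\Sx$: they start in $\Sx$ plus the previous answer, as your accounting paragraph itself says, and the paper also notes that $\cl(\Sx)=\varnothing$, so the Restore move cannot end a phase early even when a challenge repeats.
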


\begin{proof}
Write $c_i=y_{v_i,s_i}$.

\emph{Pure, lower bound.} Fix $i$ and condition on everything that
happened before $c_i$ is drawn. The configuration entering phase $i$ is
then fixed, $c_i$ is uniform on $\Qk$ and independent of it, and the phase
makes at least one move. So phase $i$ is an instance of \cref{def:service},
and by \cref{lem:pure-two} its conditional expected cost is at least
$\frac12Kd$, whatever the earlier phases left behind. Take expectations
and sum over $i$.

\emph{Pure, upper bound.} In phase~$1$ pebble all source blocks $A_u$,
then evaluate all row connectors, then pebble all edge vertices together
with the dummies, then evaluate all column connectors, layer by layer and
keeping only the current layer: $O(\log K)$ moves of $O(K^2)$ pebbles
each. No challenge output is pebbled before the last move, which pebbles
all $K^2$ of them at once; so phase~$1$ ends there, at cost $O(K^2\log K)$, in the
configuration $\Qk$. In every later phase one move that keeps $\Qk$
answers, at cost $K^2$.

\emph{Entangled, preparation.} Phase~$1$ starts after $c_1$ is revealed;
the strategy ignores $c_1$ at first and prepares $\Sx$ as in
\cref{sec:entangled-two}: pebble all source blocks, evaluate all row
connectors, pebble all edge vertices, and in one more move place the $m$
XOR pebbles and drop everything else. This costs $I=O(K^2\log K)$; no
pebble placed is a challenge output and no $\mathcal S_j$ contains one, so
phase~$1$
does not end early.

\emph{Entangled, serving the phases.} Then serve $c_1$ with Rows, Decode
and Column; phase~$1$ ends when Column delivers the answer. Every later
phase starts in $\Sx$ plus a pure pebble on the previous answer. Its first
move is the Restore of that answer, which leaves exactly $\Sx$; the closure
of $\Sx$ is empty (no pure pebbles, and every XOR pebble has at least two
elements), so the phase does not end there, even if the same challenge is
drawn twice in a row. Then Rows, Decode and Column serve the new challenge.

\emph{Entangled, accounting.} Let $\Cans(v,s)$ be the cost of answering
$y_{v,s}$ from $\Sx$ with this strategy (\cref{def:service}). Phase~$1$
costs $I+\Cans(v_1,s_1)$, and every later phase costs
$(m+1)+\Cans(v_i,s_i)$, where $m+1$ pays for Restore. Since
$\Cret(v,s)=\Cans(v,s)+m+1$ in \cref{lem:xor-two}, the total cost is
\[
  I+\sum_{i=1}^{t}\Cans(v_i,s_i)+(t-1)(m+1)\ \le\ I+\sum_{i=1}^{t}\Cret(v_i,s_i),
\]
whose expectation is $O(K^2\log K+tK^{2-2\eta}\log K)$ by
\cref{lem:xor-two} and \eqref{eq:ams-consequences}. The only XOR pebbles
ever stored are $\mathcal S_1,\dots,\mathcal S_m$, of size $2\mu$, so the
strategy is common-arity.
\end{proof}

\begin{corollary}[Repeated challenges, in terms of $n$]
\label{cor:repeated-n}
Let $G$ be a graph from \cref{prop:ams}, let $t=\Theta(N)$ (for example
$t=K^2\log_2K$), and let $n=N+2t+1$ be the number of vertices of the full
dependency graph of the function of \cref{app:dmhf}, which has maximum
indegree two (\cref{prop:full-size}); for $t=K^2\log_2K$,
$n=(6\log_2K+2)K^2+1$. Then
\[
  \CPC^{\mathrm{game}}_{\mathrm{pure}}(\Gk,\Qk,t)=\Theta\!\left(\frac{n^2}{\log n}\right),
  \qquad
  \CPC^{\mathrm{game}}_{\mathrm{ent}}(\Gk,\Qk,t)=O\bigl(n^{2-\eta}(\log n)^{\eta}\bigr),
\]
a gap of $\Omega\bigl(n^{\eta}/(\log n)^{1+\eta}\bigr)$. These are costs in
the repeated game; \cref{rem:scope} explains how they relate to evaluating
the function.
\end{corollary}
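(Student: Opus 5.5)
The corollary is \cref{thm:repeated} with $t=\Theta(N)$ substituted, rewritten in terms of $n$. The plan has three steps: relate $n$ to $K$, then bound the pure and entangled costs separately.

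First, the parameters. By \cref{prop:size}, $N=2(2\log_2K+1)K^2=\Theta(K^2\log K)$. For $t=\Theta(N)$ we get $n=N+2t+1=\Theta(N)=\Theta(K^2\log K)$. Hence $\log n=\Theta(\log K)$ and $K^2=\Theta(n/\log n)$. For the explicit choice $t=K^2\log_2K$, direct arithmetic gives
\[
n=(4\log_2K+2)K^2+2K^2\log_2K+1=(6\log_2K+2)K^2+1 .
\]
The indegree claim is quoted from \cref{prop:full-size} in \cref{app:dmhf}; I take it as given.

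Second, the pure cost. By \cref{thm:repeated} and \eqref{eq:ams-consequences}, which gives $d=\Theta(K)$, the pure cost is at least $\tfrac12 tKd=\Omega(tK^2)$. The upper bound is $O(K^2\log K+tK^2)=O(tK^2)$, because $t=\Theta(K^2\log K)\ge\log_2K$ for large $K$. So the pure cost is $\Theta(tK^2)=\Theta(n\cdot n/\log n)=\Theta(n^2/\log n)$.

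Third, the entangled cost. \Cref{thm:repeated} gives $O(K^2\log K+tK^{2-2\eta}\log K)$, and the second term dominates. With $t=\Theta(n)$ this is $O\bigl(n\,K^{2-2\eta}\log K\bigr)$. Writing $K^2=\Theta(n/\log n)$ and $\log K=\Theta(\log n)$,
\[
K^{2-2\eta}\log K=\Theta\bigl((n/\log n)^{1-\eta}\log n\bigr)=\Theta\bigl(n^{1-\eta}(\log n)^{\eta}\bigr).
\]
So the entangled cost is $O(n^{2-\eta}(\log n)^{\eta})$. Dividing the two bounds gives the gap $\Omega\bigl(n^{\eta}/(\log n)^{1+\eta}\bigr)$.

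All steps are routine asymptotics. The only points needing care are that $t\ge\log_2K$ holds, so that the $\Theta$ clause of \cref{thm:repeated} applies, and that the $O(\cdot)$ terms absorb the preparation cost $K^2\log K\le t$.
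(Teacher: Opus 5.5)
Your proposal is correct and follows the paper's proof essentially step for step. You express $\log n$ and $K^2$ in terms of $n$, read off the pure cost $\Theta(tK^2)$ from \cref{thm:repeated} using $t\ge\log_2K$, and convert $tK^{2-2\eta}\log K$ into $n^{2-\eta}(\log n)^{\eta}$. One nit: for a general $t=\Theta(N)$ you only get $K^2\log K=O(t)$, not $K^2\log K\le t$, but $O(t)$ is all the absorption of the preparation cost needs.
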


\begin{proof}
Here $n=\Theta(N)=\Theta(K^2\log K)$ by \cref{prop:size}, so
$\log n=\Theta(\log K)$, $t=\Theta(n)$ and $K^2=\Theta(n/\log n)$. As
$t\ge\log_2K$ for large $K$, \cref{thm:repeated} gives pure
cost $\Theta(tK^2)=\Theta(n^2/\log n)$. Entangled: $K^2\log K=O(n)$ and
$tK^{2-2\eta}\log K=\Theta\bigl(n\cdot(n/\log n)^{1-\eta}\log n\bigr)
=\Theta\bigl(n^{2-\eta}(\log n)^{\eta}\bigr)$. Divide.
\end{proof}

Taking $t=\Theta(N)$ pays for the preparation (its cost $O(K^2\log K)$ is
negligible next to $t$ times the cost of one challenge) and makes the line
about as long as $\Gk$ is large, so that $n=\Theta(N)$. The factor $1/\log n$ is only a change of
units: per challenge the pure player pays $\Theta(K^2)$, as in
\cref{thm:main}. For $t\gg N$ the line dominates, and the pure cost
$\Theta(tK^2)$ is $\Theta(nN/\log N)$ rather than $\Theta(n^2/\log n)$.

% ======================================================================
\section{Conclusion and open questions}
\label{sec:conclusion}

Pure pebbling can be wrong by a polynomial factor: the way stored
information is represented can change the asymptotic cost of memory, so a
pure-pebbling lower bound needs an additional argument before it says
anything about strategies that store encoded information. What should one
do about it, and how far does the gap go? We close with the questions we
find most pressing.

\paragraph{Which guarantees survive.} Not every security proof rests on
pebbling alone. For data-independent memory-hard functions evaluated from
scratch, pebbling lower bounds transfer to every algorithm in the parallel
random oracle
model~\cite{AlwenSerbinenko2015}, so they also hold against algorithms that
store encoded information.
Scrypt is maximally memory-hard against every algorithm~\cite{Scrypt2017}.
The proof of space of Dziembowski et al.~\cite{DFKP2015}, first analysed
only against adversaries that store a subset of the honest prover's data,
was later proved secure against arbitrary adversaries in the random oracle
model~\cite{Pietrzak2018}, and proofs of space built on function inversion
instead of pebbling are secure against any stored advice by
design~\cite{AACKPR2017}. The weak spot is the data-dependent world. There
no general reduction from algorithms to pebbling is
known~\cite{BlockiHolman2022,BlockiHolman2025}, and the best bounds against
arbitrary algorithms are polynomially weaker than what pebbling
predicts~\cite{BlockiHolman2025}. Our result, like that of de~Rezende,
Engstr\"om and Reyzin~\cite{DER2026}, shows that such a gap need not be an
artifact of the proofs: it can be real.

\paragraph{How to defend.} For a designer, the lesson is that a
pure-pebbling bound is not enough. One wants a proof against arbitrary
algorithms, or at least a bound in a model that allows encoded storage,
such as entangled pebbling, in which Alwen et al.\ proved their bound for
scrypt~\cite{AlwenEtAl2016}. We know no general technique for lower bounds
in entangled pebbling, and no simple property of a graph that makes XORs
useless. The memory game points to one warning sign: a query that needs
one value from each of many groups, where a group can only be recomputed as
a whole, like the rows of the game. Then a pure player wastes most of every
group he recomputes, and stored XORs can make the wasted values useful.

\paragraph{Is XOR the end?} Our adversary is weak: it stores XORs of a
single kind. Richer encodings, such as linear combinations over larger
fields or nonlinear functions of labels, might gain more. There is a reason
to expect so. In network coding, which stands to routing roughly as
entangled pebbling stands to pebbling, coding beats routing, and nonlinear
codes provably beat linear ones~\cite{DFZ2005}. Whether the same happens
for memory, and how large the gap between pebbling and arbitrary algorithms
can be, is open; we prove no lower bound for any entangled strategy.

\paragraph{Real constructions.} We do not claim that any deployed
memory-hard function or proof of space is broken; our graphs are built for
the attack. Whether XORs give a polynomial advantage on the graphs used in
practice is perhaps the most practical question we leave open. A smaller
one: the exponent $\eta$ comes from the graphs of Alon, Moitra and Sudakov,
in which the number of missing edges and the number of induced matchings
are both bounded by about $K^{1.94}$ (\cref{app:ams}). Any family with both
numbers polynomially smaller improves $\eta$ through the same argument.

Pebbling remains a powerful abstraction, but a pebble is a model of memory,
not memory itself.

% ======================================================================
\appendix

\section{The AMS graphs: parameters and explicitness}
\label{app:ams}

This appendix proves \cref{prop:ams}. We need a power-of-two number
of vertices (for the butterflies of \cref{prop:butterfly} and for the
challenge map $\chi$ of \cref{app:dmhf}) and simultaneous bounds on the
number of missing edges and of matchings. We also record that the graph is
regular and that the partition has classes of equal size. Neither enters
any bound: the proofs use only the average degree $d$ and work for every
partition into induced matchings. Equal size only makes our XOR strategies
common-arity (\cref{def:common-arity}).

In the code-based construction of~\cite{AMS2013}, vertices are words of
length $\nu$ over an alphabet of size $C$, so $K = C^\nu$, and two words are
adjacent when they differ in more than $\nu - d_\nu$ coordinates. The
parameter $\rho = \lim d_\nu/\nu$ trades the two quantities: the exponent
$e_C(\rho)$ below controls the missing edges (the rows the entangled player
buys), and $f_C(\rho)$ controls the number of matchings (the XORs he
stores). We need one $\rho$ making $\max\{e_C(\rho),f_C(\rho)\}$ small.
$H_2(z) = -z\log_2 z-(1-z)\log_2(1-z)$ is the binary entropy.

\begin{theorem}[Code-based AMS graphs~{\cite[Theorem~2.20]{AMS2013}}]
\label{thm:ams}
Fix an integer $C\ge6$ and $\frac{2}{C-1}\le\rho<\frac12$. There are
integers $d_\nu$ with $d_\nu/\nu\to\rho$, regular graphs $G_\nu$ on
$K=C^\nu$ vertices, and partitions of their edges into $m_\nu$ induced
matchings, with
\[
  M_{\mathrm{miss}}(G_\nu)\le K^{e_C(\rho)+o(1)},
  \qquad
  m_\nu\le K^{f_C(\rho)+o(1)},
\]
where
\[
  e_C(\rho)=1+\frac{H_2(\rho)+(1-\rho)\log_2(C-1)}{\log_2C},
  \qquad
  f_C(\rho)=2-\frac{1-H_2(\rho)}{\log_2C}.
\]
Moreover, all matchings of the partition have one common size
$\mu=K^{2-f_C(\rho)+o(1)}$.
\end{theorem}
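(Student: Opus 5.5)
The plan is to reproduce the code-based construction of \cite{AMS2013} and check the three claims in order: the edge count, the induced-matching partition, and the counting of matchings.

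\textbf{Construction.} Take the vertex set $[C]^\nu$, so $K=C^\nu$. Fix $d_\nu=\lceil\rho\nu\rceil$ and join two words when they agree in fewer than $d_\nu$ coordinates. This graph is Cayley-like, hence regular.

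\textbf{Missing edges.} The non-neighbours of a word $x$ are the words agreeing with it in at least $d_\nu$ coordinates. Their number is
\[
  \sum_{j\ge d_\nu}\binom{\nu}{j}(C-1)^{\nu-j}
  \le 2^{\nu(H_2(\rho)+o(1))}(C-1)^{(1-\rho)\nu},
\]
because $\rho<\frac12$ and $(C-1)$ is large enough that the term $j=d_\nu$ dominates. This is the point where the hypothesis $\rho\ge 2/(C-1)$ should enter: it keeps the sum concentrated at its lowest index. Multiplying by $K/2$ gives $M_{\mathrm{miss}}\le K^{e_C(\rho)+o(1)}$ after taking $\log_C$.

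\textbf{Induced matchings.} The matchings come from a binary code. Fix a pair of words $x,y$ that differ in every coordinate. Let $B\subseteq\{0,1\}^\nu$ be closed under complement. For $b\in B$, let $z_b$ be the word that takes $x_i$ where $b_i=0$ and $y_i$ where $b_i=1$. The pair $\{z_b,z_{\bar b}\}$ disagrees everywhere, so it is an edge.

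The words $z_b$ and $z_{b'}$ agree exactly where $b$ and $b'$ agree. Hence the set of pairs $\{z_b,z_{\bar b}\}$ is an induced matching precisely when every $b'\notin\{b,\bar b\}$ satisfies $d_\nu\le \operatorname{dist}(b,b')\le\nu-d_\nu$. A Gilbert--Varshamov greedy argument, run on pairs $\{b,\bar b\}$, yields such a $B$ of size $2^{\nu(1-H_2(\rho))-o(\nu)}$. This gives matchings of size $\mu=K^{2-f_C(\rho)+o(1)}$.

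\textbf{Partition of all edges.} An edge may agree on a set $A$ with $0<|A|<d_\nu$. Such edges are handled by the same construction on the coordinates outside $A$, with the common values on $A$ fixed. The parameter $d_\nu$ (equivalently $\rho$) is shifted by $|A|$, and this only changes the $o(1)$ in the exponent.

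To turn these pieces into an exact partition with one common $\mu$, use the group structure of $\mathbb Z_C^\nu$. Index the matchings by a \emph{difference pattern}, meaning $y-x$ together with the agreement set $A$, and by a coset representative. Translating one base matching by the group then covers every edge with that pattern exactly once, and all translates have the same size.

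The number of matchings is (number of edges)$/\mu$, which is $K^{2}\cdot 2^{-\nu(1-H_2(\rho))+o(\nu)}=K^{f_C(\rho)+o(1)}$.

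\textbf{Main obstacle.} The hardest step is making the partition exact. Every edge must land in exactly one matching, the translated matchings must stay induced, and all matchings must keep a common size. The translates of one code-matching must not overlap and must tile all edges with a given difference pattern.

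I would first try to use the code $B$ in coordinates over $\mathbb Z_2$ inside $\mathbb Z_C$, together with translations by $\mathbb Z_C^\nu$. If exact tiling fails, the fallback is to cover with few extra matchings and then equalize sizes by splitting. That changes the counts only by $K^{o(1)}$, but it weakens "one common size" to "sizes within a constant factor", so I would rely on \cite[Theorem~2.20]{AMS2013} for the exact statement.
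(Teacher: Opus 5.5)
Your count of the missing edges is correct: $\rho\ge 2/(C-1)$ does make consecutive terms decrease by a factor at most $(1-\rho)/2$. The regularity argument and your criterion for a single family $\{z_b,z_{\bar b}\}_{b\in B}$ to be an induced matching are also correct.

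The gap is the step you flag yourself. You never obtain a \emph{partition} of $E$ into such matchings, and the translation scheme cannot give one. Inside one matching the difference $z_{\bar b}-z_b$ is $\pm(y_i-x_i)$ coordinatewise, with signs dictated by $b$, so a matching is not attached to one difference pattern. Translates of a base matching by $g\in\mathbb Z_C^\nu$ therefore cover only edges whose sign pattern relative to $y-x$ lies in $B$. Covering the rest would require $\{0,1\}^{\nu-|A|}$ to be tiled exactly by translates $t\oplus B$, which a Gilbert--Varshamov greedy set does not provide. For even $C$, as in the paper's $C=64$, coordinates with $y_i-x_i=C/2$ even make distinct translates share edges.

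The underlying obstruction is that the family generated from the edge $\{z_b,z_{\bar b}\}$ is $\{z_{b\oplus b'}:b'\in B\}$. This coincides with the original family only if $b\oplus B=B$. So ``the matching through an edge'' is independent of the starting edge only when $B$ is closed under $\oplus$, i.e.\ linear. Your fallback (``cover with few extra matchings'') comes with no count. The claim that shifting the parameter by $|A|$ only affects the $o(1)$ is also not right. $|A|$ ranges up to $d_\nu-1=\Theta(\nu)$, so a fresh code for the shifted parameters has size exponentially different from $2^{\nu(1-H_2(\rho))}$, which destroys the common size.

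The missing idea is the structure the paper imports from \cite[Lemmas~2.18--2.19]{AMS2013}. Take $B$ \emph{linear}, containing $\mathbf 1$, with minimum distance $>d_\nu$; the greedy argument works for linear codes, as in the proof of \cref{prop:ams-explicit}. Then:
\begin{itemize}
\item Every codeword other than $0$ and $\mathbf 1$ has weight strictly between $d_\nu$ and $\nu-d_\nu$, which is exactly your inducedness condition.
\item Flips by codewords compose: the $x$-flip followed by the $x'$-flip is the $(x\oplus x')$-flip. So $B/\{0,\mathbf 1\}$ acts freely on the edges whose endpoints differ in every coordinate, and its orbits are induced matchings of exactly $|B|/2=2^{k_\nu-1}$ edges partitioning them.
\item For edges agreeing on a set of $r<d_\nu$ coordinates, let the code punctured at $r$ positions act on the remaining coordinates. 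It keeps the dimension and $\mathbf 1$ and has distance $>d_\nu-r$, so again the orbits are induced matchings of exactly $2^{k_\nu-1}$ edges.
\item Flips preserve the agreement set, so all orbits together partition $E$. This, not a new code for shifted parameters, is what gives one common size.
\end{itemize}
Then $m_\nu=|E|/2^{k_\nu-1}\le K^2\,2^{-k_\nu}=K^{f_C(\rho)+o(1)}$, and no translations by $\mathbb Z_C^\nu$ are needed. The paper's proof cites these facts: Theorem~2.20 for the two bounds and Lemma~2.19 for classes of exactly $2^{k_\nu-1}$ edges. Without them your argument establishes neither the partition nor the common size $\mu$.
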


\begin{proof}
The two bounds are~\cite[Theorem~2.20]{AMS2013}. In the construction the
edges are grouped into equivalence classes~\cite[Lemma~2.18]{AMS2013}, and
every class is an induced matching with exactly $2^{k_\nu-1}$ edges, where
$k_\nu=(1-H_2(\rho)+o(1))\nu$~\cite[Lemma~2.19]{AMS2013}. Hence
$2^{k_\nu-1} = K^{(1-H_2(\rho))/\log_2C+o(1)} = K^{2-f_C(\rho)+o(1)}$.
The classes partition the edges, so this is an equal-size partition.
Adjacency depends only on Hamming distance and coordinatewise alphabet
permutations act transitively, so the graph is regular.
\end{proof}

\paragraph{Balancing the exponents.} For $C=64$ the two exponents move in
opposite directions on the whole admissible range $2/63\le\rho<1/2$: since
$H_2'(\rho)=\log_2((1-\rho)/\rho)$,
\[
  e_{64}'(\rho)=\frac{\log_2((1-\rho)/\rho)-\log_2 63}{6}<0,
  \qquad
  f_{64}'(\rho)=\frac{\log_2((1-\rho)/\rho)}{6}>0
\]
(the first because $\rho>1/64$, the second because $\rho<1/2$). So the
maximum is minimized where they cross. In general $e_C(\rho)=f_C(\rho)$ is
equivalent to $(1-\rho)\log_2(C-1)=\log_2C-1$.

\begin{corollary}[Balanced parameters]
\label{cor:ams-balanced}
Let $C=64$ and $\rho_0=1-5/\log_2 63\approx0.1634991$. Then
$2/63<\rho_0<1/2$ and
\[
  e_{64}(\rho_0)=f_{64}(\rho_0)=:\lambda_0\approx1.9404359 .
\]
Consequently, for $\eta=0.029$ and all large enough $\nu$, the regular graph
on $K=64^\nu$ vertices has an equal-size induced-matching partition with
$m\le K^{2-2\eta}$ and $M_{\mathrm{miss}}\le K^{2-2\eta}$.
\end{corollary}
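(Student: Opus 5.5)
The plan is to check the numbers for $\rho_0$, evaluate the common exponent, and then apply \cref{thm:ams} with $C=64$ and $\rho=\rho_0$, letting the $o(1)$ terms be absorbed.

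\emph{The range of $\rho_0$.} Write $\ell=\log_2 63$; then $\ell\approx 5.97728$, since $63=64\cdot(63/64)$ gives $\ell=6+\log_2(1-1/64)\approx 6-0.02272$. Hence $\rho_0=1-5/\ell\approx 0.16350$. This lies strictly between $2/63\approx0.0317$ and $1/2$; the upper bound is equivalent to $\ell<10$.

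\emph{Equality of the exponents.} From the general criterion stated before the corollary, $e_C(\rho)=f_C(\rho)$ holds exactly when $(1-\rho)\log_2(C-1)=\log_2C-1$. For $C=64$ the right-hand side is $5$, and our choice gives $(1-\rho_0)\ell=5$. So $e_{64}(\rho_0)=f_{64}(\rho_0)$.

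\emph{The value $\lambda_0$.} I compute it as $\lambda_0=f_{64}(\rho_0)=2-(1-H_2(\rho_0))/6$. With $\rho_0\approx0.16350$:
\begin{itemize}
  \item $-\rho_0\log_2\rho_0\approx 0.16350\cdot 2.6127\approx 0.42718$;
  \item $-(1-\rho_0)\log_2(1-\rho_0)\approx 0.83650\cdot0.25765\approx0.21553$.
\end{itemize}
So $H_2(\rho_0)\approx0.64262$, hence $(1-H_2)/6\approx0.059564$ and $\lambda_0\approx1.94044$. Only a strict inequality is needed: $\lambda_0<2-2\eta=1.942$. The slack is about $0.0016$, so a crude interval estimate of $H_2(\rho_0)$ to within $\pm 0.005$ suffices. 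I expect this numerical certification to be the only delicate point. I would make it rigorous by bounding $\rho_0$ in an interval such as $[0.1634,0.1636]$ and using monotonicity of $H_2$ on $(0,1/2)$, so that $H_2(\rho_0)\le H_2(0.1636)<0.6430$.

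\emph{Applying \cref{thm:ams}.} With $C=64$ and $\rho=\rho_0$, \cref{thm:ams} gives, for every $\nu$, a regular graph on $K=64^\nu$ vertices with an equal-size induced-matching partition. Its parameters satisfy $M_{\mathrm{miss}}\le K^{\lambda_0+o(1)}$ and $m\le K^{\lambda_0+o(1)}$. Since $\lambda_0<2-2\eta$ strictly, the $o(1)$ term falls below the gap $2-2\eta-\lambda_0>0$ for all large $\nu$. Both quantities are then at most $K^{2-2\eta}$, which proves the corollary.
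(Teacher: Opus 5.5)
Your proposal is correct and follows the paper's own argument. You check $2/63<\rho_0<1/2$, get $e_{64}(\rho_0)=f_{64}(\rho_0)$ from the crossing condition $(1-\rho_0)\log_2 63=5$, evaluate $\lambda_0\approx1.94044<1.942$, and absorb the $o(1)$ terms of \cref{thm:ams} into the gap $2-2\eta-\lambda_0>0$.

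There is one harmless arithmetic slip: $-\log_2(1-\rho_0)\approx0.25756$, not $0.25765$, so the second entropy term is about $0.21545$. As a result your two listed terms do not add up to your stated $H_2(\rho_0)\approx0.64262$, but that stated value is itself correct, and all of these numbers lie well inside the slack, since one only needs $H_2(\rho_0)<0.652$.
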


\begin{proof}
The crossing condition with $\log_2 64=6$ gives $\rho_0$; substituting gives
$\lambda_0$. Since $\lambda_0<2-2\eta=1.942$, the $o(1)$ terms of
\cref{thm:ams} are eventually below the gap. $K=64^\nu$ is a power of two.
\end{proof}

Any $0<\eta<(2-\lambda_0)/2\approx0.0297821$ works; we take $\eta=0.029$.
This is a convenient constant for these estimates, not a claim of
optimality over all constructions. On these graphs the common
matching size is $\mu=K^{2-\lambda_0+o(1)}$ (\cref{thm:ams} at $\rho_0$),
so our XOR strategies have arity $2\mu=K^{2-\lambda_0+o(1)}$, with
$2-\lambda_0\approx0.0596$.

\paragraph{Deterministic construction.} The graph and its matchings must be
known to define the XOR state. AMS call a binary linear code \emph{proper}
if it contains the all-ones vector~\cite[Definition~2.8]{AMS2013}.

\begin{proposition}[Explicitness]
\label{prop:ams-explicit}
For the parameters of \cref{cor:ams-balanced}, the graph $G_\nu$ and its
equal-size induced-matching partition can be listed deterministically in
time polynomial in $K=64^\nu$.
\end{proposition}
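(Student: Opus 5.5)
The plan is to make the AMS code-based construction explicit step by step. Each step is a finite object of size polynomial in $K=64^\nu$, and each is listed by brute force or by linear algebra over $\mathbb F_2$. Concretely, I would go back to the proofs of \cite[Lemmas~2.18--2.19 and Theorem~2.20]{AMS2013} and isolate the only non-constructive ingredient. As I recall, it is the choice of a proper binary linear code, of length $\nu$ and dimension $k_\nu=(1-H_2(\rho_0)+o(1))\nu$, with suitable distance properties. The inputs $\rho_0$ and $d_\nu$ are fixed numbers, and $d_\nu$ can be taken as $\lfloor\rho_0\nu\rfloor$ or whatever integer sequence AMS use.

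The key step is to replace an existential (random) choice of this code by a deterministic one. The code lives in $\{0,1\}^\nu$, and $2^\nu=K^{1/6}$. I would therefore find a code with the required property greedily, in the Gilbert--Varshamov manner. Start with the all-ones vector, so the code is proper. Then repeatedly add a generator that keeps every nonzero codeword's weight inside the forbidden-window constraints AMS need. At each step, check all $2^{k}\le 2^\nu$ codewords against all $2^\nu$ candidate vectors. The total time is $2^{O(\nu)}=K^{O(1)}$. The counting argument that AMS use for existence, a union bound over bad codewords, is exactly what shows the greedy step never gets stuck. If their argument is a random-code argument, I would instead derandomize by the method of conditional expectations over the $k_\nu\nu$ generator bits. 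Each conditional expectation is a sum over at most $2^{2\nu}$ terms, so this also runs in time polynomial in $K$.

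Given the code, I would list the graph directly. There are $K$ words in $[64]^\nu$, and adjacency is a Hamming-distance test, so this takes $O(K^2\nu)$ time. Next I would compute the equivalence classes of \cite[Lemma~2.18]{AMS2013} by iterating over edges. Each class is determined by the edge together with a coset of the code, so each edge's class representative can be computed in $\mathrm{poly}(\nu)$ time. Finally I would verify the claimed properties by brute force, in polynomial time: the classes are induced matchings, each class has $2^{k_\nu-1}$ edges, and the counts satisfy $m\le K^{2-2\eta}$ and $M_{\mathrm{miss}}\le K^{2-2\eta}$. Only existence needed \cref{cor:ams-balanced}; the listing itself is checked directly.

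The main obstacle is the first step: confirming that the code-selection condition in AMS is a local, efficiently checkable property whose failure probability is bounded by a union bound. Only then do greedy search or conditional expectations apply. If the condition were global, for instance a weight-distribution requirement, I would fall back to exhaustive search over generator matrices restricted to a structured family, such as a cyclic code or one taken from a small-bias set. Such a family has $2^{O(\nu)}$ members, so the search is still polynomial in $K$. This fallback works only if AMS's condition is satisfied by some member of the family, which I would have to check.
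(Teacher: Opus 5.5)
Your plan matches the paper's proof: a greedy Varshamov-style construction of a proper code, starting from $\operatorname{span}\{\mathbf 1\}$ and running in time $2^{O(\nu)}=K^{O(1)}$, followed by listing the graph and obtaining each class by applying all codewords as flips to an unassigned edge. The step you flag as the main obstacle goes through as you hope, so neither conditional expectations nor the structured-family fallback is needed. The AMS condition is just minimum distance above $d_\nu$ for a code containing $\mathbf 1$, which is your weight window. A vector at distance more than $d_\nu$ from all $2^j$ codewords exists because $2^{k_\nu-1}V_\nu\le 2^{\nu-1}$, where $V_\nu=\sum_{r\le d_\nu}\binom{\nu}{r}$ and $k_\nu=\nu-\lceil\log_2 V_\nu\rceil$. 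The shorter codes needed for edges agreeing in $r$ coordinates come from puncturing this one code.
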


\begin{proof}
Compute a rational $\rho_\nu$ with $|\rho_\nu-\rho_0|\le\nu^{-2}$ and set
$d_\nu=\lceil\nu\rho_\nu\rceil$; then $d_\nu/\nu\to\rho_0$ and
$2/63\le d_\nu/\nu<1/2$ for large $\nu$. Put
$V_\nu=\sum_{r\le d_\nu}\binom{\nu}{r}$ and
$k_\nu=\nu-\lceil\log_2 V_\nu\rceil\ge1$.

\emph{The code.} Start with $\mathcal C=\operatorname{span}\{\mathbf 1\}$.
While $\dim\mathcal C=j<k_\nu$, adjoin the lexicographically first vector
at Hamming distance more than $d_\nu$ from every codeword. It exists, since
the forbidden vectors lie in $2^j$ Hamming balls of total size at most
$2^{k_\nu-1}V_\nu\le2^{\nu-1}<2^\nu$. The enlarged code is the old one plus
one coset, so its minimum distance stays above $d_\nu$, and it still
contains $\mathbf 1$. By the standard estimate
$\log_2V_\nu=H_2(\rho_0)\nu+o(\nu)$, so $k_\nu=(1-H_2(\rho_0)+o(1))\nu$.
This takes $2^{O(\nu)}$ time. For $0\le r<d_\nu$, puncturing the last $r$
coordinates gives proper codes $\mathcal C_{\nu-r}$ of the same dimension
and distance above $d_\nu-r$~\cite[Claim~2.10]{AMS2013}.

\emph{The graph and the matchings.} Enumerate the graph on $[64]^\nu$. Take
the lexicographically first edge $\{a,b\}$ not yet assigned; if $a,b$ agree
in $r$ coordinates, apply every codeword $x\in\mathcal C_{\nu-r}$ as an AMS
coordinate flip on the unequal coordinates (the $x$-flip
of~\cite[Definition~2.13]{AMS2013}: keep the $r$ coordinates where $a$ and
$b$ agree, and on the other $\nu-r$ swap the letters of $a$ and $b$ exactly
where $x$ has a one). This lists its whole class,
which has exactly $2^{k_\nu-1}$ edges and is an induced
matching~\cite[Lemma~2.19]{AMS2013}. Mark it and repeat. The total time is
$K^2\,2^{k_\nu}\operatorname{poly}(\nu)=K^{O(1)}$.
\end{proof}

The connectors are also listed in time polynomial in $K$, so the DAG $\Gk$
and the XOR state are fully explicit.

% ======================================================================
\section{The function behind the repeated game}
\label{app:dmhf}

This appendix writes out the function behind the repeated game of
\cref{sec:repeated}: where its challenges come from, why they are uniform,
and what $n$ in \cref{cor:repeated-n} counts. In a data-dependent
memory-hard function the challenges come from the labels themselves: a line
of new vertices is appended to the static DAG, and in each phase one vertex
of the line selects an output of the DAG, which the next vertex reads
(\cref{fig:line}) ---
as in scrypt~\cite{Scrypt2017}.

\begin{definition}[DAG-based data-dependent memory-hard function]
\label{def:dmhf}
A \emph{DAG-based dMHF} is given by a static DAG $\mathcal D$, a challenge
set $\mathcal Q\subseteq V(\mathcal D)$, a number $t\ge1$ of phases and a
selection map $\sigma:\{0,1\}^w\to\mathcal Q$. Its \emph{full dependency
graph} on input $x$ is $\mathcal D$ together with a \emph{line} of $2t+1$
new vertices $\ell_0,g_1,\ell_1,\dots,g_t,\ell_t$, where $\ell_0$ is a
source. In phase $i$ the \emph{seed} $g_i$ has the single parent
$\ell_{i-1}$ and selects the challenge $c_i=\sigma(X_{g_i})$; the
\emph{response} $\ell_i$ has exactly the two parents $g_i$ and $c_i$. The
output is $X_{\ell_t}$.
\end{definition}

The name follows the literature and describes the shape of the
construction; we prove no memory-hardness for evaluating it
(\cref{rem:scope}).

On $(\Gk,\Qk)$ we take $\sigma=\chi$, defined as follows. Write $K=2^k$ (so
$k=6\nu$ for the graphs of \cref{prop:ams}) and assume $w\ge2k$. The map
$\chi:\{0,1\}^w\to\Qk$ sends a string to $y_{v,s}$, where its first $k$
bits are the binary expansion of $v-1$ and the next $k$ bits that of
$s-1$. All labels are computed by \eqref{eq:label}; on the line this
reads, for $i=1,\dots,t$,
\begin{equation}
\label{eq:line}
\begin{aligned}
  X_{\ell_0} &= \mathsf H\bigl(x,\mathsf{id}(\ell_0)\bigr), &
  X_{g_i} &= \mathsf H\bigl(x,\mathsf{id}(g_i),X_{\ell_{i-1}}\bigr),\\
  c_i &= \chi(X_{g_i}) = y_{v_i,s_i}, &
  X_{\ell_i} &= \mathsf H\bigl(x,\mathsf{id}(\ell_i),X_{g_i},X_{c_i}\bigr),
\end{aligned}
\end{equation}
and the output is $X_{\ell_t}$. Only the edges $c_i\to\ell_i$ depend on
the input $x$. The pair $(v_i,s_i)$ must be uniform on $[K]^2$, not merely
each coordinate: if $s_i$ were a function of $v_i$ (even a bijection, so
that both coordinates are uniform), a pure player could cache $K$ outputs and answer every challenge for about $K$ ---
exactly what the $K$ outputs per column were introduced to prevent
(\cref{sec:dag-def}). $\chi$ achieves this because the $2k$ bits it reads
are independent.

\begin{intuition}[A hash value is as good as a die.]
The game rolls a die for $c_i$; the function computes $c_i$ from
$X_{g_i}$. The two have the same distribution because $X_{g_i}$ is a fresh
random-oracle answer (distinct identifiers make its query new). For $i>1$
that query contains $X_{\ell_{i-1}}$, and $X_{\ell_{i-1}}$ needs the answer
to $c_{i-1}$; so in the computation $c_i$ comes only after $c_{i-1}$ is
answered, and then it is as unpredictable as a fresh roll. What the game idealizes is only the moment when $c_i$ becomes known,
and the cost of the line (\cref{rem:scope}). The line makes the challenges
unpredictable; the memory sits in $\Gk$.
\end{intuition}

\begin{proposition}[The function on $\Gk$]
\label{prop:full-size}
For every $t\ge1$ and every input $x$:
\begin{enumerate}[label=(\alph*), itemsep=2pt]
  \item the full dependency graph is acyclic, has maximum indegree two, and
        has exactly $n=N+2t+1=2(2\log_2K+1)K^2+2t+1$ vertices;
  \item if $\mathsf H$ is a random oracle and $w\ge2\log_2K$, the
        challenges $c_1,\dots,c_t$ are independent and uniform on $\Qk$,
        and independent of the labels of $\Gk$.
\end{enumerate}
\end{proposition}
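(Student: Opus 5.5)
For (a) I would argue structurally, from \cref{prop:size} and \cref{def:dmhf}. The static part $\Gk$ is acyclic with indegree at most two and has $N=2(2\log_2K+1)K^2$ vertices. The line adds $2t+1$ new vertices, which gives the count $n=N+2t+1$. Indegrees on the line are read off directly: $\ell_0$ has none, $g_i$ has the single parent $\ell_{i-1}$, and $\ell_i$ has the two parents $g_i$ and $c_i$. The line adds no edges into $\Gk$, so indegrees inside $\Gk$ are unchanged and the maximum is two.

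For acyclicity I would use a topological order: first all of $\Gk$ in its own order, then $\ell_0,g_1,\ell_1,\dots,g_t,\ell_t$. Every new edge goes either from a vertex of $\Gk$ to a line vertex (the edges $c_i\to\ell_i$) or forward along the line. No edge leaves the line into $\Gk$, so any cycle would have to lie inside $\Gk$ or inside the line, and neither contains one. This holds for every input $x$, even though the edges $c_i\to\ell_i$ depend on $x$.

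For (b) I would use lazy sampling of the random oracle. The identifiers are distinct, so the query defining $X_{g_i}$ is $(x,\mathsf{id}(g_i),X_{\ell_{i-1}})$. Its second coordinate differs from that of every query defining a label of $\Gk$, every $\ell_j$, and every $g_{i'}$ with $i'\ne i$. Hence $X_{g_1},\dots,X_{g_t}$ are answers to pairwise distinct oracle points, all distinct from the points defining the labels of $\Gk$. This remains true whatever values $X_{\ell_{i-1}}$ take, because distinctness is witnessed by the identifier alone and not by the label values.

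To turn this into independence I would condition sequentially. Fix $x$, and reveal the labels of $\Gk$ together with $X_{g_1},X_{\ell_1},\dots,X_{g_{i-1}},X_{\ell_{i-1}}$. The point queried for $X_{g_i}$ is then determined by these values, and it has not yet been assigned an answer. So $X_{g_i}$ is uniform on $\{0,1\}^w$, independent of everything revealed so far.

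Applying $\chi$, whose input has $w\ge2k$ bits, the first $2k$ bits of $X_{g_i}$ are uniform and independent of one another. They encode $(v_i-1,s_i-1)$ bijectively onto $[K]^2$, so $c_i$ is uniform on $\Qk$ given the past. Chaining these conditional statements over $i$ shows that $c_1,\dots,c_t$ are i.i.d.\ uniform and jointly independent of the labels of $\Gk$.

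The main obstacle is the adaptive dependence of the query point for $X_{g_i}$ on $X_{\ell_{i-1}}$, which itself depends on earlier challenges. I would handle it exactly as above, by noting that freshness of the point comes from $\mathsf{id}(g_i)$ alone. Then, conditioned on the entire revealed history, the answer is a uniform value that has not been sampled yet.
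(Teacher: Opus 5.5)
Your proposal is correct and follows the paper's proof. Part (a) is the same count together with the same observation that every line edge either enters the line from $\Gk$ or goes forward along it. Part (b) is the same freshness argument through the unique identifier $\mathsf{id}(g_i)$, which relies on the unambiguous tuple encoding of \cref{sec:labels}: the paper conditions at once on all oracle values at queries with identifier other than $\mathsf{id}(g_i)$, whereas you reveal the labels one after another by lazy sampling, and both yield that $c_i$ is uniform given the labels of $\Gk$ and $c_1,\dots,c_{i-1}$.
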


\begin{proof}
(a) $\Gk$ has $N$ vertices and indegree at most two (\cref{prop:size}), and
no vertex of $\Gk$ gets a parent on the line. The line adds $2t+1$
vertices: the source $\ell_0$, and for each $i$ the seed $g_i$ (one
parent, $\ell_{i-1}$) and the response $\ell_i$ (two parents, $g_i$ and
$c_i$). Every line edge goes from $\Gk$ into the line or forward along it,
so there is no cycle.

(b) Fix $i$ and condition on the values of $\mathsf H$ at all queries whose
identifier is not $\mathsf{id}(g_i)$. They determine the labels of $\Gk$
and $X_{\ell_0},X_{g_1},\dots,X_{\ell_{i-1}}$, hence $c_1,\dots,c_{i-1}$
and the query that defines $X_{g_i}$. Since tuples are encoded
unambiguously and identifiers are distinct (\cref{sec:labels}), that query
is not among the fixed ones, so its answer is uniform on $\{0,1\}^w$ and
independent of the conditioning. As $w\ge2k$, $c_i=\chi(X_{g_i})$ is
uniform on $\Qk$ given the labels of $\Gk$ and $c_1,\dots,c_{i-1}$.
\end{proof}

\begin{figure}[htbp]
\centering
\begin{tikzpicture}[x=1cm,y=1cm,font=\small,
    lv/.style={draw, fill=white, rounded corners=1.5pt, inner sep=2.5pt, minimum height=0.5cm},
    arr/.style={-{Latex[length=1.6mm,width=1.1mm]}, line width=0.6pt}]
  \node[gray] at (-1.1,0) {$\cdots$};
  \node[lv] (lp) at (0,0)   {$\ell_{i-1}$};
  \node[lv] (g)  at (2.4,0) {$g_i$};
  \node[lv] (l)  at (5.4,0) {$\ell_i$};
  \node[lv] (gn) at (7.6,0) {$g_{i+1}$};
  \node[gray] at (8.7,0) {$\cdots$};
  \node[lv, fill=ask!12, draw=ask!85!black] (y) at (5.4,1.6) {$c_i=y_{v_i,s_i}$};
  \draw[arr] (lp) -- (g);
  \draw[arr] (g) -- (l);
  \draw[arr] (l) -- (gn);
  \draw[arr, ask!80!black] (y) -- (l);
  \draw[arr, dashed, gray] (g.north) to[out=90, in=180]
    node[pos=0.35, left=3pt, font=\scriptsize] {$\chi(X_{g_i})$} (y.west);
  \node[anchor=west, gray, font=\scriptsize] at (6.55,1.6) {an output of $C_{v_i}$ in $\Gk$};
\end{tikzpicture}

\caption[One phase of the line]{One phase of the line. Solid
arrows are edges of the full dependency graph. The dashed arrow is
\emph{not} an edge: the label of $g_i$ only selects which output becomes the
second parent of $\ell_i$.}
\label{fig:line}
\end{figure}

\begin{remark}[What the game measures]
\label{rem:scope}
\Cref{thm:repeated,cor:repeated-n} are statements about the repeated game
on $(\Gk,\Qk)$; $n$ is the size of the full dependency graph of the
function behind it. The game differs from evaluating that function in two
ways. It \emph{draws} each challenge at the start of its phase, while the
function \emph{computes} it from $X_{g_i}$; by \cref{prop:full-size}(b)
the distribution is the same, but the game also fixes the moment at which
the challenge becomes known. And it charges only configurations of $\Gk$:
the $2t+1$ line vertices count in $n$, but they are neither pebbled nor
charged. We do not claim these bounds for pebbling the full dependency
graph with each challenge revealed only when its seed is computed, and, as
everywhere in this paper, we claim nothing for general algorithms in the
parallel random oracle model.
\end{remark}

\newpage

\section*{Acknowledgments}
I am grateful to Krzysztof Pietrzak for his mentorship throughout this
project, for suggesting the problem, and for insightful discussions
that helped shape the single-challenge perspective and improve the
organization and presentation of this paper.

I used ChatGPT and Claude to assist with drafting and editing the
exposition, LaTeX formatting, and consistency checks. I reviewed the
resulting text and take full responsibility for the mathematical
claims, proofs, and final presentation.
% ======================================================================


\begin{thebibliography}{99}

\bibitem{AACKPR2017}
H.~Abusalah, J.~Alwen, B.~Cohen, D.~Khilko, K.~Pietrzak, and L.~Reyzin.
Beyond Hellman's time-memory trade-offs with applications to proofs of
space. In \emph{Advances in Cryptology--ASIACRYPT 2017}, 2017.

\bibitem{AlwenSerbinenko2015}
J.~Alwen and V.~Serbinenko. High parallel complexity graphs and memory-hard
functions. In \emph{Proceedings of STOC 2015}, pages 595--603, 2015.

\bibitem{AlwenBlocki2016}
J.~Alwen and J.~Blocki. Efficiently computing data-independent memory-hard
functions. In \emph{Advances in Cryptology--CRYPTO 2016}, Part~II, pages
241--271, 2016.

\bibitem{ABP2017}
J.~Alwen, J.~Blocki, and K.~Pietrzak. Depth-robust graphs and their
cumulative memory complexity. In \emph{Advances in Cryptology--EUROCRYPT
2017}, pages 3--32, 2017.

\bibitem{AlwenEtAl2016}
J.~Alwen, B.~Chen, C.~Kamath, V.~Kolmogorov, K.~Pietrzak, and S.~Tessaro.
On the complexity of scrypt and proofs of space in the parallel random
oracle model. In \emph{Advances in Cryptology--EUROCRYPT 2016}, pages
358--387, 2016.

\bibitem{Scrypt2017}
J.~Alwen, B.~Chen, K.~Pietrzak, L.~Reyzin, and S.~Tessaro. Scrypt is
maximally memory-hard. In \emph{Advances in Cryptology--EUROCRYPT 2017},
pages 33--62, 2017.

\bibitem{AMS2013}
N.~Alon, A.~Moitra, and B.~Sudakov. Nearly complete graphs decomposable into
large induced matchings and their applications. \emph{Journal of the
European Mathematical Society}, 15(5):1575--1596, 2013.

\bibitem{Benes1964}
V.~E.~Bene\v{s}. Optimal rearrangeable multistage connecting networks.
\emph{Bell System Technical Journal}, 43(4):1641--1656, 1964.

\bibitem{BlockiCinkoske2021}
J.~Blocki and M.~Cinkoske. A new connection between node and edge depth
robust graphs. In \emph{12th Innovations in Theoretical Computer Science
Conference (ITCS 2021)}, LIPIcs 185, Article 64, 2021.

\bibitem{BlockiHolman2022}
J.~Blocki and B.~Holman. Sustained space and cumulative complexity
trade-offs for data-dependent memory-hard functions. In \emph{Advances in
Cryptology--CRYPTO 2022}, pages 222--251, 2022.

\bibitem{BlockiHolman2025}
J.~Blocki and B.~Holman. Data-dependent memory-hard functions: Sustained
space and cumulative complexity trade-offs in the parallel random oracle
model. In \emph{Advances in Cryptology--CRYPTO 2026}, LNCS 16805, pages
35--64, 2026.

\bibitem{DER2026}
S.~F. de~Rezende, D.~Engstr\"om, and L.~Reyzin. Separating the pebbling
model from the random oracle model. Cryptology ePrint Archive, Report
2026/1024, 2026.

\bibitem{DFZ2005}
R.~Dougherty, C.~Freiling, and K.~Zeger. Insufficiency of linear coding in
network information flow. \emph{IEEE Transactions on Information Theory},
51(8):2745--2759, 2005.

\bibitem{DFKP2015}
S.~Dziembowski, S.~Faust, V.~Kolmogorov, and K.~Pietrzak. Proofs of space.
In \emph{Advances in Cryptology--CRYPTO 2015}, pages 585--605, 2015.

\bibitem{Pietrzak2018}
K.~Pietrzak. Proofs of catalytic space. In \emph{10th Innovations in
Theoretical Computer Science Conference (ITCS 2019)}, 2019. Cryptology ePrint Archive, Report 2018/194.

\end{thebibliography}
\end{document}